\numberwithin{equation}{section}
\newtheorem{theorem}{Theorem}[section]
\newtheorem{corollary}[theorem]{Corollary}
\newtheorem{lemma}[theorem]{Lemma}
\newtheorem{proposition}[theorem]{Proposition}
\newtheorem{definition}{Definition}[section]
\newtheorem{remark}[definition]{Remark}
\newcommand{\Tr}{\text{Tr}}
\newcommand{\id}{\text{id}}
\newcommand{\Input}{\State \textbf{Input:} }
\newcommand{\Output}{\State \textbf{Output:} }
\newcommand{\bo}[1]{\boldsymbol{#1}}
\newcommand{\mc}[1]{\mathcal{#1}} % Utiliser \mathscr au lieu de \mc
\newcommand{\tsig}{\boldsymbol{\tilde{\Sigma}}} 
\newcommand{\doublenorm}[1]{\left\lvert\!\left\lvert #1 \right\rvert\!\right\rvert}
\definecolor{softred}{RGB}{200,50,50} 
\definecolor{softgreen}{RGB}{50,168,82} 
\definecolor{softblue}{RGB}{40, 110, 200}
\definecolor{darkorange}{RGB}{200,100,0}
\newcommand{\red}[1]{\color{softred}}
\newcommand{\green}[1]{\color{softgreen}}
\newcommand{\blue}[1]{\color{softblue}}
\newcommand{\orange}[1]{\color{darkorange}}
\tikzstyle{startstop} = [rectangle, rounded corners, minimum width=3cm, minimum height=1cm,text centered, draw=black, fill=red!30]
\tikzstyle{process} = [rectangle, minimum width=3cm, minimum height=1cm, text centered, draw=black, fill=orange!30]
\tikzstyle{decision} = [diamond, minimum width=3cm, minimum height=1cm, aspect=2, text centered, draw=black, fill=green!30]
\tikzstyle{arrow} = [thick,->,>=stealth]
\title{Complex discontinuities of the square root of Fredholm determinants in the Volterra Stein-Stein model}
\author[1]{Eduardo Abi Jaber\footnote{eduardo.abi-jaber@polytechnique.edu}}
\author[1,2,3]{Maxime Guellil\footnote{maxime.guellil@polytechnique.edu. I am grateful for the financial support provided by Crédit Agricole CIB. \\
This work is supported by the Chair ``Capital Markets Tomorrow: Modeling and Computational Issues'', a joint initiative of LPSM/Université Paris-Cité and Crédit Agricole CIB under the aegis of Institut Europlace de Finance.\\ Both authors would like to thank Stéphane Crépey for fruitful discussions.}}
\affil[1]{École Polytechnique, CMAP}
\affil[2]{Université Paris Cité, LPSM, CNRS UMR 8001}
\affil[3]{Crédit Agricole CIB}
\date{}
\date{}
\begin{document}
\maketitle

\begin{abstract}
    Fourier-based methods are central to option pricing and hedging when the Fourier–Laplace transform of the log-price and integrated variance is available semi-explicitly. This is the case for the Volterra Stein–Stein stochastic volatility model, where the characteristic function is known analytically. However, naive evaluation of this formula can produce discontinuities due to the complex square root of a Fredholm determinant, particularly when the determinant crosses the negative real axis, leading to severe numerical instabilities. We analyze this phenomenon by characterizing the determinant’s crossing behavior for the joint Fourier–Laplace transform of integrated variance and log-price. We then derive an expression for the transform to account for such crossings and develop efficient algorithms to detect and handle them. Applied to Fourier-based pricing in the rough Stein–Stein model, our approach significantly improves accuracy while drastically reducing computational cost relative to existing methods.
\end{abstract}

\textbf{Keywords:}  Volterra Stein-Stein model, Fredholm determinant, Fourier–Laplace transform, Complex discontinuities, Fourier pricing. 

\textbf{Mathematics Subject Classification (2020):} 91G20, 	45P05

\section*{Introduction}

In the field of mathematical finance, considerable effort has been devoted by both academics and practitioners to deriving explicit formulas for option prices and hedging strategies in stochastic volatility models, including those proposed by \citet{stein1991stock}, \citet{heston1993closed}, and \citet{schobel1999stochastic}. Beyond their theoretical appeal, such expressions offer major practical advantages: they reduce computational time compared with Monte Carlo simulations, improve numerical precision in pricing and hedging, and provide deeper insights into the model parameters and their sensitivities.
Among the most widely used approaches, Fourier-based pricing techniques, such as those developed by \cite{carr1999option, Lewis2002, lipton2001mathematical, fang2009novel, eberlein2010analysis}, stand out as a powerful framework for deriving analytical and numerically tractable valuation formulas.
They exploit analytical expressions for the Fourier–Laplace transform of the log-price to efficiently compute option prices through Fourier inversion, making them particularly attractive for both industry and research applications.
These methods are especially valuable for model calibration, where thousands of derivative prices across multiple maturities and strikes must be evaluated simultaneously in real time, requiring algorithms that combine speed, stability, and precision, see \cite{andersen2000jump}.

However, in several stochastic volatility models, the analytical formulas for the Fourier–Laplace transform are known to exhibit discontinuities due to multi-valued operations such as complex square roots or logarithms, leading to severe numerical instabilities and inaccuracies.
These complex discontinuities often arise when transform formulas, originally derived for real-valued parameters, are improperly extended to the complex plane, typically in the context of fast Fourier-based pricing schemes.
A notable example is found in the original article on the \cite{heston1993closed} model, where an incorrect formula for the characteristic function of the log-price was provided and later identified by \cite{schobel1999stochastic}, and subsequently analyzed and corrected by \cite*{albrecher2007little, gatheral2011volatility, Kahl05, lord2006rotation}. To overcome this issue, \cite{Kahl05} proposed a method, known as the \textit{Rotation count algorithm} that consists in tracking discontinuities of a complex-valued function by counting the number of times it crosses the negative real axis. Although the benefits of this algorithm have been numerically demonstrated, for instance in the Heston model, Lord and Kahl only proved that in certain cases, applying the algorithm to calculate the characteristic function would result in a continuous function, see \citet[Part 4, Lemma 4 and Theorem 4]{Kahl08}. Moreover, they did not prove that the function obtained using the algorithm was indeed the desired characteristic function.

Similar problems have surfaced in other models with explicit expressions for the characteristic functions, including the stochastic volatility models of \cite{stein1991stock}  and \cite{schobel1999stochastic}, and even in higher-dimensional cases, such as finite-dimensional Wishart distributions.  \cite{mayerhofer2019reforming} recently highlighted this issue, particularly where square roots of matrix determinants appear in the formulas. The principal branch of the square root introduces discontinuities whenever the determinant crosses the negative real line $\mathbb{R}_-$. \cite{mayerhofer2019reforming} showed that such crossings might occur if the rank of the underlying matrix is greater than or equal to three. 

More recently, in the context of non-Markovian stochastic volatility models, a formula for the Fourier–Laplace transform of the Volterra Stein-Stein model has been derived by \citet[Theorem 3.3]{abi2022characteristic}. The model for the stock price $S$ with a stochastic volatility $|X|$ is defined as follows:  
\begin{equation}\label{eq:SSmodel}
    \begin{aligned}
    dS_t &= S_t X_t  dB_t, \quad S_0 > 0, \\
    X_t &= g_0(t) + \int_0^t K(t,s) \kappa X_s \, ds + \int_0^t K(t,s) \nu  dW_s,
    \end{aligned}
\end{equation}
with $ \rho \in [-1, 1]$, $\kappa, \nu \in \mathbb{R}$, $ g_0 $ a suitable deterministic input curve,  $K : [0, T]^2 \to \mathbb{R}$ a measurable, square integrable Volterra kernel, and $B = \rho W + \sqrt{1 - \rho^2} W^\perp $ where $(W, W^\perp)$ is a two-dimensional Brownian motion. Such formulas have been recently extended to include stochastic interest rates in \cite*{jaber2025volterrasteinsteinmodelstochastic}, where it was shown that the choice of kernels allows the model to reproduce key empirical features, such as the humped shaped ATM volatility term structure for cap options and the concave ATM skew term structure (in log-log scale) of S\&P 500 options.

In the real domain, \citet[Remark 3.4]{abi2022characteristic} shows that the formula for the Fourier–Laplace transform can be simplified to computing the square root of a \cite{Fredholm1903} determinant, which generalizes the notion of a matrix determinant to  operators in infinite-dimensional spaces. Similar expressions involving Fredholm determinants have appeared in bond pricing formulas derived in models where the short rate is given by a square of a Gaussian process; see, for example, \citet*{abi2022laplace, corcuera2013short, grasselli2005wiener,privault2015fredholm}. This simplification in terms of the Fredholm determinant is particularly appealing for numerical purposes, as it reduces computational complexity and improves precision. However, in the complex plane, this simplification may no longer hold, as the behavior of the Fredholm determinant, particularly whether it crosses the negative real axis, remains unclear. As a result, one must rely on a more intricate formula involving  time-integrated traces of  operators, which complicates numerical implementations, slows down calibration processes, and therefore diminishes the attractiveness of Fourier techniques in this context.  

The aim of this paper is to address the complex discontinuities that may arise from the complex square-root of the Fredholm determinant and to explore three key questions:  
\begin{enumerate}
    \item \textbf{Crossing behavior}: Does the Fredholm determinant cross the nonpositive real axis 
    $\mathbb{R}_-$ in the complex plane?  
    \item \textbf{Extension to the complex plane}: Can we extend to the complex plane the formula derived in \citet[Remark 3.4]{abi2022characteristic} for the Fourier–Laplace transform while accounting for such crossings?
    \item \textbf{Numerical algorithms}: Can we develop efficient numerical methods to compute the extended formula?
\end{enumerate}

In Section \ref{sec:model_&_char_fun}, we derive a formula for the Fourier–Laplace transform involving the complex square root of a Fredholm determinant, which is valid in the complex domain, see Theorem~\ref{thm:char_fun_sqrt_det_rot_count}. This formula introduces a prefactor $e^{i \pi k}$, where $k$ corresponds to the number of time the determinant has crossed the negative real axis. We thus recover the same structure than the one induced by the rotation count algorithm. However, the difference compared to classic stochastic volatility model, see for instance \citet[Algorithm 1]{Kahl08}, is that the number of revolutions $k$ of the determinant is generally not explicitly known, as it relies on an intricate formula based on the kernel $K$, which can be non-trivial (for example if $K \not\equiv 1$). 

In Section \ref{Sec:det_crossing}, we show that if the rank of a compact operator depending on $K$ is at least 3 (except in trivial cases), the Fredholm determinant  crosses the negative real axis infinitely many times, see Theorems~\ref{thm:crossing_neg_axis_integrated_variance}-\ref{thm:crossing_neg_axis_log_price} and Corollary \ref{cor:crossing_neg_axis_log_price}. Moreover, since for Fourier-based pricing techniques, discontinuities of the Fourier–Laplace transform are most significant before it has sufficiently decayed, we derive a sharp upper bound for the first crossing time, demonstrating that large eigenvalues for this compact operator lead to earlier crossings. Furthermore, we analyze the impact of the model parameters on the eigenvalue magnitudes. In particular, for the rough Stein-Stein model, where $K_H(t,s) = \mathbbm{1}_{s<t} \frac{(t-s)^{H-\frac{1}{2}}}{\Gamma(H + \frac{1}{2})}$ with $H \in (0,1)$, we demonstrate that the eigenvalues, and consequently the first crossing time, increase as the Hurst index $H$ approaches 0.

In Section \ref{Sec:hybrid_lip_algos}, we develop two algorithms for computing the number of crossing $n$, and thus being able to compute the prefactor $e^{i \pi k} = \pm 1$. The first one is based on a coarse approximation of the Fourier–Laplace transform based on the more complicated formula proposed in \cite{abi2022characteristic}, allowing one to recover the good sign for the Fourier–Laplace transform at a low computational cost. The second method relies on counting the number of times the determinant crosses the negative real axis by evaluating it on a sufficiently fine grid, whose resolution is determined by the Lipschitz constant of the determinant’s argument in order to ensure that no crossings are missed. The latter method was previously employed numerically by \citet[Remark 4.1]{abi2022characteristic}, without proper justification, to extend to the complex plane the Fourier–Laplace transform formula with Fredholm determinant originally proved in the real domain. However, it did not account for the Lipschitz constant, potentially leading to instability depending on the model parameters, as shown in Figure \ref{fig:arg_of_det_no_care_lip_cst}. 
Furthermore, we introduce a prefactor-free approximation formula for the Fourier–Laplace transform, which does not require the computation of the prefactor. Instead, it involves computing the determinant of the square root of a matrix, rather than the square root of its determinant.

The applicability of the new Fourier–Laplace transform formula, together with the two algorithms and the prefactor-free approximation formula, is presented in Section~\ref{sec:numerical_results}. There, we compare precision and computation time with the original formula, showing both significantly higher accuracy and faster computation. A detailed comparison between the two algorithms and the prefactor-free approximation formula is also provided. A Jupyter Notebook containing all the necessary code to reproduce the figures of this article is available on Google Colab at \url{https://colab.research.google.com/drive/1O87IJpqGo1E8oGXnZxFqamYnm3Oa-_8d?usp=sharing}. \\

\textbf{Notations}.
Let $\mathbb K=\mathbb R$ or $\mathbb C$. We denote by $\mc{B}(L^2_{\mathbb K})$ the set of bounded linear operators on the Hilbert space $L_{\mathbb K}^2 := \left(L^2([0,T], \mathbb{K}), \langle \cdot, \cdot \rangle_{L^2_{\mathbb K}} \right)$, where the inner product is defined by $\langle f, h \rangle_{L^2_{\mathbb K}} := \int_0^T f(s) \Bar{h}(s) \, ds$, $\overline{z}$ denoting the complex conjugate of $z$. The space $\mc{B}(L^2_{\mathbb K})$ is naturally endowed with the uniform operator norm, which induces the uniform topology. This norm is defined as $\|\bo{A}\| := \sup_{\lVert f \rVert_{L^2_{\mathbb K}}=1} \lVert\bo{A} f\rVert_{L^2_{\mathbb K}}$. We point out that the inner product $\langle \cdot, \cdot \rangle_{L^2_{\mathbb R}}$ defines a symmetric bilinear form on $L^2([0,T], \mathbb{C})$. When there is no ambiguity, we will also denote the space $L^2([0,T]^2, \mathbb{K})$ as $L_{\mathbb K}^2$. Finally, $\text{id}$ denotes the identity operator.

\subsection{Reminders on the principal branch of the logarithm and square-root}
For a complex number $z\in \mathbb C$, we will write its polar form as $z=|z| e^{i\theta}$, where $|z| \geq 0$ is the module of $z$ and $\theta \in \{ \arg{(z)} + 2 \pi n : n \in \mathbb Z \}$ an argument of $z$, with  $-\pi<\arg{(z)}\leq \pi$ being the principal argument of $z$.   We denote by $\sqrt[\bullet]{z}$ the two-valued square-root of $z$ defined by the solutions of the equation $w^2=z$, that is
\begin{equation}
    \sqrt[\bullet]{z} := \{w \in \mathbb C, \; w^2 = z\}
\end{equation} 
and by $\sqrt{z}$ the principal branch of the square-root of $z$ defined by 
\begin{equation}
    \sqrt{z} := e^{\frac{1}{2}\log(z)} = \sqrt{|z|} e^{i \frac{\arg{(z)}}2}, \quad z \in \mathbb C^*,
\end{equation}
and $\sqrt{0}=0$. In particular, when $z \neq 0$, $\sqrt{z}$ is the unique element of $\sqrt[\bullet]{z}$ with a strictly positive real part.  Similarly, we denote by $\log_n$ the function defined on $\mathbb C^*$ by
$$\log_n(z) := \ln(|z|) + i \left(\arg{(z)} + 2 \pi n\right)$$
and by $\log$ the function $\log_0$ which is called the principal branch of the logarithm.  We also recall that the principal branch of the logarithm and of the square-root are holomorphic  on $\mathbb C\backslash (-\infty, 0]$, i.e.~everywhere except on the set of non-positive real numbers, but fail to be even continuous on $(-\infty, 0)$.

\subsection{Reminders on integral operators} \label{subsubsec:recall_integral_op}

Let $\mathbb K=\mathbb R$ or $\mathbb C$. For any $K, L \in L^2([0, T]^2, \mathbb{K})$, we define the $\star$-product by
\begin{equation} \label{eq:star_product}
    (K \star L)(s, u) = \int_0^T K(s, z)L(z, u) \, dz, \quad (s, u) \in [0, T]^2,
\end{equation}
which is well-defined in $L^2([0, T]^2, \mathbb K)$ due to the Cauchy-Schwarz inequality. For any kernel $K \in L^2([0, T]^2, \mathbb K)$, we denote by $\bo{K}$ the integral operator induced by the kernel $K$, that is
\begin{equation}
    (\bo{K}f)(s) = \int_0^T K(s, u)f(u) \, du, \quad f \in L^2([0, T], \mathbb{K}).
\end{equation}
$\bo{K}$ is a linear bounded operator from $L^2([0, T], \mathbb K)$ into itself. If $\bo{K}$ and $\bo{L}$ are two integral operators induced by the kernels $K$ and $L$ in $L^2([0, T]^2, \mathbb K)$, then $\bo{K}\bo{L}$ is also an integral operator induced by the kernel $K \star L$. We use $\bo{K}$ to denote the integral operator with kernel $K$, $\bo{K}^*$ its adjoint, which is also an integral operator with kernel $K^*$ defined by $K^* = \overline{K}$.

An integral operator $\bo{K}$ is an Hilbert-Schmidt operator. Therefore, the product of two integral operators $\bo{K}$ and $\bo{L}$ is of trace class, and we have
\begin{equation}\label{eq:TR}
    \Tr(\bo{K} \bo{L}) = \int_0^T (K \star L)(s, s) \, ds,
\end{equation}
see \citet[Proposition 3]{brislawn1988kernels}. 

A kernel $K \in L^2([0,T]^2, \mathbb{K})$ is said to be separable if it can be written as a finite sum of tensor products of functions, that is:
\begin{equation}
    K(s,u) = \sum_{k=1}^N f_k(s)h_k(u),
\end{equation}
where $f_k, h_k \in L^2([0,T], \mathbb{K})$ for $k=1,\dots,N$, and $N \in \mathbb N^*$. An integral operator $\bo{K}$ with separable kernel is of trace class, since its rank is finite (in fact less than $N$), see Section \ref{sec:recall_trace_det} for the definition of the rank. For further details on  Hilbert-Schmidt and trace class operators, see Appendix \ref{sec:recall_trace_det}.

\section{Fourier–Laplace transform and Fredholm determinant} \label{sec:model_&_char_fun}
In this section,  we start by recalling  the expression for the joint Fourier–Laplace transform of the log-price $\log S_T$ and integrated variance $\int_0^T X_s^2ds$ 
 of the Volterra Stein-Stein model \eqref{eq:SSmodel} derived by \cite{abi2022characteristic} in terms of the trace of operators, see  Theorem~\ref{thm:trace_formula}.
 
Then, in Section~\ref{S:complexdisc}, we illustrate the discontinuities of the Fourier–Laplace transform that can arise when using a more efficient formula involving the principal branch of the square root of the Fredholm determinant, without accounting for the discontinuities of the complex square root.

Finally, we establish the formula involving the Fredholm determinant in Theorem~\ref{thm:char_fun_sqrt_det_rot_count}, which takes care of these complex discontinuities. This constitutes our first main theoretical result. 

Let us first introduce some preliminaries. We define $g$ as the adjusted conditional mean given by
\begin{equation} \label{eq:g_t}
    g_t(s) = \mathbbm{1}_{t \leq s \leq T} \mathbb{E} \left[ X_s - \int_t^s K(s, r) \kappa X_r \, dr \bigg| \mc{F}_t \right], \quad 0 \leq s, t \leq T.
\end{equation}
We define the set
\begin{equation}\label{def:set_U}
    \mc{U} := \left\{ (u,w) \in \mathbb{C}^2: 0 \leq \Re(u) \leq 1, \; 0 \leq \Re(w) \right\}.
\end{equation}
For $0 \leq t \leq T$ and $u, w \in \mc U$, we define the following variables:
\begin{equation}\label{eq:a,b}
    a(u,w) := w + \frac{1}{2}(u^2 - u), \quad b(u) := \kappa + \rho \nu u.
\end{equation}
We define $\tsig_t(u)$ as the adjusted covariance integral operator  given by
\begin{equation}
    \tsig_t(u) := (\text{id} - b(u)\bo{K})^{-1} \bo{\Sigma}_t (\text{id} - b(u)\bo{K}^*)^{-1}, \label{eq:tilde_Sigma}
\end{equation}
where $\bo{\Sigma}_t$ is the integral operator with kernel
\begin{equation}\label{eq:Sigma}
    \Sigma_t(s, z) := \nu^2 \int_t^T K(s, r) K(z, r) \, dr, \quad t \leq s, z \leq T.
\end{equation} 
Finally, we define $\bo{\Psi}_t(u,w)$ as the linear operator
\begin{equation}\label{eq:Psi_t}
    \bo{\Psi}_t(u,w) := a(u,w) (\text{id} - b(u)\bo{K}^*)^{-1} \left( \text{id} - 2a(u,w)\tsig_t(u) \right)^{-1} (\text{id} - b(u)\bo{K})^{-1},
\end{equation}
and $\bo{\Phi}_t(u,w)$ as the operator
 \begin{equation}\label{eq:Phi_t}
    \bo{\Phi}_t(u,w) := \text{id} - 2a(u,w)\tsig_t(u).
\end{equation}
We consider the following class of square-integrable Volterra kernels $K$. 
\begin{definition} \label{def:cont_bound_kernel} A kernel $ K : [0, T]^2 \to \mathbb{R} $ is a Volterra kernel of continuous and bounded type in $L^2_{\mathbb R}$ if $ K(t, s) = 0 $ whenever $ s \geq t $ and
\begin{align}
    &\sup_{t \in [0, T]} \int_0^T |K(t, s)|^2 ds < \infty, \quad \sup_{t \in [0, T]} \int_0^T |K(s, t)|^2 \, ds < \infty, \\
    &\lim_{h \to 0} \int_0^T |K(t + h, s) - K(t, s)|^2 ds = 0, \quad t \leq T. 
\end{align}
\end{definition}
This class of kernels encompasses any Volterra kernel of convolution form $K(t,s)= 1_{s<t} k(t-s)$, with $\int_0^T k(s)^2 ds <\infty$. A notable example is the fractional kernels $k(t)=\frac{t^{H-\frac{1}{2}}}{\Gamma(H + \frac{1}{2})}$, with a Hurst parameter $H \in (0,1)$, which has a singularity at zero when $H<1/2$. In addition, any continuous Volterra kernel $K$ on $[0,T]^2$ satisfies Definition~\ref{def:cont_bound_kernel}, and if $K$ satisfies Definition \ref{def:cont_bound_kernel}, then so does its adjoint $K^*$.

Finally, the well-definedness of $\tsig$ and $\bo{\Psi}$ is ensured by the following result, whose proof combines  \citet[Lemma A.5]{abi2022characteristic} and Lemma \ref{lemma:invertibility}.

\begin{lemma} \label{lemma:tsig_psi_well_def}
    Let $0 \leq t \leq T$, $0 \leq s \leq 1$, $(u,w) \in \mc U$ and $K$ be a Volterra kernel of continuous and bounded type in $L^2_{\mathbb R}$. Then, $(\id - b(u)\bo{K})$ is invertible in $\mc{B}(L^2_{\mathbb C})$, and $\tsig_t(u)$ is a trace class integral operator with continuous kernel. Furthermore, $\id-2a(u,w)\tsig_t(u)$ is invertible in $\mc{B}(L^2_{\mathbb C})$, and $\bo{\Psi}_{t,s}(u,w)$ is well defined.
\end{lemma}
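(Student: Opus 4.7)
The plan is to split the statement into its two main assertions and verify each by invoking the two results the lemma references, namely \citet[Lemma~A.5]{abi2022characteristic} and Lemma~\ref{lemma:invertibility}.

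For the invertibility of $\id - b(u)\bo{K}$, I would argue that because $K$ is a Volterra kernel in the sense of Definition~\ref{def:cont_bound_kernel}, the associated integral operator $\bo{K}$ is quasi-nilpotent: the Volterra structure $K(t,s)=0$ for $s\geq t$ combined with the boundedness conditions forces the iterated $\star$-kernels to decay, so the spectral radius of $\bo{K}$ is zero. Consequently $\id - \lambda \bo{K}$ is invertible for every $\lambda \in \bb{C}$ via a norm-convergent Neumann series, which is precisely what \citet[Lemma~A.5]{abi2022characteristic} records for $\lambda = b(u)$. Once $(\id - b(u)\bo{K})^{-1}$ and $(\id - b(u)\bo{K}^*)^{-1}$ are known to be bounded, the trace class nature of $\tsig_t(u)$ follows by first showing that $\bo{\Sigma}_t$ is trace class, its kernel $\Sigma_t(s,z)=\nu^2\int_t^T K(s,r)K(z,r)\,dr$ being continuous by the $L^2$-continuity condition in Definition~\ref{def:cont_bound_kernel} together with a Cauchy-Schwarz / dominated convergence argument, and the operator being positive semi-definite self-adjoint. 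Then one uses that the trace class ideal is two-sided: sandwiching $\bo{\Sigma}_t$ between the two bounded resolvents preserves the trace class property. The continuity of the kernel of $\tsig_t(u)$ is inherited from that of $\bo{\Sigma}_t$ together with the continuity of the resolvent's action carried through the $\star$-product formula \eqref{eq:TR}.

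For the invertibility of $\id - 2a(u,w)\tsig_t(u)$, I would appeal directly to Lemma~\ref{lemma:invertibility}. Conceptually, the constraint $(u,w)\in \mc{U}$, namely $0\leq \Re(u)\leq 1$ and $\Re(w)\geq 0$, is tailored so that $2a(u,w)$ never places the point $1$ into the spectrum of $\tsig_t(u)$. The cleanest route is to conjugate through the resolvents and rewrite
\begin{equation}
\id - 2a(u,w)\tsig_t(u) = (\id - b(u)\bo{K}^*)^{-1}\bigl[(\id - b(u)\bo{K}^*)(\id - b(u)\bo{K}) - 2a(u,w)\bo{\Sigma}_t\bigr](\id - b(u)\bo{K})^{-1},
\end{equation}
so that invertibility reduces to showing that the bracketed operator has trivial kernel. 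One then exploits the positive semi-definiteness of $\bo{\Sigma}_t$ together with the sign constraints on $\Re(u)$ and $\Re(w)$, likely through a quadratic-form estimate tying $\langle (\id - b(u)\bo{K})f, (\id - b(u)\bo{K})f\rangle$ to $\langle \bo{\Sigma}_t f,f\rangle$, to rule out nontrivial kernel elements.

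Once both invertibility statements are in place, the well-definedness of $\bo{\Psi}_t(u,w)$ is immediate: it is a composition of three bounded operators, namely $(\id - b(u)\bo{K}^*)^{-1}$, $(\id - 2a(u,w)\tsig_t(u))^{-1}$, and $(\id - b(u)\bo{K})^{-1}$, scaled by $a(u,w)$. The main obstacle in this plan is the invertibility of $\id - 2a(u,w)\tsig_t(u)$, since $a(u,w)$ is genuinely complex on $\mc{U}$ and $\tsig_t(u)$ need not be self-adjoint, so one cannot rely on elementary spectral theorems; the crucial input is the algebraic fact that $\tsig_t(u)$ is a conjugated form of the positive self-adjoint operator $\bo{\Sigma}_t$, which is exactly what Lemma~\ref{lemma:invertibility} is designed to exploit.
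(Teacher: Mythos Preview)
Your proposal is correct and matches the paper's approach, which is simply to invoke \citet[Lemma~A.5]{abi2022characteristic} for the first assertion and Lemma~\ref{lemma:invertibility} for the second; the paper offers no standalone proof beyond that citation. One small slip in your illustrative sketch: the conjugation identity you display has the resolvents in the wrong order. Since $\tsig_t(u) = (\id - b(u)\bo{K})^{-1}\bo{\Sigma}_t(\id - b(u)\bo{K}^*)^{-1}$, the correct factorization reads
\begin{equation}
\id - 2a\,\tsig_t = (\id - b\bo{K})^{-1}\bigl[(\id - b\bo{K})(\id - b\bo{K}^*) - 2a\,\bo{\Sigma}_t\bigr](\id - b\bo{K}^*)^{-1},
\end{equation}
and the paper's actual proof of Lemma~\ref{lemma:invertibility} works with the truncated kernel $K_t$ (via \citet[Lemma~A.5]{abi2022characteristic}), first reduces to the case $\Re(b)=0$, and then carries out a quadratic-form estimate along the lines you outline.
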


\begin{remark} \label{rmk:det_Phi_well_def}
    For $(t,u,w) \in [0,T] \times \mc U$, since from Lemma \ref{lemma:tsig_psi_well_def}, $\tsig_t(u)$ is a trace class operator, we can consider its Fredholm determinant, see Appendix \ref{sec:recall_trace_det}. In particular, $\det(\bo{\Phi}_t(u,w))$ is well defined.
\end{remark}

\begin{remark} \label{rmk:kappa=0}
Following \citet[Theorem A.3]{abi2022characteristic}, let $g_0 \in L^2([0,T],\mathbb R)$ and let $K$ be a Volterra kernel of continuous and bounded type in $L^2_{\mathbb R}$. Then, for all $0 \leq t \leq T$, the process $(X_t)_t$ satisfies
\begin{equation} \label{eq:integral_eq_X_t_kappa=0}
    X_t = (\id + \bo{R}_{\kappa})g_0(t) + \int_0^t \frac{1}{\kappa} R_{\kappa}(t,s) \nu\, dW_s,
\end{equation}
where $\bo{R}_{\kappa}$ is the integral operator with kernel $R_{\kappa}$, the resolvent kernel of $\kappa K$. By convention, when $\kappa = 0$, we set $\frac{1}{\kappa} R_{\kappa} := K$. Moreover, from \citet[Lemma A.2]{abi2022characteristic}, $R_{\kappa}$ is itself a Volterra kernel of continuous and bounded type in $L^2_{\mathbb R}$.

We observe that \eqref{eq:integral_eq_X_t_kappa=0} corresponds to the original integral equation \eqref{eq:SSmodel} for $(X_t)_t$ in the special case $\kappa = 0$, where $g_0$ is replaced by $\tilde{g}_0 := (\id + \bo{R}_{\kappa})g_0$ and $K$ is replaced by $\tilde{K} := \frac{1}{\kappa} R_{\kappa}$. Therefore, without loss of generality, we reduce our analysis to the case $\kappa = 0$.
\end{remark}

The joint Fourier–Laplace transform of the log-price and the integrated variance are derived in \citet[Theorem 3.3]{abi2022characteristic}. We recall the result in the next theorem.

\begin{theorem}[Trace formula for the Fourier–Laplace transform]\label{thm:trace_formula}

Let $g_0 \in L^2([0,T], \mathbb{R})$ and $K$ be a Volterra kernel of continuous and bounded type in $L^2_{\mathbb R}$. Let $t \leq T$ and $u, w \in \mathcal U$. Then,
\begin{equation}\label{eq:char_func_trace_formula}
    \mathbb{E} \left[ \exp \left( u \log \frac{S_T}{S_t} + w \int_t^T X_s^2 ds \right) \bigg| \mc{F}_t \right] = \exp \left( \phi_t(u,w) + \langle g_t, \bo{\Psi}_t(u,w) g_t \rangle_{L^2_{\mathbb R}} \right), 
\end{equation}
with
\begin{equation} \label{eq:phi_t}
    \phi_t(u,w) = -\int_t^T \mathrm{Tr} (\bo{\Psi}_s(u,w) \dot{\bo{\Sigma}}_s) ds, 
\end{equation}
where $\dot{\bo{\Sigma}}_t$ is the strong derivative of $t \mapsto \Sigma_t$ induced by the kernel
\begin{equation} \label{eq:dot_sig_t}
    \dot{\Sigma}_t(s, z) = -\nu^2 K(s,t)K(z,t), \quad \text{a.e.},
\end{equation}
and $\mathrm{Tr}$ is the trace operator.
\end{theorem}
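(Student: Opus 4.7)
The strategy is to exploit the Gaussian nature of $X$ --which solves a linear Volterra equation with deterministic coefficients-- to reduce the joint Laplace transform to that of a quadratic functional of a Gaussian process on $[t,T]$. I would first write $\log(S_T/S_t) = \int_t^T X_s\, dB_s - \tfrac{1}{2}\int_t^T X_s^2\, ds$ and decompose $B = \rho W + \sqrt{1-\rho^2}\, W^\perp$. Conditioning on $\mc{F}^W_T \vee \mc{F}_t$, the independent Brownian motion $W^\perp$ is integrated out via the Gaussian identity $\bb{E}[\exp(\lambda\int f\, dW^\perp)] = \exp(\tfrac{\lambda^2}{2}\int f^2)$, collapsing the conditional expectation to
\begin{equation}
\bb{E}\!\left[\exp\!\left(a(u,w)\int_t^T X_s^2\, ds + u\rho \int_t^T X_s\, dW_s\right)\bigg|\mc{F}_t\right],
\end{equation}
with $a(u,w) = w + \tfrac{1}{2}(u^2-u)$ as in \eqref{eq:a,b}.

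Next, I would use the Volterra dynamics \eqref{eq:SSmodel} together with a stochastic Fubini argument to rewrite $\nu\int_t^T X_s\, dW_s$ as a quadratic form in $(X_s)_{s\in[t,T]}$ plus a linear functional against $g_t$; this exchange is precisely what produces the drift-adjusted resolvent $(\id - b(u)\bo{K})^{-1}$ with $b(u) = \kappa+\rho\nu u$. Conditionally on $\mc{F}_t$, the process $(X_s)_{s\in[t,T]}$ is Gaussian with adjusted mean encoded by $g_t$ and covariance kernel $\Sigma_t$, so after the conjugation one recovers $\tsig_t(u)$ as the effective covariance operator. Applying the classical formula for the Laplace transform of a quadratic-plus-linear functional of a Gaussian process, together with the algebraic identity defining $\bo{\Psi}_t(u,w)$ in \eqref{eq:Psi_t}, yields an expression of the form
\begin{equation}
\exp\bigl(-\tfrac{1}{2}\log\det\bo{\Phi}_t(u,w) + \langle g_t, \bo{\Psi}_t(u,w) g_t\rangle_{L^2_{\bb{R}}}\bigr),
\end{equation}
which matches \eqref{eq:char_func_trace_formula} up to the identification of $\phi_t$.

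To recover the trace formulation \eqref{eq:phi_t} rather than a $\log\det$ --which would reintroduce precisely the multi-valued ambiguity addressed in the rest of the paper-- I would differentiate in $t$ using Jacobi's formula $\tfrac{d}{dt}\log\det\bo{\Phi}_t = -\Tr(\bo{\Phi}_t^{-1}\dot{\bo{\Phi}}_t)$. From \eqref{eq:Phi_t} and the definition of $\tsig_t$, one has $\dot{\bo{\Phi}}_t = -2a(u,w)(\id-b(u)\bo{K}^*)^{-1}\dot{\bo{\Sigma}}_t(\id - b(u)\bo{K})^{-1}$, and cyclicity of the trace combined with \eqref{eq:Psi_t} yields $-\tfrac{1}{2}\tfrac{d}{dt}\log\det\bo{\Phi}_t = \Tr(\bo{\Psi}_t\dot{\bo{\Sigma}}_t)$. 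Integrating from $t$ to $T$ with the terminal condition $\bo{\Sigma}_T = 0$ (hence $\log\det\bo{\Phi}_T=0$) delivers $\phi_t = -\int_t^T \Tr(\bo{\Psi}_s\dot{\bo{\Sigma}}_s)\,ds$.

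The main obstacle is carrying out the Volterra Fubini/integration-by-parts step rigorously under only the regularity of Definition~\ref{def:cont_bound_kernel}, which allows fractional kernels singular at the origin. A natural route is approximation by smooth kernels $K^n \to K$ in Hilbert-Schmidt norm, proving the identity for each $K^n$ and then passing to the limit through the operator-norm continuity of $b \mapsto (\id - b(u)\bo{K})^{-1}$ and the trace-norm continuity of $\bo{\Sigma}_t$ granted by Lemma~\ref{lemma:tsig_psi_well_def}. An alternative route is affine in spirit: verify directly that the process obtained by multiplying the right-hand side of \eqref{eq:char_func_trace_formula} by $\exp(u\log S_t + w\int_0^t X_s^2\, ds)$ is a true martingale by computing its Itô differential and checking that the drift vanishes --an operator-valued Riccati condition satisfied by $\bo{\Psi}_t$-- after which the true (not merely local) martingale property on $\mc{U}$ would follow from a domination argument exploiting $\Re(u) \in [0,1]$ and $\Re(w) \geq 0$.
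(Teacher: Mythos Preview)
The paper does not prove this theorem: it is quoted from \citet[Theorem~3.3]{abi2022characteristic}, as stated just before the theorem. There is thus no in-paper argument to compare against.

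Your sketch follows the natural route and is broadly that of \cite{abi2022characteristic}: condition out $W^\perp$, recognise the remaining expectation as a Gaussian quadratic-plus-linear functional with conditional mean $g_t$ and covariance $\Sigma_t$, then unwind the resolvent algebra. The step that needs care is writing the Gaussian output as $\exp\bigl(-\tfrac12\log\det\bo{\Phi}_t+\cdots\bigr)$ and then differentiating $\log\det\bo{\Phi}_t$. You flag the multi-valuedness yourself, but invoking Jacobi on $\log\det\bo{\Phi}_t$ is precisely the move the paper warns against in Section~\ref{S:complexdisc}: the principal scalar logarithm of $\det\bo{\Phi}_t$ need not be differentiable in $t$ (or in $u,w$), and $\log\det\bo{\Phi}_t$ differs from the well-behaved object $\sum_n\log(1-2a\lambda_{n,t})$ (each log principal, legitimate since $\Re(1-2a\lambda_{n,t})\ge1$) by $2\pi i\,n_t$. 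The Gaussian computation actually produces the latter sum --an operator $\Tr\log$, not the $\log$ of a scalar-- and it is \emph{that} object whose $t$-derivative is honestly $-\Tr(\bo{\Phi}_t^{-1}\dot{\bo{\Phi}}_t)$ without branch issues. Alternatively, differentiate $\det\bo{\Phi}_t$ directly as done in Lemma~\ref{lemma:det_exp_phi}. Your martingale/Riccati alternative sidesteps the issue entirely and is closer in spirit to the original derivation in \cite{abi2022characteristic}.
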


We will refer to the formula \eqref{eq:char_func_trace_formula} for the Fourier–Laplace transform, which involves an integral representation \eqref{eq:phi_t} for $\phi_t$, as the \textbf{trace formula for the Fourier–Laplace transform}, since the representation for $\phi_t$ includes the trace of an operator. 

\begin{remark} \label{rmk:why_using_det}
At this point, fixing $t,u$ and $w$, it is crucial to emphasize that in order to numerically compute the Fourier–Laplace transform from \eqref{eq:char_func_trace_formula}, one must discretize both $\bo{\Psi}_t(u,w)$ and $g_t$ to compute the inner product, as well as $\bo{\Sigma}_t$ to compute $e^{\phi_t(u,w)}$. Additionally, the integral in \eqref{eq:phi_t} must also be discretized, since in general, no analytical formula is available for $\phi_t(u,w)$. Therefore, effective computation requires discretizing both the operators and the integral. 
\end{remark}

\subsection{The complex discontinuity problem when using the Fredholm determinant}\label{S:complexdisc}

\citet[Remark 3.4]{abi2022characteristic} demonstrated that if $u$ and $w$ are real numbers, the numerical computation of the trace formula in  \eqref{eq:phi_t}  can be significantly simplified. For simplicity, let's omit in this paragraph any dependence on $u$ and $w$. Specifically, one can reduce the computational effort to the discretization of the operators by making a link between $e^{\phi_t}$ and the \cite{Fredholm1903} determinant of $\bo{\Phi}_t$. To achieve this, he shows that $\varphi: t \mapsto \varphi_t := \log(\det(\bo{\Phi}_t))$ is well defined, differentiable and verifies 
\begin{align}
    \begin{cases}
        \partial_t \varphi_t = -2 \partial_t \phi_t, &  0 \leq t \leq T \\
        \varphi_T = \phi_T = 0.
    \end{cases} \label{eq:diff_log_det}
\end{align}
Thus, it follows that 
\begin{equation} 
    \phi_t = -\frac{1}{2}\log(\det(\bo{\Phi}_t)),
\end{equation}
and
\begin{equation}\label{eq:phi=sqrt_det}
    e^{\phi_t} = \frac{1}{\sqrt{\det(\bo{\Phi}_t)}}.
\end{equation}
This alternative formula for $e^{\phi_t}$ is numerically really interesting. Indeed, as  hinted in Remark~\ref{rmk:why_using_det}, in order to compute $e^{\phi_t}$ using \eqref{eq:phi_t}, one needs to discretize both the integral and the operators $\bo{\Psi}_t$ and $\bo{\Sigma}_t$, whereas when using \eqref{eq:phi=sqrt_det}, one only needs to discretize the operator $\bo{\Phi}_t$. 

However, in the general case where \( u, w \in \mathbb{C} \), such that \( a, b \in \mathbb{C} \), the function \( t \mapsto \det(\bo{\Phi}_t) \) is complex-valued. If this function crosses the negative real axis over $[0,T]$, then \( \varphi_t \) would not be differentiable as the principal branch of the logarithm is not even continuous on the negative real axis. Consequently, the previous reasoning would no longer apply, and as we will show in the next section, \eqref{eq:phi=sqrt_det} does not hold in general. However, if these discontinuities are neglected and one continues to use \eqref{eq:phi=sqrt_det}, the resulting Fourier–Laplace transform becomes discontinuous.

Figure \ref{fig:chardiscontinuity} illustrates this in the case of the rough Stein-Stein model, corresponding to \eqref{eq:SSmodel} with the following Riemann-Liouville fractional kernel, and input curve:
\begin{align}
    K(t,s) := \mathbbm{1}_{s<t} \frac{(t-s)^{H-\frac{1}{2}}}{\Gamma(H + \frac{1}{2})}, \quad 
    g_0(t) := X_0 +  \int_0^t K_H(t,s) \theta \, ds 
    = X_0 + \theta \frac{t^{H+\frac{1}{2}}}{\Gamma(H + \frac{1}{2})(H + \frac{1}{2})},
\end{align}
for $s,t\leq T$, with $H\in (0,1)$ and $X_0,\theta \in \mathbb R$. 

We compute the Fourier–Laplace transform using the trace formula \eqref{eq:char_func_trace_formula} in combination with \eqref{eq:phi=sqrt_det}. The  method follows the approach described in Section \ref{subsec:discretized_operators}, where a partition $(t_i := \frac{iT}{n})_{1\leq i\leq n}$ of $[0,T]$ is used for discretizing the operators.

We observe from the figure that the two graphs differ by a factor -1 between each discontinuity point. We will see that it is indeed the case, and that each discontinuity point corresponds to a moment where the determinant crosses the negative real axis. In particular, we will see that \eqref{eq:phi=sqrt_det} is always true up to a factor -1.

\begin{figure}[h!]
    \centering
    \includegraphics[width=\textwidth]{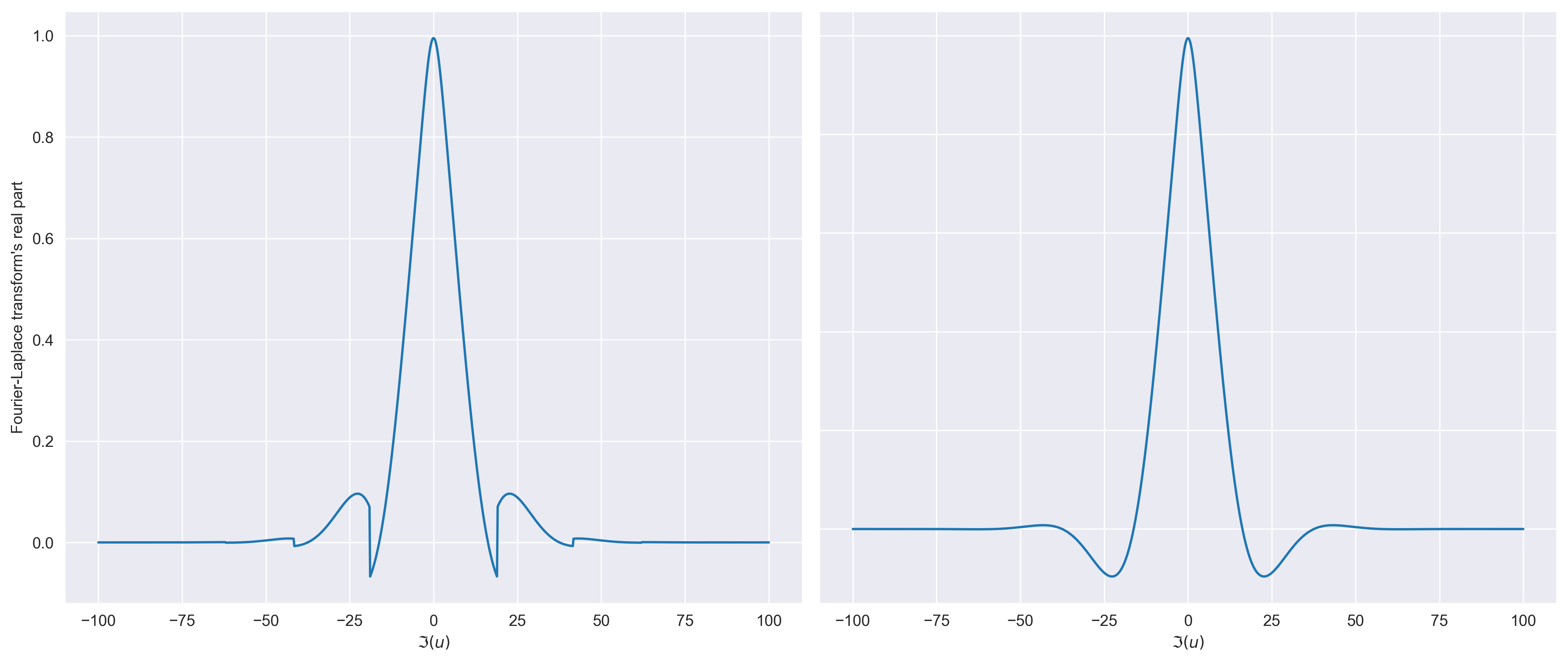} 
    \caption{Real part of the Fourier–Laplace transform as a function of $\Im(u)$ in the fractional kernel setting, computed via the determinant representation \eqref{eq:phi=sqrt_det} (Left) and the trace representation \eqref{eq:phi_t} (Right) for $\phi_t$. The parameters are $\kappa = 0$, $\nu = 0.2$, $\theta = -0.1$, $\rho = -0.9$, $X_0 = -0.05$, $H = 0.3$, $T = 1$, $t = 0$, $n=200$, $\Re(u) = \frac{1}{2}$, $w = 0$.}
    \label{fig:chardiscontinuity}
\end{figure}
\FloatBarrier

\subsection{First main result: Extending the Fourier–Laplace transform to the complex plane} \label{subsec:det_formula}

Our first main result, Theorem~\ref{thm:char_fun_sqrt_det_rot_count}, provides the extension of the Fourier–Laplace transform formula to the complex plane. In particular, Lemma~\ref{lemma:phi=sqrt_det} shows that $e^{\phi_t(u,w)}$ is the reciprocal of the principal square root of $\det(\bo{\Phi}_t(u,w))$, up to a sign determined by the number of crossings of the nonpositive real line $\mathbb{R}_-$ by the Fredholm determinant. This crossing behavior is encoded in the integer $k_t(u,w) \in \mathbb{Z}$, defined by
\begin{equation}\label{eq:k_t(u,w)}
\pi k_t(u,w) := \frac{\arg(\det(\bo{\Phi}_t(u,w)))}{2} + \Im(\phi_t(u,w)).
\end{equation}

\begin{theorem}[Determinant formula for the Fourier–Laplace transform]\label{thm:char_fun_sqrt_det_rot_count}
    Let $ g_0 \in L^2([0,T], \mathbb{R}) $ and $ K $ be a Volterra kernel of continuous and bounded type in $L^2_{\mathbb R}$. Let $t\leq T$, $(u, w) \in \mc U$ and let $k_t$ be given by \eqref{eq:k_t(u,w)}. Then,
    \begin{equation} \label{eq:char_func_det_formula}
        \mathbb{E} \left[ \exp \left( u \log \frac{S_T}{S_t} + w \int_t^T X_s^2 ds \right) \bigg| \mc{F}_t \right] = e^{i \pi k_t(u,w)} \frac{\exp \left(\langle g_t, \bo{\Psi}_t(u,w) g_t \rangle_{L^2_{\mathbb R}} \right)}{\sqrt{\det(\bo{\Phi}_t(u,w))}},
    \end{equation}
 In particular, for any connected subset $\mc V \subset \mc U$ containing two real numbers $(x,y)$ where $0 \leq x \leq 1$ and $0 \leq y$, if $\det(\bo{\Phi}_t): \mc V \to \mathbb C$ never crosses the negative real axis, then, for all $(u,w)\in \mc V$, it holds  that $k_t(u,w)=0$ and  
 \begin{equation} \label{eq:char_fun_sqrt_det}
        \mathbb{E} \left[ \exp \left( u \log \frac{S_T}{S_t} + w \int_t^T X_s^2 ds \right) \bigg| \mc{F}_t \right] =  \frac{\exp \left(\langle g_t, \bo{\Psi}_t(u,w) g_t \rangle_{L^2_{\mathbb R}} \right)}{\sqrt{\det(\bo{\Phi}_t(u,w))}}.
    \end{equation}
\end{theorem}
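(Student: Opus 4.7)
The plan is to reduce Theorem~\ref{thm:char_fun_sqrt_det_rot_count} to the algebraic identity
\begin{equation*}
    e^{2\phi_t(u,w)}\det(\bo{\Phi}_t(u,w)) = 1, \quad (t,u,w) \in [0,T] \times \mc{U}.
\end{equation*}
Once this is established, $e^{\phi_t(u,w)}$ is a complex square root of $1/\det(\bo{\Phi}_t(u,w))$, so it equals $\pm 1/\sqrt{\det(\bo{\Phi}_t(u,w))}$; writing the sign as $e^{i\pi n_t(u,w)}$ forces $n_t(u,w)\in\bb{Z}$, and taking imaginary parts recovers precisely the expression \eqref{eq:n_t(u,w)}. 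Combining this with the trace formula \eqref{eq:char_func_trace_formula} of Theorem~\ref{thm:trace_formula} then yields \eqref{eq:char_func_det_formula}.

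To prove the identity, I would fix $(u,w)\in\mc U$, set $f(t) := e^{2\phi_t(u,w)}\det(\bo{\Phi}_t(u,w))$, and show $f\equiv 1$ via $f(T)=1$ together with $\partial_t f \equiv 0$. The terminal value is immediate: $\bo{\Sigma}_T=0$ from \eqref{eq:Sigma} forces $\tsig_T(u)=0$ and $\bo{\Phi}_T(u,w)=\id$, while $\phi_T(u,w)=0$ by \eqref{eq:phi_t}. The derivative of $\phi_t$ is given by \eqref{eq:phi_t} as $\partial_t \phi_t = \Tr(\bo{\Psi}_t\dot{\bo{\Sigma}}_t)$, and for the Fredholm determinant I would invoke the Jacobi formula
\begin{equation*}
    \partial_t \det(\bo{\Phi}_t) = \det(\bo{\Phi}_t)\,\Tr\bigl(\bo{\Phi}_t^{-1}\partial_t\bo{\Phi}_t\bigr).
\end{equation*}
Differentiating \eqref{eq:tilde_Sigma} gives $\partial_t\tsig_t = (\id-b(u)\bo{K})^{-1}\dot{\bo{\Sigma}}_t(\id-b(u)\bo{K}^*)^{-1}$ with $\dot{\bo{\Sigma}}_t$ from \eqref{eq:dot_sig_t}, so $\partial_t\bo{\Phi}_t = -2a(u,w)\partial_t\tsig_t$. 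Cyclicity of the trace, combined with the definition \eqref{eq:Psi_t} of $\bo{\Psi}_t$, then rearranges the inverse factors to produce
\begin{equation*}
    \frac{\partial_t \det(\bo{\Phi}_t)}{\det(\bo{\Phi}_t)} = -2\,\Tr(\bo{\Psi}_t\dot{\bo{\Sigma}}_t) = -2\partial_t\phi_t,
\end{equation*}
whence $\partial_t f = 0$ by the product rule.

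The main obstacle is making these operator-valued computations rigorous: in particular, differentiability of $t\mapsto \tsig_t(u)$ in the trace norm and applicability of the Jacobi formula for Fredholm determinants of trace class perturbations of the identity. I would settle these using Lemma~\ref{lemma:tsig_psi_well_def} (trace class regularity and invertibility of $\bo{\Phi}_t$), the kernel assumptions in Definition~\ref{def:cont_bound_kernel} (needed to differentiate $\bo{\Sigma}_t$), and the standard facts on Fredholm determinants recalled in Appendix~\ref{subsec:recall_trace_det}. Notably, in contrast with the real-variable argument of \citet[Remark 3.4]{abi2022characteristic}, no branch of the complex logarithm is invoked at any stage; the sign ambiguity is absorbed cleanly into the prefactor $e^{i\pi n_t}$.

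For the second part of the theorem, on any connected $\mc{V}\subset\mc{U}$ on which $\det(\bo{\Phi}_t)$ avoids $(-\infty,0]$, the map $(u,w)\mapsto n_t(u,w)$ defined by \eqref{eq:n_t(u,w)} is continuous, since both the principal argument and $\phi_t$ are continuous there. At a real point $(x,y)\in\mc V$ with $0\leq x\leq 1$ and $y\geq 0$, the coefficients $a(x,y), b(x)$ are real, so every operator involved has a real kernel; hence $\phi_t(x,y)\in\bb R$ and $\det(\bo{\Phi}_t(x,y))\in\bb R\setminus(-\infty,0]=(0,\infty)$, giving $n_t(x,y)=0$. A continuous integer-valued function on a connected set is constant, so $n_t\equiv 0$ on $\mc V$ and \eqref{eq:char_fun_sqrt_det} follows from \eqref{eq:char_func_det_formula}.
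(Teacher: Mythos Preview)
Your proposal is correct and follows essentially the same route as the paper: the key identity $e^{2\phi_t}\det(\bo{\Phi}_t)=1$ is exactly the content of the paper's Lemma~\ref{lemma:det_exp_phi}, proved there (as you propose) via the Jacobi formula for the Fredholm determinant, cyclicity of the trace, and the terminal condition; the extraction of the integer $n_t$ and the connectedness argument for the second part match Lemmas~\ref{lemma:phi=sqrt_det} and~\ref{lemma:no_rot_count_needed}. The only cosmetic difference is that you package the ODE step as $\partial_t\bigl(e^{2\phi_t}\det(\bo{\Phi}_t)\bigr)=0$ rather than solving $\partial_t d_t = -2(\partial_t\phi_t) d_t$, and you correctly flag (as the paper does via Lemma~\ref{lemma:phi_det_C^0}) that continuity of $(u,w)\mapsto\phi_t(u,w)$ must be established separately for the second part.
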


\begin{proof}
    The proof is given in Section~\ref{S:proofofthmcharfun}.
\end{proof}

We will refer to the formula \eqref{eq:char_fun_sqrt_det} for the Fourier–Laplace transform, which involves the determinant representation \eqref{eq:det_exp_phi} for $\phi$, as the \textbf{determinant formula for the Fourier–Laplace transform}.

\begin{remark}
    By considering $\mc V := \Re(\mc U) \subset \mc U$, the set of real numbers $(u,w)$ verifying $0 \leq u \leq 1$ and $w \leq 0$, and because the determinant is positive since $\tsig_t(u)$ is a positive operator on $\mc V$, we re-obtain formula \eqref{eq:phi=sqrt_det} proved in \citet[Remark 3.4]{abi2022characteristic}: $e^{\phi_t(u,w) } = \frac{1}{\sqrt{\det(\bo{\Phi}_t(u,w))}}$ for $u,w$ real numbers.
\end{remark}

From a numerical perspective, evaluating the determinant formula in Theorem~\ref{thm:char_fun_sqrt_det_rot_count} is less computationally expensive than computing the trace formula of Theorem~\ref{thm:trace_formula}, provided that one knows the sign of the prefactor $e^{i \pi k_t(u,w)}$ which takes value in $\pm 1$ (since $k_t(u,w)\in \mathbb Z$) and is a priori unknown, since we only know the value of $k_t(u,w)$ from \eqref{eq:k_t(u,w)}, which depends on $\phi_t(u,w)$ that we do not want to compute anymore via its trace representation \eqref{eq:phi_t}.

Furthermore, in the context of pricing derivatives on the stock price  or the integrated variance by using Fourier inversion techniques, one would, in the first case, which we will refer to as the \textbf{log-price case}, set the value of $w$ to 0, fix the real part of $u$, and allow its imaginary part to vary between 0 and $\infty$. In the second case, referred to as the \textbf{integrated variance case}, this is the contrary. For instance, the Lewis formula \citet[Formula 3.11]{Lewis2002} for calculating the price of a put option on $S$ with strike $K$ is given by:
\begin{align} \label{eq:Lewis_formula}
P &= K - \frac {\sqrt{S_0K}}{\pi } \int_0^{+\infty} \Re \left(e^{i\Im(u)k} \frac{\xi_T\left(\frac{1}{2}+i\Im(u)\right)}{\Im(u)^2 + \frac{1}{4}}\right) d\Im(u), \\
\xi_T(u) :&= \mathbb{E} \left[ \exp \left( u \log \frac{S_T}{S_0}\right)\right],
\end{align}
where $k = \log(S_0/K)$.

Therefore, it is important to highlight that as the integrand in \eqref{eq:Lewis_formula} decays rapidly to zero, and that the difference between Formulas \eqref{eq:char_func_det_formula} and \eqref{eq:char_fun_sqrt_det} is by a factor $\pm 1$, one could argue that if the crossings of the determinant only happen where the integrand is close to zero, using one or the other formula might not make a big difference in the put price, and that we would additionally save the computation of the prefactor $e^{i \pi k_t(u)}$ when using \eqref{eq:char_fun_sqrt_det}, which would improve the computation time. \\ 
From all these observations, three key questions arise:

\begin{enumerate}
    \item \textbf{Crossing behavior}: Does the Fredholm determinant cross the nonpositive real axis     $\mathbb{R}_-$ when $\Im(u)$ or $\Im(w)$ is varying? 
    \item \textbf{First crossing's position}: If crossings occur, where does the first crossing appear?
    \item \textbf{Numerical algorithms}: Can efficient numerical methods be developed to determine the sign of the prefactor $e^{i \pi k_t(u,w)}$?  
\end{enumerate}

The first two questions are addressed in Section~\ref{Sec:det_crossing} and the third one in Section~\ref{Sec:hybrid_lip_algos}. Furthermore, we propose The rest of this section is dedicated to the proof of Theorem~\ref{thm:char_fun_sqrt_det_rot_count}.

The first two questions are addressed in Section~\ref{Sec:det_crossing}, and the third one in Section~\ref{Sec:hybrid_lip_algos}, where we also propose a prefactor-free approximation of the Fourier–Laplace transform that avoids the explicit computation of the prefactor by evaluating the determinant of a matrix square root instead of the square root of its determinant. The rest of this section is devoted to the proof of Theorem~\ref{thm:char_fun_sqrt_det_rot_count}.

\subsection{Proof of Theorem~\ref{thm:char_fun_sqrt_det_rot_count}} \label{S:proofofthmcharfun}

For $K$ being a Volterra kernel of continuous and bounded type in $L^2_{\mathbb R}$, we introduce the following notation for $\det(\bo{\Phi})$:

\begin{equation}
    d_t(u,w) := \det(\bo{\Phi}_t(u,w)), \quad t \leq T, \, (u,w) \in \mc{U}.
\end{equation}

The key to recover a relationship between $e^{\phi}$ and $\sqrt{d_t}$ is the following: for fixed $(u,w) \in \mc U$, instead of differentiating $t \mapsto \log(d_t(u,w))$ as in \eqref{eq:diff_log_det}, which may not be differentiable as already discussed, we will directly differentiate $t \mapsto d_t(u,w)$. This slightly different approach is detailed in the following lemma.

\begin{lemma} \label{lemma:det_exp_phi}
    Let $(u,w) \in \mc U$ and $K$ be a Volterra kernel of continuous and bounded type in $L^2_{\mathbb R}$. Then $t \mapsto d_t(u,w)$ is differentiable and satisfies the differential equation
    \begin{align}
        \begin{cases}
            \partial_t d_t(u,w) = -2 \phi_t(u,w) d_t(u,w), &  0 \leq t \leq T \\
            d_T(u,w) = \phi_T(u,w) = 0.
        \end{cases} \label{eq:diff_log_det}
    \end{align}
    Therefore, 
    \begin{equation} \label{eq:det_exp_phi}
        d_t(u,w) = e^{-2 \phi_t(u,w)}, \quad 0 \leq t \leq T.
    \end{equation}
\end{lemma}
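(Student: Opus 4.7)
The plan is to sidestep the logarithmic argument sketched in the earlier equation \eqref{eq:diff_log_det} (which fails in the complex regime precisely because $\log$ is discontinuous across $\mathbb R_-$) by differentiating $t \mapsto d_t(u,w) = \det(\bo{\Phi}_t(u,w))$ \emph{directly} through Jacobi's trace formula for Fredholm determinants, and then matching the resulting expression against $\partial_t \phi_t$ obtained from \eqref{eq:phi_t}. This gives the first order linear ODE $\partial_t d_t(u,w) = -2 \partial_t\phi_t(u,w)\,d_t(u,w)$ with terminal condition $d_T(u,w)=1$, $\phi_T(u,w)=0$, from which \eqref{eq:det_exp_phi} follows at once.

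First I would establish the trace-norm differentiability of $t \mapsto \bo{\Sigma}_t$. Using \eqref{eq:Sigma} and Definition~\ref{def:cont_bound_kernel}, the difference $\bo{\Sigma}_{t+h}-\bo{\Sigma}_t$ is, after division by $h$, induced by a kernel that converges to $-\nu^2 K(s,t)K(z,t)$; since this limiting kernel is of rank one (hence trace class), the convergence can be upgraded from the strong operator sense to the trace-norm sense. Conjugating by the fixed bounded inverses $(\id - b(u)\bo{K})^{-1}$ and $(\id - b(u)\bo{K}^*)^{-1}$, well-defined by Lemma~\ref{lemma:tsig_psi_well_def}, then transfers this to trace-norm differentiability of $t \mapsto \tsig_t(u)$ with
$$\partial_t \tsig_t(u) = (\id - b(u)\bo{K})^{-1}\dot{\bo{\Sigma}}_t(\id - b(u)\bo{K}^*)^{-1},$$
and thus of $\bo{\Phi}_t(u,w) = \id - 2a(u,w)\tsig_t(u)$.

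Next I would invoke Jacobi's identity for Fredholm determinants: if $A_t$ is trace-norm differentiable and $\id + A_t$ is invertible, then $\partial_t \det(\id + A_t) = \det(\id + A_t)\,\Tr\bigl((\id + A_t)^{-1}\partial_t A_t\bigr)$. Applied with $A_t = -2a(u,w)\tsig_t(u)$, using invertibility from Lemma~\ref{lemma:tsig_psi_well_def}, this yields
$$\partial_t d_t(u,w) = -2a(u,w)\,d_t(u,w)\,\Tr\bigl(\bo{\Phi}_t(u,w)^{-1}\partial_t \tsig_t(u)\bigr).$$
Inserting the expression for $\partial_t \tsig_t$, using the cyclic property of the trace, and recognizing the definition \eqref{eq:Psi_t} of $\bo{\Psi}_t$, the scalar prefactor and the surrounding inverses rearrange as
$$a(u,w)\,\Tr\bigl(\bo{\Phi}_t^{-1}(\id - b\bo{K})^{-1}\dot{\bo{\Sigma}}_t(\id - b\bo{K}^*)^{-1}\bigr) = \Tr(\bo{\Psi}_t\,\dot{\bo{\Sigma}}_t) = \partial_t \phi_t,$$
the last equality being the Leibniz rule applied to \eqref{eq:phi_t}. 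This establishes the announced ODE.

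For the terminal conditions, at $t=T$ both integrals \eqref{eq:Sigma} and \eqref{eq:phi_t} are over an empty set, giving $\tsig_T=0$, hence $\bo{\Phi}_T = \id$, $d_T = 1$, and $\phi_T = 0$. Setting $f(t) := d_t(u,w)\,e^{2\phi_t(u,w)}$, the ODE together with the chain rule gives $f' \equiv 0$ on $[0,T]$, and $f(T) = 1$ yields \eqref{eq:det_exp_phi}. The main obstacle I anticipate is the trace-norm --- rather than merely operator-norm --- differentiability required for Jacobi's formula to be legitimately applied; the rank-one structure of $\dot{\bo{\Sigma}}_t$ together with the boundedness and invertibility of the surrounding factors is what makes this step manageable, and the rest is algebra on traces of compositions of trace class and bounded operators.
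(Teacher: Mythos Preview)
Your proposal is correct and follows essentially the same route as the paper: apply Jacobi's formula for the derivative of a Fredholm determinant, use the cyclic property of the trace (justified via the rank-one, hence trace-class, structure of $\dot{\bo{\Sigma}}_t$ composed with bounded operators) to reorganize the factors into $\Tr(\bo{\Psi}_t\dot{\bo{\Sigma}}_t)=\partial_t\phi_t$, and solve the resulting linear ODE with the terminal data. You are also right to write the ODE as $\partial_t d_t = -2\,\partial_t\phi_t\,d_t$ and to take $d_T=1$; the displayed $\partial_t d_t = -2\phi_t d_t$ and $d_T=0$ in the lemma statement are typos that do not affect the conclusion \eqref{eq:det_exp_phi}.
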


\begin{proof}
    Let $(u,w) \in \mc U$. For the sake of readability, we omit all dependence on $u, w$. Differentiation using the derivative of the Fredholm determinant, see \citet[Chapter 4, p.158, 1.3]{gohberg69}, as well as \eqref{eq:tilde_Sigma} yields, for $0 \leq t \leq T$,
    \begin{equation}
        \partial_t d_t = d_t \Tr \left( -2a \left( \id - 2 \tsig_t a \right)^{-1} {{\boldsymbol{\dot{\tilde{\Sigma}}}}_t} \right) =-2 d_t \Tr \left( a \left( \id - 2 \tsig_t a \right)^{-1} \left(\id - b\bo{K}\right)^{-1} \dot{\bo{\Sigma}}_t \left(\id - b\bo{K}^*\right)^{-1} \right).
    \end{equation}
    Set $\bo{F} := \left( \id - 2 \tsig_t a \right)^{-1} \left(\id - b\bo{K}\right)^{-1} \dot{\bo{\Sigma}}_t$ and $\bo{G} := \left(\id - b\bo{K}^*\right)^{-1}$. Recalling \eqref{eq:dot_sig_t}, $\dot{\bo{\Sigma}}_t$ is an integral operator with a separable kernel, and is therefore of trace class, see Section \ref{subsubsec:recall_integral_op}. Moreover, since from Lemma \ref{lemma:tsig_psi_well_def}, $\left( \text{id} - 2 \tsig_t a \right)^{-1} (\text{id} - b\bo{K})^{-1}$ is a bounded linear operator, it follows from \citet[Chapter 9, Section 2, Exercise 20]{conway2019course} that $\bo{F}$ is of trace class. Furthermore, Lemma \ref{lemma:tsig_psi_well_def} also ensures that $\bo{G}$ is a bounded linear operator. Thus, applying \citet[Chapter 9, Section 2, Exercise 20]{conway2019course} once more, we obtain the identity  
    \begin{equation}
    \Tr(\bo{F}\bo{G}) = \Tr(\bo{G}\bo{F}),
    \end{equation}
    from which we obtain, combined with \eqref{eq:phi_t}, that
    \begin{equation} \label{eq:differential_det}
        \partial_t d_t = -2 \phi_t d_t, \quad 0 \leq t \leq T.
    \end{equation}
    
    Solving this first order linear differential equation, and using the terminal condition lead to Formula \eqref{eq:det_exp_phi}.
\end{proof}

Since the Fourier–Laplace transform \eqref{eq:char_func_trace_formula} depends on the quantity $e^{\phi_t(u,w)}$, one might be tempted to take the principal branch of the square-root in  \eqref{eq:det_exp_phi} in order to obtain the identity:
\begin{align}\label{eq:wrongforumla}
    `` e^{\phi_t(u,w)} = \frac{1}{\sqrt{d_t(u,w)}} ".
\end{align} 
As explained previously, this identity does not generally hold. It is nevertheless important to observe that the inconsistency here differs from the one arising when differentiating $t \mapsto \log(d_t(u,w))$ as in \eqref{eq:diff_log_det}. 
Indeed, the first case is static ($t$ is fixed) whereas the second is dynamic. More precisely, in the first case, the error was to believe that $\sqrt{e^{2z}}=e^z$ which is only true when $\Im(z) \in (-\frac{\pi}{2}, \frac{\pi}{2}]$, whereas in the second case, the error is to believe that as soon as a complex valued function $t \mapsto f(t)$ is differentiable, then so is $t \mapsto \log(f(t))$, which is true only if $f$ never crosses the negative real axis.

The following lemma establishes the relation between $e^{\phi_t(u,w)}$ and $\sqrt{d_t(u,w)}$.

\begin{lemma}\label{lemma:phi=sqrt_det} 
    Let $t\leq T$, $(u,w) \in \mc U$, and $K$ be a Volterra kernel of continuous and bounded type in $L^2_{\mathbb R}$. Then,    \begin{equation}\label{eq:phi=sqrt_det_rot_count}
        e^{\phi_t(u,w) } = \frac{e^{i\pi k_t(u,w)}}{\sqrt{d_t(u,w)}},
    \end{equation}
    with $k_t(u,w)\in \mathbb Z$ given in \eqref{eq:k_t(u,w)}. 
\end{lemma}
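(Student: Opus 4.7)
The plan is to build on Lemma~\ref{lemma:det_exp_phi}, which gives the clean identity $d_t(u,w) = e^{-2\phi_t(u,w)}$, and to extract the desired relation by taking the principal square root carefully, keeping track of the integer-valued branch ambiguity that appears because the square root is not the inverse of squaring once one leaves the right half plane. Essentially, Lemma~\ref{lemma:det_exp_phi} already establishes the identity up to a sign; the only work remaining is to identify this sign with $e^{i\pi n_t(u,w)}$ for $n_t(u,w)$ as defined in~\eqref{eq:n_t(u,w)} and to certify that $n_t(u,w)\in\bb Z$.

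Concretely, fix $t\leq T$ and $(u,w)\in\mc U$, and write $\phi_t(u,w) = \alpha + i\beta$ with $\alpha,\beta\in\bb R$. From Lemma~\ref{lemma:det_exp_phi} one has $|d_t(u,w)|=e^{-2\alpha}$ and $-2\beta$ is \emph{an} argument of $d_t(u,w)$, so the principal argument satisfies $\arg(d_t(u,w)) = -2\beta + 2\pi k$ for the unique integer $k$ making the right-hand side lie in $(-\pi,\pi]$. The definition of the principal square root then gives
\begin{equation}
    \sqrt{d_t(u,w)} \;=\; e^{-\alpha}\,e^{i\arg(d_t(u,w))/2} \;=\; e^{-\alpha}\,e^{-i\beta}\,e^{i\pi k}.
\end{equation}
Multiplying both sides by $e^{\phi_t(u,w)} = e^\alpha e^{i\beta}$ yields $e^{\phi_t(u,w)}\sqrt{d_t(u,w)} = e^{i\pi k}$, which rearranges to the claimed identity \eqref{eq:phi=sqrt_det_rot_count} provided one can identify the integer $k$ with $n_t(u,w)$.

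For this last step, I would simply plug the relation $\arg(d_t(u,w))/2 = -\beta + \pi k$ into the right-hand side of~\eqref{eq:n_t(u,w)}:
\begin{equation}
    \Im(\phi_t(u,w)) + \tfrac{1}{2}\arg(d_t(u,w)) \;=\; \beta + (-\beta + \pi k) \;=\; \pi k,
\end{equation}
so that $n_t(u,w) = k \in \bb Z$, which also justifies \emph{a posteriori} the integrality of $n_t(u,w)$ asserted in the statement.

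The argument is essentially bookkeeping of branches, and I do not anticipate any real obstacle beyond making sure the convention $\arg\in(-\pi,\pi]$ used in the definition of $\sqrt{\cdot}$ and in~\eqref{eq:n_t(u,w)} are consistent. Note that $d_t(u,w)$ is never zero (it equals $e^{-2\phi_t(u,w)}$), so the principal square root is always well defined; the case $d_t(u,w) \in (-\infty,0)$ is covered by the convention $\arg(d_t(u,w)) = \pi$ and the computation above goes through unchanged.
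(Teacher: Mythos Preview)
Your proof is correct and follows essentially the same route as the paper: both start from Lemma~\ref{lemma:det_exp_phi} to read off $|d_t|=e^{-2\Re(\phi_t)}$, apply the definition of the principal square root in polar form, and then identify the branch integer with $n_t(u,w)$ via~\eqref{eq:n_t(u,w)}, checking integrality from the fact that $-2\Im(\phi_t)$ is an argument of $d_t$. The only cosmetic difference is that you name the integer $k$ first and then match it to $n_t$, whereas the paper plugs the definition of $n_t$ in directly.
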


\begin{proof} Let $t\leq T$ and $(u,w)\in \mc{U}$. To ease notations we drop the dependence in $(u,w)$.
    We first observe that the identity  \eqref{eq:det_exp_phi} yields that $|d_t|=e^{- 2 \Re(\phi_t)}$, so that the polar form of $d_t$ reads  
    $$ d_t = e^{- 2 \Re(\phi_t)} e^{i \arg{(d_t)}}.$$
    By definition of the principal branch of the square-root, it follows that 
        $$ \sqrt{d_t} = e^{- \Re(\phi_t)} e^{i \frac{\arg{(d_t)}}{2}}  =  e^{- \phi_t} e^{i \left(\Im(\phi_t) + \frac{\arg{(d_t)}}{2}  \right)} =  e^{- \phi_t} e^{i\pi k_t(u,w)}, $$
    where the last equality comes from the definition of $k_t(u,w)$ in \eqref{eq:k_t(u,w)}. Finally, since from Formula \eqref{eq:det_exp_phi}, $-2\Im(\phi_t)$ is an argument of $d_t$, we have that $-2\Im(\phi_t) - \arg(d_t) \in 2\pi \mathbb Z$, or equivalently that $k_t(u,w) \in \mathbb Z$.
\end{proof}

The following lemma establishes that $(t,u,w) \mapsto \phi_t(u,w)$ is continuous. This will help prove, in Lemma \ref{lemma:no_rot_count_needed}, that for a fixed $t \leq T$ and any connected subset $\mc V \subset \mc U$ containing two real numbers $(x,y)$, if the trajectory of $(u,w) \in \mc V \mapsto d_t(u,w)$ does not cross the negative real axis, then $e^{i\pi k_t(u,w)}=1$ for all $(u,w) \in \mc V$, or equivalently, formula \eqref{eq:wrongforumla} holds on $\mc V$. 

\begin{lemma} \label{lemma:phi_det_C^0}
    Let $K$ be a Volterra kernel of continuous and bounded type in $L^2_{\mathbb R}$. The functions $(u,w,t) \in \mc U \times [0,T] \mapsto \phi_t(u,w)$  and $(u,w,t) \in \mc U \times [0,T] \mapsto d_t(u,w)$ are continuous.
\end{lemma}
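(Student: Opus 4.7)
The plan is to propagate continuity through the operator-algebraic definitions of $\tsig$, $\bo{\Phi}$, $\bo{\Psi}$ and $\phi$, working in the trace norm wherever the Fredholm determinant appears and in the uniform operator norm wherever we pair against a trace class operator. The two background facts I will use are: the Fredholm determinant is Lipschitz continuous on bounded subsets of $\id + \mc{B}_1(L^2_{\bb{C}})$ with respect to the trace norm (see Appendix~\ref{subsec:recall_trace_det}), and the functional $\bo{A} \mapsto \Tr(\bo{A}\bo{B})$ is continuous in $\|\cdot\|$ whenever $\bo{B}$ is of trace class, since $|\Tr(\bo{A}\bo{B})| \leq \|\bo{A}\|\,\|\bo{B}\|_{1}$.

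First, I would show that $t \mapsto \bo{\Sigma}_t$ is continuous in trace norm. Writing $\bo{\Sigma}_t$ via \eqref{eq:Sigma} as $\nu^2 \bo{K}_t \bo{K}_t^{*}$, where $\bo{K}_t$ is the integral operator with kernel $K(s,r)\mathbbm{1}_{r>t}$, the $L^2$-continuity of $K$ in Definition~\ref{def:cont_bound_kernel} gives continuity of $t \mapsto \bo{K}_t$ in Hilbert--Schmidt norm. Since the product of two Hilbert--Schmidt operators is of trace class and bilinearly continuous for the trace norm, this yields the claim. Next, for each $u \in \mc{U}$, Lemma~\ref{lemma:tsig_psi_well_def} asserts invertibility of $\id - b(u)\bo{K}$; since $u \mapsto b(u)\bo{K}$ is continuous in operator norm and inversion is continuous on the open set of invertible operators, $u \mapsto (\id - b(u)\bo{K})^{-1}$ is continuous in operator norm. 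Combining these with \eqref{eq:tilde_Sigma} and the submultiplicativity $\|\bo{A}\bo{C}\|_{1} \leq \|\bo{A}\|\,\|\bo{C}\|_{1}$, I obtain that $(t,u) \mapsto \tsig_t(u)$ is continuous in trace norm on $[0,T]\times\mc{U}$.

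From there, $(t,u,w) \mapsto \bo{\Phi}_t(u,w) - \id = -2 a(u,w)\tsig_t(u)$ is continuous in trace norm, so the Lipschitz property of $\det$ recalled above yields continuity of $d_t(u,w)$. Similarly, $(t,u,w) \mapsto \bo{\Psi}_t(u,w)$ defined by \eqref{eq:Psi_t} is continuous in operator norm: on any compact subset of $[0,T]\times\mc{U}$, Lemma~\ref{lemma:tsig_psi_well_def} plus continuity of $\tsig$ gives a uniform bound on $\|(\id - 2a(u,w)\tsig_t(u))^{-1}\|$, and all the other factors are continuous in operator norm by the previous step.

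Finally, for $\phi_t(u,w) = -\int_t^T \Tr(\bo{\Psi}_s(u,w)\dot{\bo{\Sigma}}_s)\,ds$, the operator $\dot{\bo{\Sigma}}_s$ is rank-one with $\|\dot{\bo{\Sigma}}_s\|_{1} \leq \nu^2 \|K(\cdot,s)\|_{L^2}^2$, which is integrable on $[0,T]$ by Definition~\ref{def:cont_bound_kernel}. The integrand $(s,u,w) \mapsto \Tr(\bo{\Psi}_s(u,w)\dot{\bo{\Sigma}}_s)$ is continuous by the previous paragraph and the trace-pairing bound, and is dominated locally uniformly in $(u,w,t)$ by a fixed $L^1$ function of $s$; dominated convergence then gives continuity in $(u,w)$, while continuity in $t$ follows from local boundedness of the integrand. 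The main obstacle I anticipate is the careful passage from operator-norm to trace-norm continuity for $\tsig_t(u)$: one has to verify that the Hilbert--Schmidt factorization really converts the $L^2$-continuity of $K$ into trace-norm continuity of the relevant products, and to track uniform bounds on compact sets so that the dominated-convergence step in the last paragraph is quantitatively justified.
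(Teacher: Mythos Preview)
Your proof is correct and, for $\phi_t$, follows essentially the same operator-algebraic propagation of continuity as the paper: continuity of $\bo{\Sigma}_t$ in $t$, of the resolvents $(\id - b(u)\bo{K})^{-1}$ in $u$, hence of $\tsig_t(u)$ and $\bo{\Psi}_t(u,w)$, then dominated convergence (the paper says ``continuity under the integral sign'') for the time integral defining $\phi_t$. The one genuine difference is how you treat $d_t$: you establish trace-norm continuity of $(t,u)\mapsto\tsig_t(u)$ and invoke the Lipschitz property of the Fredholm determinant directly, whereas the paper first proves continuity of $\phi_t$ and then reads off continuity of $d_t$ from the identity $d_t(u,w)=e^{-2\phi_t(u,w)}$ of Lemma~\ref{lemma:det_exp_phi}. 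Your route is self-contained and makes the relevant topologies explicit, which is arguably cleaner; the paper's route is shorter but leans on the preceding lemma. One small remark: the Lipschitz continuity of $\det$ in trace norm is not actually stated in Appendix~\ref{subsec:recall_trace_det}, so you would need an external citation (e.g.\ \cite{simon2005trace}) for that step.
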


\begin{proof}
We first establish the continuity of $\phi$. The continuity in $t$ follows directly from its definition. 
By continuity under the integral, it suffices to show that 
\begin{equation}
   (u,w,t) \in \mc U \times [0,T] \mapsto 
   \Tr(\bo{\Psi}_t(u,w)\dot{\bo{\Sigma}}_t)
   = \Tr(\dot{\bo{\Sigma}}_t\bo{\Psi}_t(u,w))
\end{equation}
is continuous.

For any $t \in [0,T]$ and $(u,w) \in \mc U$, since 
\begin{equation}
   \dot{\Sigma}_t(s,z) = -\nu^2 K(s,t)K(z,t),
\end{equation}
the kernel $\dot{\Sigma}_t$ is separable and has rank at most one (see Section~\ref{subsubsec:recall_integral_op}). 
The image of the induced operator $\dot{\bo{\Sigma}}_t$ is thus spanned by $K(\cdot,t)$, and 
$\dot{\bo{\Sigma}}_t\bo{\Psi}_t(u,w)$ has at most one non-zero eigenvalue, corresponding to this vector.  
By Lidskii’s theorem \citet[Theorem~3.7]{simon2005trace}, the continuity of the trace 
is therefore equivalent to the continuity of this single eigenvalue. By \citet[Chapter~1, Theorem~4.2]{gohberg69}, it is enough to prove the continuity of 
\begin{equation}
   (u,w,t) \mapsto \bo{\Psi}_t(u,w)\dot{\bo{\Sigma}}_t
\end{equation}
with respect to the uniform norm on $\mc B(L^2_{\mathbb C})$.  
We treat the two factors separately.

Recalling the definition \eqref{eq:Psi_t} of $\bo{\Psi}_t(u,w)$, its continuity in $(u,w)$ follows from that of 
\begin{equation}
   (u,w) \mapsto (a(u,w), b(u)),
\end{equation}
and from the continuity of the inverse in the uniform norm. 
The continuity in $t$ follows from that of $t \mapsto \Sigma_t$ in $L^2_{\mathbb R}$, 
which is immediate from \eqref{eq:Sigma}.  
For the operator $\dot{\bo{\Sigma}}_t$, it suffices also to establish the continuity of $t \mapsto \dot{\Sigma}_t$ in $L^2_{\mathbb R}$: For any $h>0$,
\begin{equation}
   \doublenorm{\dot{\Sigma}_{t+h}-\dot{\Sigma}_t}_{L^2_{\mathbb R}}^2
   = \nu^4 \int_0^T (K(s,t+h)^2 - K(s,t)^2)^2 ds 
   \xrightarrow{h \to 0} 0,
\end{equation}
since $K$ is of continuous type in $L^2_{\mathbb R}$. 
This concludes for the continuity of $\phi$.

We now turn to the continuity of $d_t(u,w)$.  
From \citet[Section~4.1]{bornemann2010numerical}, it suffices to show that 
\begin{equation}
   (u,w,t) \mapsto \tsig_t(u,w)
\end{equation}
is continuous in the trace norm (see Section~\ref{sec:recall_trace_det}).  
We have the estimate
\begin{equation}
   \doublenorm{\tsig_t(u,w)}_1 
   \leq \doublenorm{(\id - b(u)K)^{-1}}_{L^2_{\mathbb C}}
        \doublenorm{(\id - b(u)K^*)^{-1}}_{L^2_{\mathbb C}}
        \doublenorm{\bo{\Sigma}_t}_1,
\end{equation}
and the same argument as above yields continuity in $(u,w)$.  
For the trace-norm continuity in $t$, let $h > 0$, and, since from Lemma \ref{lemma:tsig_psi_well_def} (applied with $u=0$), $\bo{\Sigma}_{t+h} - \bo{\Sigma}_t$ has a continuous kernel, we have from Mercer's theorem \citet[Theorem 8.1]{gohberg2012traces}, that 
\begin{equation}
   \Tr(\bo{\Sigma}_{t+h}-\bo{\Sigma}_t)
   = \int_0^T (\Sigma_{t+h}(s,s) - \Sigma_t(s,s)) ds.
\end{equation}
Moreover, observe that
\begin{equation}
   \Sigma_{t+h}(s,z) - \Sigma_t(s,z)
   = \nu^2 \int_t^{t+h} K(s,r)K(z,r)\,dr, \quad 0 \le s,z \le T,
\end{equation}
so that $\bo{\Sigma}_{t+h}-\bo{\Sigma}_t$ is a positive symmetric operator. Consequently,
\begin{align}
   \doublenorm{\bo{\Sigma}_{t+h}-\bo{\Sigma}_t}_1 
   &= \Tr(\bo{\Sigma}_{t+h}-\bo{\Sigma}_t) \\
   &= \nu^2 \int_t^{t+h} \int_0^T |K(s,r)|^2 ds\,dr 
   \xrightarrow{h \to 0} 0,
\end{align}
by the monotone convergence theorem, which concludes the proof.
\end{proof}

\begin{lemma} \label{lemma:no_rot_count_needed}
    Let $t\leq T$ and a connected subset $\mc V \subset \mc U$ containing two real numbers $(x,y)$ where $0 \leq x \leq 1$ and $0 \leq y$. Let $K$ be a Volterra kernel of continuous and bounded type in $L^2_{\mathbb R}$. If $d_t: \mc V \to \mathbb C$ never crosses the negative real axis, i.e.~$d_t(\mc V) \subset \mathbb C \backslash (-\infty, 0)$,  then, 
    
    \begin{equation}
        e^{\phi_t(u,w)} = \frac{1}{\sqrt{d_t(u,w)}}, \quad (u, w)\in \mc V. 
    \end{equation}
\end{lemma}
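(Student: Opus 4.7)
The plan is to leverage Lemma~\ref{lemma:phi=sqrt_det}, which already establishes
$$ e^{\phi_t(u,w)} = \frac{e^{i\pi n_t(u,w)}}{\sqrt{d_t(u,w)}} $$
with $n_t(u,w)\in \bb Z$ given by \eqref{eq:n_t(u,w)}. The desired identity then reduces to showing that $n_t(u,w)=0$ for every $(u,w)\in \mc V$. I will obtain this by a standard connectedness-plus-evaluation argument: continuity will force $n_t$ to be constant on $\mc V$, and evaluation at the real point $(x,y)$ will pin that constant to zero.

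The first step is to show that $(u,w)\mapsto n_t(u,w)$ is continuous on $\mc V$. By Lemma~\ref{lemma:phi_det_C^0}, both $\phi_t$ and $d_t$ are continuous in $(u,w)$. Crucially, \eqref{eq:det_exp_phi} gives $d_t(u,w)=e^{-2\phi_t(u,w)}$, so $d_t$ never vanishes on $\mc U$. Combined with the standing hypothesis $d_t(\mc V)\subset \bb C\setminus(-\infty,0)$, I obtain $d_t(\mc V)\subset \bb C\setminus(-\infty,0]$, the slit plane on which the principal argument is continuous. Hence $(u,w)\mapsto \arg(d_t(u,w))$ is continuous on $\mc V$, and therefore so is
$$ n_t(u,w) = \frac{1}{\pi}\left(\Im(\phi_t(u,w)) + \tfrac{1}{2}\arg(d_t(u,w))\right). $$
Being integer-valued and continuous on the connected set $\mc V$, $n_t$ must be constant.

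The second step is to evaluate $n_t$ at the real point $(x,y)\in \mc V$ with $0\leq x\leq 1$ and $0\leq y$. For such $(x,y)$, the coefficients $a(x,y)$ and $b(x)$ in \eqref{eq:a,b} are real, and the kernels of $\bo{K}$, $\bo{\Sigma}_s$, $\dot{\bo{\Sigma}}_s$ are real-valued. Accordingly, the operators $(\id-b(x)\bo{K})^{-1}$ and $(\id - 2a(x,y)\tsig_s(x))^{-1}$ (well defined by Lemma~\ref{lemma:tsig_psi_well_def}) have real kernels, so $\bo{\Psi}_s(x,y)\dot{\bo{\Sigma}}_s$ is an integral operator with a real-valued kernel, whose trace is real by \eqref{eq:TR}. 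Integrating over $s\in[t,T]$ yields $\phi_t(x,y)\in \bb R$, so $\Im(\phi_t(x,y))=0$. Moreover \eqref{eq:det_exp_phi} gives $d_t(x,y)=e^{-2\phi_t(x,y)}>0$, hence $\arg(d_t(x,y))=0$. Plugging these into \eqref{eq:n_t(u,w)} gives $n_t(x,y)=0$, and combined with the first step, $n_t\equiv 0$ on $\mc V$, which proves the claim.

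The main obstacle is really the continuity step: one must guarantee that $\arg\circ d_t$ is well-defined and continuous throughout $\mc V$. This rests on the combination of the non-vanishing of the Fredholm determinant (a consequence of \eqref{eq:det_exp_phi}) and the hypothesis that $d_t$ avoids the open negative half-line, which together exclude the branch point $0$ and the branch cut. The real-point evaluation is then nearly automatic from the realness of the kernels appearing in \eqref{eq:phi_t}.
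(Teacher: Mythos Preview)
Your proof is correct and follows essentially the same approach as the paper: invoke Lemma~\ref{lemma:phi=sqrt_det}, use continuity of $\phi_t$, $d_t$, and $\arg\circ d_t$ on the slit plane to force the integer-valued $n_t$ to be constant on the connected set $\mc V$, then evaluate at the real point to identify that constant as zero. Your treatment is in fact slightly more careful than the paper's in two places: you explicitly rule out $d_t=0$ via \eqref{eq:det_exp_phi} before invoking continuity of $\arg$, and you spell out why $\phi_t(x,y)\in\bb R$ from the realness of the kernels, whereas the paper simply asserts this and appeals to positivity of $\tsig_t(x)$ for $d_t(x,y)>0$.
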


\begin{proof}
Let $t\leq T$ and $(u,w)\in \mc U$. Then, an application of Lemma~\ref{lemma:phi=sqrt_det} yields the existence of $k_t(u,w) \in \mathbb Z$ such that 
    \begin{align}\label{eq:equalityn}
    2 \pi k_t(u,w)  = 2 \Im(\phi_t(u,w)) + \arg(d_t(u,w))
    \end{align}
    and
    \begin{align}
        e^{-\phi_t(u,w) } = \sqrt{d_t(u,w)} e^{-i\pi k_t(u,w)}.
    \end{align}
    It remains to argue that $k_t(u,w)=0$. First, from Lemma \ref{lemma:phi_det_C^0}, $\phi_t$ and $d_t$ are continuous in $(u,w)$. Moreover, observe, that by assumption, since $d_t:\mc V\to \mathbb C$, never crosses the negative real axis,   we have the strict inequality 
    $$ -\pi <\arg (d_t(u,w)) <\pi, \quad (u,w) \in \mc V, $$
    showing that $\arg (d_t)$ is necessarily continuous in $(u,w) \in \mc V$.  Hence, from \eqref{eq:equalityn}, it follows that the function $k_t: \mc V \to \mathbb Z$ is continuous, and since $\mc V$ is connected, $k_t$ must be constant on $\mc V$. In other words, $k_t(u, w) = k_t(x, y)$ for all $(u, w) \in \mc V$. It remains to observe that $d_t(x,y)>0$ since $\tsig_t(x,y)$ is a positive operator, and that $\phi_t(x,y) \in \mathbb R$, to deduce again from \eqref{eq:equalityn} that $k_t(0,0)=0$, and hence $k_t(u,w) \equiv 0$ on $\mc V$. 
\end{proof}

Combining the above we can now conclude the proof of Theorem~\ref{thm:char_fun_sqrt_det_rot_count}.

\begin{proof}[Proof of Theorem~\ref{thm:char_fun_sqrt_det_rot_count}]
    Equation \eqref{eq:char_func_det_formula} follows directly by inserting \eqref{eq:phi=sqrt_det_rot_count} into \eqref{eq:char_func_trace_formula}. Subsequently, \eqref{eq:char_fun_sqrt_det} is obtained by applying Lemma \ref{lemma:no_rot_count_needed} to \eqref{eq:char_func_det_formula}.
\end{proof}

\section{Crossing behavior of the Fredholm determinant}\label{Sec:det_crossing}

The aim of this section is to provide an answer to the first two questions following  Theorem~\ref{thm:char_fun_sqrt_det_rot_count} regarding the crossing behavior.
It turns out that in the context of  computing the Fourier–Laplace transform of finite dimensional Wishart distributions,  the question has been partially studied for the determinant of matrices by \cite{mayerhofer2019reforming}. In fact, a formula similar to \eqref{eq:char_fun_sqrt_det} links the Fourier–Laplace transform of the Wishart distribution to the square root of the determinant of a certain matrix, see \citet[Equation (1.2)]{mayerhofer2019reforming}. The problem differs from ours, as it is static, in the sense that it concerns only one marginal at a given time $T$, and concerns finite dimensions, whereas we work with operators of possibly infinite rank and integrations in the time variable. However, the finite dimensional case is highly instructive and serves as an introduction to our answer to the crossing behavior question for the Fredholm determinants. 

In his article, \cite{mayerhofer2019reforming} showed by mean of examples, that when the rank of the matrix is less than or equal to 2, the determinant does not cross the negative real axis. However, when the rank is 3 or greater, the determinant may cross the negative real axis, leading to a discontinuous Fourier–Laplace transform, as the discontinuities of the complex square root are not handled by this formula. Therefore, even in this simpler case, the answer to the first question seems to be "yes". 

Finally, the paper of \cite{mayerhofer2019reforming} does not provide conditions under which the formula for the Fourier–Laplace transform is valid or not, and didn't provide a corrected formula as done by \eqref{eq:char_func_det_formula} in the context of operators. Instead, it presents an alternative formula, which corresponds exactly to the trace formula, see \citet[Theorem 1.1]{mayerhofer2019reforming}, which is numerically less suitable than the determinant formula as already explained.
 
Interestingly, the same rank condition appears in our operator case. To see this, it suffices to consider the spectral decomposition of the covariance kernel $\Sigma_t$ as illustrated in the next section. 

For additional details on compact operators and their rank, see Appendix \ref{sec:recall_trace_det}.

\subsection{The rank condition: an illustration} \label{subsec:rank_cond_illustration}

 To illustrate the rank condition, we set $g_0 \equiv 0$ and $\kappa = 0$, and examine the Fourier–Laplace transform of the integrated variance for $w \in \mathbb{C}$ with $\Re(w) \leq 0$, that is, we analyze $w \mapsto \mathbb{E} \left[ \exp \left(w \int_0^T X_s^2 \, ds \right) \right]$, where $X_t = \nu \int_0^T K(t,s) \, dW_s$. Therefore, $u$ is set to 0 and $b(u)=b=0 \in \mathbb R$ in \eqref{eq:tilde_Sigma}, so that $\tsig_0=\bo{\Sigma}_0$ is a self-adjoint operator. Moreover, recalling Lemma \ref{lemma:tsig_psi_well_def}, $\bo{\Sigma}_0$ is also an integral operator, so is compact. Thus, by the spectral theorem, $\bo{\Sigma}_0$ has a spectral decomposition:
\begin{equation}
    \Sigma_0(s,z) = \sum_{n=1}^M \lambda_n e_n(s) e_n(z), \quad 0 \leq s,z \leq T,
\end{equation}
where $M \in \{1, \dots, +\infty\}$ is the rank of $\bo{\Sigma}_0$, $(e_n)_{0 \leq n < \infty}$ is an orthonormal basis of $L^2([0,T], \mathbb{R})$ consisting of eigenvectors of $\bo{\Sigma}_0$, and $(\lambda_n)_{1 \leq n \leq M}$ are the corresponding eigenvalues, which satisfy, as $\bo{\Sigma}_0$ is a positive Hilbert-Schmidt operator,
\begin{align}
    \lambda_1 \geq \lambda_2 \geq \dots \geq 0, \quad     \sum_{n=1}^M \lambda_n^2 < \infty.
\end{align}
Since $\Sigma_0$ is the covariance kernel of $(X_t)_t$, the \citeauthor{karhunen1947under}-\citeauthor{loeve1978probability} expansion theorem yields
\begin{equation} \label{eq:X_t_Loeve_representation}
    X_t = \sum_{n=1}^M Y_n e_n(t), \quad 0 \leq t \leq T,
\end{equation}
where $Y_n := \int_0^T X_t e_n(t) \, dt$ defines a sequence of independent, zero-mean Gaussian random variables with respective variance $\lambda_n$. Additionally, from \eqref{eq:X_t_Loeve_representation} and Parseval's identity, we  obtain 
\begin{equation}
    \int_0^T X_s^2 \, ds = \sum_{n=1}^M Y_n^2.
\end{equation}
Finally, since the $(Y_n)_{1 \leq n \leq M}$ are independent, normally distributed with zero mean and variance $\lambda_n$, and that $\Re(w)<0$, we have by the monotone convergence theorem ($M$ is possibly infinite), that
\begin{equation} \label{eq:char_func_int_var_example_1}
    \mathbb{E} \left[ \exp \left(w \int_0^T X_s^2 \, ds \right) \right] = \frac{1}{\prod_{n=1}^M \sqrt{1 - 2w\lambda_n}}, \quad w \in \mathbb{C}, \quad  \Re(w) \leq 0,
\end{equation}
where $\sqrt{\cdot}$ is the principal branch of the square root, and where we used that $\mathbb{E} \left[ \exp \left(w Y_n^2 \right) \right]=\frac{1}{\sqrt{1 - 2w\lambda_n}}$. On the other hand, notice that for $\kappa = u = 0$, we have that  
\begin{align}\label{eq:detproduct}
 \sqrt{ \det(\bo{\Phi}_t(w))} = \sqrt{\det(\id - 2w\bo{\Sigma}_0)} = \sqrt{\prod_{n=1}^M (1 - 2w\lambda_n)},  
\end{align}
so that the denominator in \eqref{eq:char_func_int_var_example_1} would collapse into $\sqrt{ \det(\bo{\Phi}_t(w))}$ if and only if the product of the square roots of $(1-2w\lambda_n)_n$ is equal to the square root of the product.  In the complex plane this is not always the case: for complex numbers $z_1, \dots, z_M$, we have
\begin{equation}
    \sqrt{\prod_{n=1}^M z_n} = e^{ik\pi} \prod_{n=1}^M \sqrt{z_n},
\end{equation}
where $k = \frac{1}{2\pi} \left( \arg \left( \prod_{n=1}^M z_n \right) - \sum_{n=1}^M \arg(z_n) \right)$ is an integer. In particular, 
\begin{align}
    -\pi < \sum_{n=1}^M \arg(z_n) \leq \pi \Rightarrow  \sqrt{\prod_{n=1}^M z_n} = \prod_{n=1}^M \sqrt{z_n}.
\end{align}
Combining the above, we arrive at the following rank condition:

\begin{itemize}
    \item \textbf{Rank $M \leq 2$:}
    since $1 - 2\Re(w)\lambda_n > 0$, we have $\arg \left( \sqrt{1 - 2w\lambda_n} \right) \in \left(-\frac{\pi}{2}, \frac{\pi}{2}\right)$. This implies that when the rank $M$ of $\bo{\Sigma}_0$ is less than 2, $\sum_{n=1}^M \arg \left( \sqrt{1 - 2w\lambda_n} \right) \in (-\pi, \pi)$, so that
    \begin{equation}
        \prod_{n=1}^M \sqrt{1 - 2w\lambda_n} = \sqrt{\prod_{n=1}^M (1 - 2w\lambda_n)}
    \end{equation}
    and it follows from \eqref{eq:char_func_int_var_example_1} and  \eqref{eq:detproduct} that  for $M\leq 2$ and for all $w\in \mathbb C$ such that $\Re(w)\leq 0$:
    \begin{equation} 
     \mathbb{E} \left[ \exp \left(w \int_0^T X_s^2 \, ds \right) \right] =  \frac{1}{\sqrt{\det(\bo{\Phi}_t(w))}}.
    \end{equation}
    \item \textbf{Rank  $M \geq 3$:} we have, for $w$ sufficiently big, that $3\pi > \left|\sum_{n=1}^M \sqrt{1 - 2w\lambda_n}\right| > \pi$, resulting in $e^{ik(w)\pi}=-1$, that is
    $$
    \prod_{n=1}^M \sqrt{1 - 2w\lambda_n} = e^{-ik(w)\pi} \sqrt{\prod_{n=1}^M (1 - 2w\lambda_n)} = -\sqrt{\prod_{n=1}^M (1 - 2w\lambda_n)},
    $$
    which combined with equation \eqref{eq:char_func_int_var_example_1} yields that 
    \begin{equation} \label{eq:char_func_int_var_example_2}
        \mathbb{E} \left[ \exp \left(w \int_0^T X_s^2 \, ds \right) \right] = - \frac{1}{\sqrt{\det(\bo{\Phi}_t(w))}} \neq \frac{1}{\sqrt{\det(\bo{\Phi}_t(w))}},
    \end{equation}
    which is precisely formula \eqref{eq:char_func_det_formula} of Theorem \ref{thm:char_fun_sqrt_det_rot_count}, provided we can show that $k(w) = n_0(w)$. This shows that when the rank of $\bo{\Sigma}_0$ is greater than 2, then formula \eqref{eq:char_fun_sqrt_det} isn't valid, meaning that the determinant has crossed the negative real axis.
\end{itemize}

Note that Formula~\eqref{eq:char_func_int_var_example_1} can be rewritten as
\begin{equation}\label{eq:intvariancedetsqrt}
    \mathbb{E} \left[ \exp \left(w \int_0^T X_s^2 \, ds \right) \right] = \frac{1}{\det\left(\sqrt{\id - 2w\bo{\Sigma}_0}\right)},
\end{equation}
This formulation removes the sign ambiguity associated with the $\pm1$ prefactor, by taking the determinant of the matrix square root instead of the square root of the determinant.
While this manipulation is straightforward in the integrated variance case, the general proof is considerably more delicate, since one cannot, in general, rely on a spectral decomposition as for $\bo{\Sigma}_0$.
In finite dimension, after discretizing the operators by matrices, we are nevertheless able to obtain such a prefactor-free formula. This is presented in Section~\ref{Sec:hybrid_lip_algos}, as already mentioned in Introduction and Section~\ref{subsec:det_formula}.

\subsection{
Main results on the crossing behavior} \label{subsec:condition_on_crossings}

Now that we have developed our intuition for the rank condition, we can answer the crossing behavior questions. In Theorem \ref{thm:crossing_neg_axis_integrated_variance} and \ref{thm:crossing_neg_axis_log_price}, we characterize the crossings of the negative real axis $\mathbb{R}_-$ by the determinant, or equivalently, when formula \eqref{eq:char_fun_sqrt_det} is not valid and must be replaced by \eqref{eq:char_func_det_formula}. We also give a tight upper bound for the first crossing moment. As already mentioned, referring to formula \eqref{eq:k_t(u,w)}, the crossings of the determinant are totally related to the value of $e^{i\pi k_t}=\pm 1$. Therefore, all properties will be expressed in terms of the value of $e^{i\pi k_t}$. 

For $0 \leq t \leq T$ and  $u \in \mathbb C$ with $\rho \Im(u) = 0$, the  operator $\tsig_t(u)$ is compact (since it is of trace class from Lemma \ref{lemma:tsig_psi_well_def}, see Appendix \ref{sec:recall_trace_det}), we can denote its eigenvalues by $(\lambda_{n,t}(u))_{n \geq 1}$, counting multiplicity, see Appendix \ref{sec:recall_trace_det} for details. Moreover, since it is symmetric positive, we can order them as 
\begin{equation}
    \lambda_{1,t}(u) \geq \lambda_{2,t}(u) \geq \dots \geq 0,
\end{equation}
where we set $\lambda_{n,t}(u) = 0$ for $n > N(\tsig_t(u))$. Thus, $\lambda_{n,t}(u) > 0$ for $n \leq N(\tsig_t(u))$, where $N(\tsig_t(u))$ denotes the rank of $\tsig_t(u)$, see Appendix \ref{sec:recall_trace_det}. 

The first theorem concerns the Fourier–Laplace transform in the \textit{integrated variance case} $\mathbb{E} \left[ \exp \left(w \int_t^T X_s^2 \, ds \mid \mathcal F_t \right) \right]$, i.e. we set $u=0$.

\begin{theorem}[\textbf{Integrated variance case}] \label{thm:crossing_neg_axis_integrated_variance}
    Let $0 \leq t \leq T$, and let $K$ be a Volterra kernel of continuous and bounded type in $L^2_{\mathbb R}$. Fix $u=0$ and $\Re(w) \leq 0$, so that $\tsig_t$ no longer depends on $u$. Then \begin{itemize}
        \item If $N(\tsig_t) \in \{1, 2\}$, then $e^{i\pi k_t(w)} = 1$ for all $\Im(w) \in \mathbb{R}$.
        \item If $N(\tsig_t) \geq 3$, then $e^{i\pi k_t(w)} = -1$ for $\Im(w)$ in a union of disjoint intervals with strictly positive length. If $N(\tsig_t) = +\infty$, this union is infinite and unbounded. Moreover, define \hbox{$\Im^*(w) := \inf \{\Im(w)>0, \, e^{i\pi k_t(w)}=-1\}$}, and for $r>0$, define $N_r$ as the number of eigenvalues of $\tsig_t$ greater than $r$. Then, if $N_r \geq 3$, we have
        \begin{equation} \label{ineq:first_crossing_int_variance}
            \Im^*(w) \leq \tan\left(\frac{\pi}{N_r}\right)\left(\frac{1}{2r}-\Re(w)\right).
        \end{equation}
    \end{itemize}
\end{theorem}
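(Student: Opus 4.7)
The plan is to exploit the spectral decomposition of $\tsig_t := \tsig_t(0)$, which by Lemma~\ref{lemma:tsig_psi_well_def} is a positive, self-adjoint, trace-class operator. Denoting its nonzero eigenvalues by $(\lambda_n)_{1\leq n\leq M}$ with $M = N(\tsig_t)$, the Fredholm determinant factorises as
$$ d_t(0,w) = \det(\id - 2w\tsig_t) = \prod_{n=1}^M (1 - 2w\lambda_n). $$
Since $\Re(w) \leq 0$, every factor $1-2w\lambda_n$ has real part at least $1$, so its principal logarithm and principal argument are well defined and continuous in $w$. The key first step is the identification
$$ \phi_t(0, w) = -\tfrac{1}{2} \sum_{n=1}^M \log(1 - 2w\lambda_n), \qquad \Re(w) \leq 0. $$
I would establish this by calling the right-hand side $\psi(w)$ and using the estimate $|\log(1-z)| \leq 2|z|$ for $|z|\leq 1/2$ together with $\sum_n \lambda_n = \Tr(\tsig_t) < \infty$ to get uniform convergence of the tail on compact subsets of $\{\Re w \leq 0\}$, hence continuity of $\psi$. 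By construction $e^{-2\psi} = d_t(0,\cdot)$ and $\psi(0)=0$. Combined with Lemma~\ref{lemma:det_exp_phi} (yielding $e^{-2\phi_t(0,\cdot)} = d_t(0,\cdot)$), Lemma~\ref{lemma:phi_det_C^0} (continuity of $\phi_t(0,\cdot)$), and the initial value $\phi_t(0,0)=0$, the function $\phi_t(0,w)-\psi(w)$ takes values in $i\pi\mathbb{Z}$, is continuous on the connected set $\{\Re w\leq 0\}$, and vanishes at the origin, hence is identically zero.

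Inserting this identification into \eqref{eq:n_t(u,w)} yields
$$ 2\pi n_t(w) = \arg\Big(\prod_{n=1}^M (1-2w\lambda_n)\Big) - \sum_{n=1}^M \arg(1-2w\lambda_n), $$
so that $n_t(w)$ is the unique integer $k$ with $\sum_{n=1}^M \arg(1-2w\lambda_n) - 2\pi k \in (-\pi, \pi]$. Each summand equals $-\arctan\!\big(\tfrac{2\Im(w)\lambda_n}{1-2\Re(w)\lambda_n}\big) \in (-\pi/2, \pi/2)$ and is strictly decreasing in $\Im(w) > 0$. If $M \leq 2$, the total sum remains in $(-\pi, \pi)$ for all $\Im(w) \in \mathbb{R}$, forcing $n_t \equiv 0$ and $e^{i\pi n_t} \equiv 1$. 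If $M \geq 3$, strict monotonicity and the intermediate value theorem produce isolated crossings of each odd multiple of $-\pi$ by the sum; between two consecutive crossings $n_t$ is constant, so $\{\Im(w) : e^{i\pi n_t(w)} = -1\}$ is a disjoint union of non-degenerate intervals. When $M = +\infty$, fixing $M'$ and choosing $\Im(w)$ large enough so that the first $M'$ summands each lie below $-\pi/2+\varepsilon$, while all remaining terms are non-positive, forces the sum below $-M'(\pi/2-\varepsilon)$; letting $M'\to\infty$ gives infinitely many crossings and an unbounded union.

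For the upper bound on $\Im(w^*)$, it suffices to exhibit one positive $\Im(w)$ at which $-\sum_n \arg(1-2w\lambda_n) \geq \pi$, because strict monotonicity then guarantees that the sum crossed $-\pi$ earlier. Using $\lambda_n \geq r$ for $n \leq N_r$ together with the monotonicity of $\lambda \mapsto \tfrac{2\Im(w)\lambda}{1-2\Re(w)\lambda}$ on $[0,\infty)$, the sum is bounded below by $N_r \arctan\!\big(\tfrac{2\Im(w) r}{1-2\Re(w) r}\big)$. Since $N_r \geq 3$ gives $\pi/N_r < \pi/2$, setting this lower bound equal to $\pi$ is feasible and yields $\tfrac{2\Im(w) r}{1-2\Re(w) r} = \tan(\pi/N_r)$, i.e.\ $\Im(w) = \tan(\pi/N_r)\big(\tfrac{1}{2r}-\Re(w)\big)$, which therefore dominates $\Im(w^*)$.

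The main obstacle is the initial identification of $\phi_t(0,w)$ with the summed series of principal logarithms: the pointwise identity $d_t=e^{-2\phi_t}$ from Lemma~\ref{lemma:det_exp_phi} determines $\phi_t$ only modulo $i\pi\mathbb{Z}$, and resolving this ambiguity requires continuity of both $\phi_t(0,\cdot)$ and of the explicit summed primitive on a common connected region, which in the infinite-rank case hinges on trace-class summability of the eigenvalues. Once this identification is in place, the remainder of the argument reduces to elementary monotonicity and trigonometric manipulations.
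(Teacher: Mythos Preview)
Your proposal is correct and follows a somewhat different path from the paper. The crux in both arguments is identifying the continuous lift $\theta_t(w) := -2\Im(\phi_t(0,w))$ with the explicit series $\sum_n \arg(1-2w\lambda_n)$; once this is done, the case analysis on $N(\tsig_t)$ and the quantitative bound \eqref{ineq:first_crossing_int_variance} follow by the same elementary monotonicity and arctangent estimates you give.

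The paper resolves the $i\pi\mathbb{Z}$ ambiguity by varying $t$: it shows (Lemmas~\ref{lemma:majo_lambda_n}--\ref{lemma:lambda_n_C^0}) that the eigenvalues $\lambda_{n,t}$ are continuous and monotone in $t$, whence the series $\tilde\theta_t := \sum_n \arctan(y_{n,t}/x_{n,t})$ is continuous in $t$; since $\theta_t$ is also continuous in $t$ (Lemma~\ref{lemma:phi_det_C^0}) and both vanish at $t=T$, they coincide (Lemma~\ref{lemma:two_C^0_equal_rep}). You instead vary $w$ at fixed $t$: you build the explicit primitive $\psi(w) = -\tfrac12\sum_n\log(1-2w\lambda_n)$, argue its continuity on $\{\Re w\le 0\}$ from trace-class summability, and match it to $\phi_t(0,\cdot)$ via $e^{-2\phi_t}=e^{-2\psi}=d_t$ (Lemma~\ref{lemma:det_exp_phi}), continuity of $\phi_t$ in $w$ (Lemma~\ref{lemma:phi_det_C^0}), and the common value $0$ at $w=0$. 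Your route is more economical for this particular theorem, since it bypasses the entire $t$-dependence machinery of Section~\ref{subsubsubsec:spectrum_study_lambda}; the paper's route, on the other hand, packages the identification into a single lemma (Lemma~\ref{lemma:two_C^0_equal_rep}) valid for all $(u,w)\in\mc U$ with $\rho\Im(u)=0$, which then serves both Theorem~\ref{thm:crossing_neg_axis_integrated_variance} and the log-price Theorem~\ref{thm:crossing_neg_axis_log_price} uniformly.
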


\begin{proof}
    The proof is given in Section \ref{sec:proof_thm_crossing}.
\end{proof}

The next theorem addresses the \textit{log-price} case, $ \mathbb{E} \left[ \exp \left(u \log \frac{S_T}{S_t} \right) \mid \mathcal F_t \right]$, i.e. we set $w=0$, which is more intricate to analyze because, when $\rho \neq 0$, $\tsig_t$ depends on $u$. Therefore, we first establish the properties in the case $\rho = 0$ and then extend them, in Corollary \ref{cor:crossing_neg_axis_log_price}, to a continuous range $\rho \in [-\epsilon, \epsilon]$ using a continuity argument. 

\begin{remark}
    Writing, from the dynamics of $(S_t)_{t \geq 0}$ and the fact that we fixed $\rho = 0$, that
    \begin{equation}
        \log\left(\frac{S_T}{S_t}\right) = -\frac{1}{2}\int_t^T X_s^2 ds + \int_t^T X_s dW_s^{\perp},
    \end{equation}
    and using the tower property by projecting onto the trajectory of the volatility process $X$, one can show that
    \begin{equation} \label{eq:fourier_laplace_log_price_alternative_form}
        \mathbb{E} \left[ \exp \left( u \log \frac{S_T}{S_t}\right) \bigg| \mc{F}_t \right] = \mathbb{E} \left[ \exp \left( \frac{u^2-u}{2} \int_t^T X_s^2 ds \right) \bigg| \mc{F}_t \right].
    \end{equation}
    Noticing that \eqref{eq:fourier_laplace_log_price_alternative_form} corresponds to the Fourier–Laplace transform in the integrated variance case when setting $w=\frac{u^2-u}{2}$, one could argue that since $\Im(w)=\Im\left(\frac{u^2-u}{2}\right) = \Im(u)(\Re(u)-\frac{1}{2})$ is linear in $\Im(u)$, the behavior of the prefactor $\Im(u) \mapsto e^{i\pi k_t(u)}$ can be deduced from Theorem \ref{thm:crossing_neg_axis_integrated_variance}, and will be the same as in the integrated variance case. However, notice that in this case, $\Re(w)=\frac{1}{2}(\Re(u)(\Re(u)-1) -\Im(u)^2)$ depends also on $\Im(u)$. This introduces a significant difference compared to the framework of Theorem \ref{thm:crossing_neg_axis_integrated_variance}, where $\Re(w)$ is fixed. Therefore, the analysis of the behavior of the prefactor must be conducted specifically for the log-price case. In particular, a condition on the volatility of volatility $\nu$ will appear.
\end{remark}

\begin{theorem}[\textbf{Uncorrelated log-price case}] \label{thm:crossing_neg_axis_log_price}
    Let $0 \leq t \leq T$ and $K$ be a Volterra kernel of continuous and bounded type in $L^2_{\mathbb R}$. Fix $w=0$ and $0 \leq \Re(u) \leq 1$. Suppose $\rho = 0$, so that $\tsig_t$ is independent of $u$. Then there exists $\nu^* >0$ such that: 
    \begin{itemize}
        \item If $N(\tsig_t) \in \{1, 2\}$ or $\Re(u) = \frac{1}{2}$ or $\nu \leq \nu^*$, then $e^{i\pi k_t(u)} = 1$ for all $\Im(u) \in \mathbb{R}$.
        \item If $N(\tsig_t) \geq 3$ and $\Re(u) \neq \frac{1}{2}$ and $\nu > \nu^*$, then $e^{i\pi k_t(u)} = -1$ for $\Im(u)$ in a finite union of disjoint intervals with strictly positive length. 
        
        Moreover, define \hbox{$\Im^*(u) := \inf \{\Im(u)>0, e^{i\pi k_t(u)}=-1\}$}, and for $r>0$, define $N_r$ as the number of eigenvalues of $\tsig_t$ greater than $r$. Then, if $N_r \geq 3$, we have
        \begin{equation}
            \Im^*(u) \leq \frac{|1-2\Re(u)|}{2\tan\left(\frac{\pi}{N_r}\right)}-\sqrt{\left(\frac{1-2\Re(u)}{2\tan\left(\frac{\pi}{N_r}\right)}\right)^2 -4\left(\frac{1}{r} + \Re(u)(1-\Re(u))\right)}.
        \end{equation}
    \end{itemize}
\end{theorem}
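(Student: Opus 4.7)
The plan is to reduce the problem to analyzing a product via the spectral decomposition of $\tsig_t$. Under the hypotheses $\rho = 0$ and $w = 0$, one has $b(u) = \kappa \in \bb{R}$ and $a(u) = (u^2-u)/2$, so that $\tsig_t$ is independent of $u$; combined with Lemma~\ref{lemma:tsig_psi_well_def}, the operator $\tsig_t$ is self-adjoint (since $\bo{\Sigma}_t$ is self-adjoint and $(\id - \kappa\bo{K})^{-1}$ is the adjoint of $(\id - \kappa\bo{K}^*)^{-1}$), positive and trace-class, hence admits an orthonormal eigenbasis with nonincreasing eigenvalues $\lambda_n > 0$ for $n \leq N(\tsig_t)$. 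The Fredholm determinant then factorizes as $d_t(u) = \det(\id - 2a(u)\tsig_t) = \prod_n (1 - 2a(u)\lambda_n)$. Writing $u = x + iy$, a direct computation yields $\Re(1 - 2a(u)\lambda_n) = 1 + \lambda_n(y^2 + x(1-x)) > 0$ and $\Im(1 - 2a(u)\lambda_n) = \lambda_n y(1-2x)$, so each factor lives in the open right half-plane with principal argument in $(-\pi/2, \pi/2)$, whose sign is that of $y(1-2x)$.

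Three easy cases dispose of the first bullet. If $\Re(u) = 1/2$, every factor is positive real, hence $d_t(u) > 0$ and Lemma~\ref{lemma:phi=sqrt_det} forces $n_t(u) = 0$. If $N(\tsig_t) \leq 2$, the sum of at most two arguments stays in $(-\pi, \pi)$, so $d_t$ never reaches $\bb{R}_-$. Finally, writing $\lambda_n = \nu^2 \tilde\lambda_n$ with $\tilde\lambda_n$ independent of $\nu$, each arctangent term is $O(\nu^2)$ uniformly in $y \in \bb{R}$ as $\nu \to 0$, so the total argument stays in $(-\pi, \pi)$ for $\nu$ below some threshold $\nu^* > 0$; this defines the required $\nu^*$.

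For the upper bound on $\Im(u^*)$, assume $\Re(u) < 1/2$ and $y > 0$ (the case $\Re(u) > 1/2$ is symmetric, replacing $\pi$ by $-\pi$). Differentiation yields $\partial_\lambda \arg(1 - 2a(u)\lambda) = y(1-2x)/(1 + \lambda(y^2 + x(1-x)))^2 > 0$, so the principal argument of each factor is strictly increasing in $\lambda$. Consequently, for $r > 0$ with $N_r \geq 3$, each of the $N_r$ eigenvalues exceeding $r$ contributes an argument of at least $\arg(1 - 2a(u)r)$, and the sufficient condition that the collective sum reach $\pi$ reads $\arctan\!\bigl(r y(1-2x)/(1 + r(y^2 + x(1-x)))\bigr) \geq \pi/N_r$. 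After clearing the arctangent, this becomes the quadratic inequality $r \tan(\pi/N_r)\, y^2 - r(1-2x)\, y + \tan(\pi/N_r)(1 + r x(1-x)) \leq 0$ in $y$, whose smaller positive root yields the claimed upper bound. The nonnegativity of the discriminant, combined with the scaling $\lambda_n = \nu^2 \tilde\lambda_n$, identifies $\nu^*$ precisely as the value beyond which one can choose $r$ so that $N_r \geq 3$ and the discriminant becomes nonnegative.

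The main obstacle lies in the structural claim that $\{y \in \bb{R} : e^{i\pi n_t(x+iy)} = -1\}$ is a finite union of disjoint intervals of strictly positive length. Setting $g(y) := \sum_n \arg(1 - 2a(x + iy)\lambda_n)$, the trace-class property of $\tsig_t$ yields $\sum_n \lambda_n < \infty$, and since each summand is $O(\lambda_n |y|)$ on bounded intervals, the series converges absolutely and uniformly on compact subsets of $\bb{R}$ to a real-analytic function with $g(0) = 0$. The delicate point, especially in the infinite-rank case, is to show that $g(y) \to 0$ as $|y| \to \infty$, which requires a careful dominated-convergence argument exploiting the refined bound $|\arg(1 - 2a\lambda_n)| \leq \min(\lambda_n |y| |1-2x|, |1-2x|/|y|)$ and the summability $\sum \lambda_n < \infty$. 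Once established, $\{y : g(y) > \pi\}$ is bounded and, being the positivity set of a non-constant real-analytic function shifted by $\pi$, is a finite union of nondegenerate open intervals; the identification $e^{i\pi n_t(u)} = -1$ on each such interval follows from Lemma~\ref{lemma:phi=sqrt_det} together with the continuity in $u$ of $d_t$ and $\phi_t$ provided by Lemma~\ref{lemma:phi_det_C^0}.
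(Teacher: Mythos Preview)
Your approach coincides with the paper's: diagonalize $\tsig_t$, write $d_t(u)=\prod_n(1-2a(u)\lambda_n)$, identify the continuous argument $\theta_t$ with the series $g(y)=\sum_n\arctan\!\bigl(\lambda_n y(1-2x)\big/\bigl(1+\lambda_n(y^2+x(1-x))\bigr)\bigr)$, and analyze this sum. The three easy cases and the quadratic bound on $\Im(u^*)$ are handled correctly and match the paper almost verbatim; your real-analyticity remark is a mild refinement of the paper's continuity-based intermediate-value argument.

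The genuine gap lies in the asymptotic claim $g(y)\to 0$ as $|y|\to\infty$, on which both the finiteness of the union and your $\nu^*$ argument rest. Your dominated-convergence route does not go through: the bound $\min(\lambda_n|y||1-2x|,\,|1-2x|/|y|)$ yields only $\sup_y|\cdot|\le|1-2x|\sqrt{\lambda_n}$, and $\sum_n\sqrt{\lambda_n}$ need not converge for a merely trace-class operator. Worse, the limit itself fails whenever $\lambda_n\asymp n^{-\alpha}$ with $1<\alpha<2$: a Riemann-sum estimate (split at $n\asymp y^{2/\alpha}$) gives $g(y)\asymp |1-2x|\, y^{(2-\alpha)/\alpha}\to\infty$. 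This regime includes the fractional kernel with $H<\tfrac12$, for which the eigenvalues of $\tsig_t$ decay like $n^{-(2H+1)}$. The same issue breaks your ``$O(\nu^2)$ uniformly in $y$'' estimate for $\nu^*$: the supremum over $y$ of the $n$-th arctangent is of order $\nu\sqrt{\tilde\lambda_n}$ rather than $\nu^2\tilde\lambda_n$, and in the above regime $\sup_y|g(y)|=+\infty$ for \emph{every} $\nu>0$, so no positive threshold can be produced by this route. The paper's proof invokes the same limit $|\theta|\to 0$ (via a termwise monotone-convergence step whose ``eventually decreasing'' threshold depends on $n$ and tends to infinity), so this gap is shared with the paper rather than specific to your attempt.
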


\begin{proof}
    The proof is given in Section \ref{sec:proof_thm_crossing}.
\end{proof}

The following corollary extends Theorem \ref{thm:crossing_neg_axis_log_price} to a continuum of correlation. In this case, and only this case, we will state the dependence of $\tsig_t$ on $\rho$.
\begin{corollary}[\textbf{Correlated log-price case}] \label{cor:crossing_neg_axis_log_price}
    Let $0 \leq t \leq T$ and $K$ be a Volterra kernel of continuous and bounded type in $L^2_{\mathbb R}$. Fix $w=0$ and $0 \leq \Re(u) \leq 1$. Then there exists $\nu^* >0$ such that, for any $\delta>0, \; M>0$, there exists $0 < \epsilon \leq 1$ such that for $|\rho| \leq \varepsilon$: \begin{itemize}
        \item If $N(\tsig_t(\rho = 0)) \in \{1, 2\}$ or $\Re(u) = \frac{1}{2}$ or $\nu \leq \nu^*$, then $e^{i\pi k_t(u)} = 1$ for any $|\Im(u)| \leq M$.
        \item If $N(\tsig_t(\rho = 0)) \geq 3$ and $\Re(u) \neq \frac{1}{2}$ and $\nu > \nu^*$, then $e^{i\pi k_t(u)} = -1$ for any $\Im(u)$ in a finite union of disjoint intervals with strictly positive length. 
        
        Moreover, define \hbox{$\Im^*(u) := \inf \{\Im(u)>0, e^{i\pi k_t(u)}=-1\}$}, and for $r>0$, define $N_r$ as the number of eigenvalues of $\tsig_t(\rho=0)$ greater than $r$. Then
        \begin{equation} \label{ineq:first_crossing_log_price}
            \Im^*(u) \leq \frac{|1-2\Re(u)|}{2\tan\left(\frac{\pi}{N_r}\right)}-\sqrt{\left(\frac{1-2\Re(u)}{2\tan\left(\frac{\pi}{N_r}\right)}\right)^2 -4\left(\frac{1}{r} + \Re(u)(1-\Re(u))\right)} + \delta.
        \end{equation}
    \end{itemize}
\end{corollary}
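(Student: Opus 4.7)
The plan is to combine Theorem \ref{thm:crossing_neg_axis_log_price} at $\rho=0$ with a continuity argument in $\rho$. The preparatory step is to show that both $(u,\rho) \in \mc U \times [-1,1] \mapsto d_t(u,\rho)$ and $(u,\rho) \mapsto \phi_t(u,\rho)$ are jointly continuous. Since $\bo{\Sigma}_t$ is trace class and $b(u) = \kappa + \rho\nu u$ is jointly continuous, the operators $(\id - b(u)\bo{K})^{\pm 1}$ are operator-norm continuous in $(u,\rho)$, so $\tsig_t(u,\rho) = (\id - b(u)\bo{K})^{-1}\bo{\Sigma}_t(\id - b(u)\bo{K}^*)^{-1}$ is jointly continuous in the trace class norm. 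Continuity of the Fredholm determinant with respect to the trace class norm then gives joint continuity of $d_t$, while the trace-formula argument of Lemma \ref{lemma:phi_det_C^0}, carried over verbatim to include $\rho$, gives joint continuity of $\phi_t$.

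For the no-crossing alternative ($N(\tsig_t(\rho=0)) \in \{1,2\}$, or $\Re(u)=1/2$, or $\nu \leq \nu^*$), Theorem \ref{thm:crossing_neg_axis_log_price} yields $e^{i\pi n_t(u,0)} = 1$ on the whole vertical line $\{\Re(u) \text{ fixed}\}$, which by Lemma \ref{lemma:no_rot_count_needed} is equivalent to $d_t(\cdot,0)$ avoiding $(-\infty,0]$ on that line. On the compact subset $|\Im(u)| \leq M$, continuity provides a uniform distance $\epsilon_0 > 0$ between $d_t(u,0)$ and $(-\infty,0]$. Using joint continuity of $d_t$ on the compact set $\{|\Im(u)| \leq M\} \times \{|\rho| \leq 1\}$, I will extract $\varepsilon_1 > 0$ such that $|d_t(u,\rho) - d_t(u,0)| < \epsilon_0/2$ for $|\rho| \leq \varepsilon_1$, so $d_t(\cdot,\rho)$ still avoids $(-\infty,0]$ on the compact connected segment. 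Since this segment contains the real point $u = \Re(u)$, $w = 0$, Lemma \ref{lemma:no_rot_count_needed} forces $n_t(u,\rho) = 0$ for $|\rho| \leq \varepsilon_1$ and $|\Im(u)| \leq M$.

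For the crossing alternative, Theorem \ref{thm:crossing_neg_axis_log_price} produces a first crossing $\Im(u^*_0) > 0$ bounded by the right-hand side of \eqref{ineq:first_crossing_log_price} without $\delta$, and states that $\{e^{i\pi n_t(\cdot,0)} = -1\}$ is a finite union of open intervals of strictly positive length. I will pick $\Im(u_1) \in (\Im(u^*_0), \Im(u^*_0)+\delta)$ inside the first such interval (possible by the open-interval structure, shrinking the target gap if needed), so that $d_t(u_1,0) \notin (-\infty,0]$ and $n_t(u_1,0)$ is odd. Joint continuity supplies $\varepsilon_2 > 0$ with $d_t(u_1,\rho) \notin (-\infty,0]$ for $|\rho| \leq \varepsilon_2$, and on that range
\[
n_t(u_1,\rho) = \tfrac{1}{\pi}\bigl(\Im(\phi_t(u_1,\rho)) + \tfrac{1}{2}\arg d_t(u_1,\rho)\bigr)
\]
is continuous in $\rho$, hence constant and odd, so $e^{i\pi n_t(u_1,\rho)} = -1$. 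This gives $\Im(u^*_\rho) \leq \Im(u_1) < \Im(u^*_0) + \delta$, which combined with the bound on $\Im(u^*_0)$ from Theorem \ref{thm:crossing_neg_axis_log_price} yields \eqref{ineq:first_crossing_log_price}; setting $\varepsilon := \min(\varepsilon_1,\varepsilon_2,1)$ finishes the argument. The main obstacle I expect is verifying trace class (rather than merely operator norm) continuity of $\tsig_t$ in $\rho$, required to invoke continuity of the Fredholm determinant; a secondary technicality is ensuring that the first crossing at $\rho=0$ is genuinely inside an open sign interval so that a valid $\Im(u_1)$ can be chosen, which is guaranteed by the open-interval structure in Theorem \ref{thm:crossing_neg_axis_log_price}.
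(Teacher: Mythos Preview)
Your approach is correct and reaches the same conclusion, but it differs from the paper's in the choice of observable. The paper perturbs the \emph{continuous argument} $\theta_t(u,\rho) = -2\Im(\phi_t(u,\rho))$: by Heine's theorem on $[-M,M]\times[-1,1]$, it shows $|\theta_t(u,\rho)| < \pi$ (Case~1) or $|\theta_t(u_r+i\delta,\rho)| > \pi$ (Case~2) for $|\rho|$ small, then invokes Lemma~\ref{lemma:number_crossings}. You instead perturb the pair $(d_t,n_t)$ directly: keep $d_t$ away from $(-\infty,0]$ on a compact segment and apply Lemma~\ref{lemma:no_rot_count_needed} (Case~1), or freeze $n_t$ at a well-chosen interior point $u_1$ where $d_t(u_1,0)\notin(-\infty,0]$ and use local constancy of the integer-valued map $\rho\mapsto n_t(u_1,\rho)$ (Case~2). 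The paper's route is marginally cleaner because $\theta_t$ is continuous everywhere, so there is no need to pick $u_1$ away from the branch cut; your route has the virtue of making the link with Lemma~\ref{lemma:no_rot_count_needed} more explicit.

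Two small remarks. First, your invocation of Lemma~\ref{lemma:no_rot_count_needed} as an \emph{equivalence} is inaccurate: that lemma only gives one implication. What you actually need in Case~1 is that $d_t(\cdot,0)$ avoids $(-\infty,0]$, and this follows from the strict inequality $|\theta_t(u,0)|<\pi$ established inside the proof of Theorem~\ref{thm:crossing_neg_axis_log_price} (or equivalently from Lemma~\ref{lemma:number_crossings}). Second, your worry about trace-class continuity of $\tsig_t$ is unnecessary: since $d_t = e^{-2\phi_t}$ by Lemma~\ref{lemma:det_exp_phi}, joint continuity of $d_t$ already follows from that of $\phi_t$, which you obtain via the same operator-norm argument as in Lemma~\ref{lemma:phi_det_C^0}.
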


\begin{proof}
    The proof is given in Section \ref{sec:proof_thm_crossing}.
\end{proof}

\begin{remark}
    Note that if one wants to price a call option on an asset uncorrelated with its volatility, i.e. $\rho =0$, using the Lewis formula \eqref{eq:Lewis_formula}, then, from case 1 of Theorem \ref{thm:crossing_neg_axis_log_price}, since $\Re(u) = \frac{1}{2}$, no correction needs to be applied to the determinant.
\end{remark}

In the \textit{Integrated variance} case, we observe that when the rank of $\tsig_t$ is less than or equal to 2, the formula \eqref{eq:char_fun_sqrt_det} is valid; however, it fails to hold when the rank is greater than or equal to 3, for $\Im(w)$ in a disjoint union of intervals, which is also infinite and unbounded when the rank of $\tsig_t$ is infinite. 

In the \textit{log-price} case, the same result applies, at least for a continuum of correlation, when the rank of $\tsig_t(\rho=0)$ is less than or equal to 2, or when the vol-of-vol does not exceed a certain threshold (which may be infinite) that depends solely on the kernel $K$ and the model parameters, except for the vol-of-vol itself. However, in the opposite case, formula \eqref{eq:char_fun_sqrt_det} does not hold for $\Im(u)$ in a finite union of disjoint intervals. 

Moreover, the upper bounds for the first crossing of the determinant, provided in the integrated variance case by \eqref{ineq:first_crossing_int_variance} and in the log-price case by \eqref{ineq:first_crossing_log_price}, are strictly decreasing in $N_r$. This implies that the larger the eigenvalues of $\tsig_t$, the faster the crossing of the determinant occurs. \\

It is important to understand how the model parameters influence the eigenvalues and, consequently, the first crossing instant. 

Since $\tilde{\bo{\Sigma}}_t$ (defined in \eqref{eq:tilde_Sigma}) is proportional to $\nu^2$, its eigenvalues increase with $\nu$, causing the first crossing to occur earlier. This behavior is illustrated in Figure \ref{fig:first_crossing_nu}, which also shows that regardless of the value of $\nu$, the first crossing always happens before the Fourier–Laplace transform has sufficiently decayed, making the prefactor $e^{i\pi k_t(w)}$ of utmost importance, as explained in the last paragraph of Section \ref{subsec:det_formula}. The dotted curve in the figure represents the Fourier–Laplace transform obtained without applying the prefactor $e^{i\pi k_t(w)}$.

\begin{figure}[h!]
    \centering
    \includegraphics[width=1\textwidth]{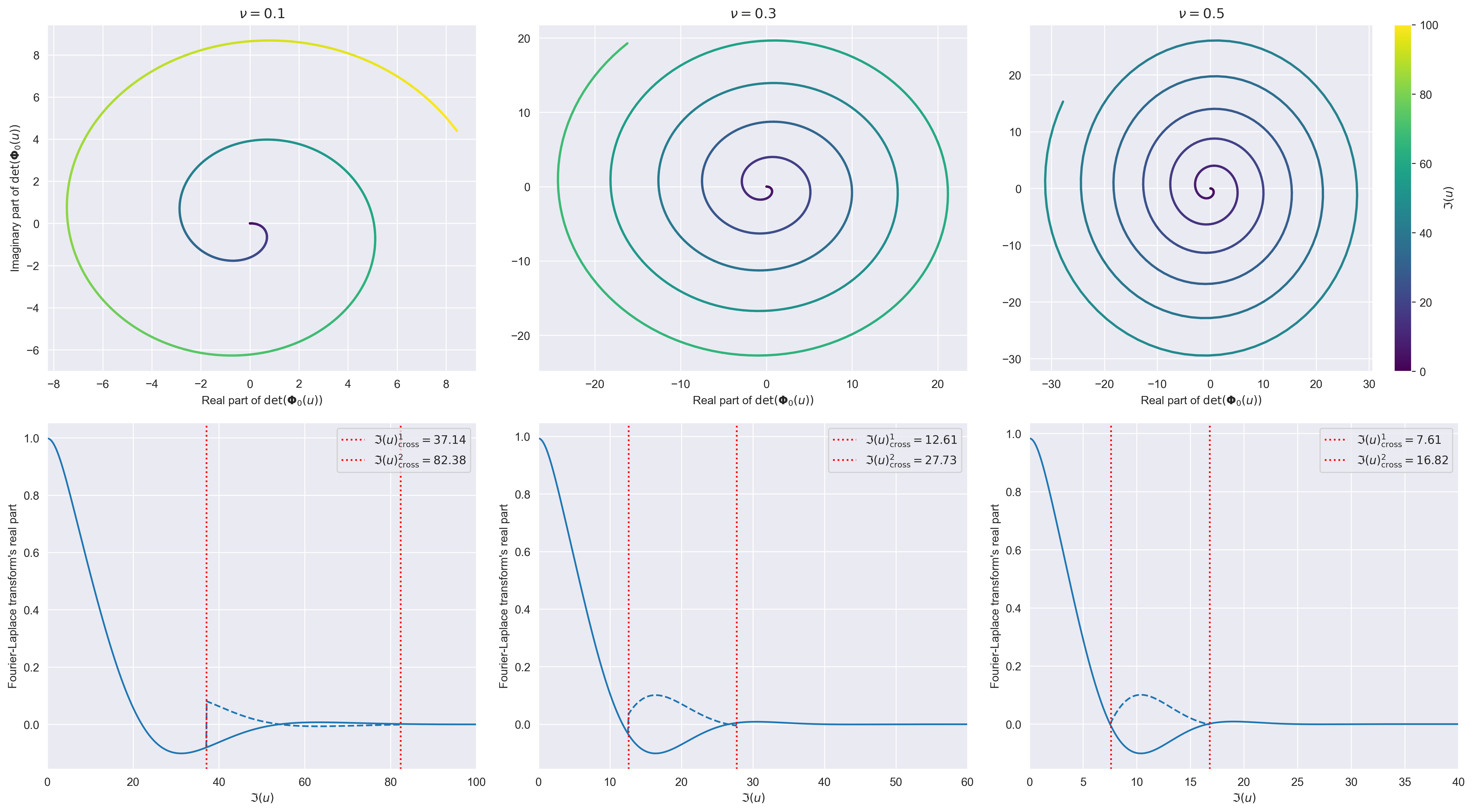} 
    \caption{\small 
        Fredholm determinant of $\bo{\Phi}_0$ as a function of $\Im(u)$ (Top), and comparison of the first and second crossing moments for different vol-of-vol values (Bottom), in the fractional kernel setting and log-price case.
        The parameters are $\kappa = 0$, $\theta = -0.1$, $\rho = -0.9$, $X_0 = -0.05$, $H = 0.3$, $T = 1$, $n=200$, $\Re(u) = 0.7$, $w = 0$.
    }
    \label{fig:first_crossing_nu}
\end{figure}

In the same spirit as in the proof of Lemma \ref{lemma:majo_lambda_n}, we can prove, at least when $\rho \Im(u) = 0$, which implies that $\tsig_t$ is a symmetric positive operator, that the eigenvalues increase with the maturity $T$. Moreover, for the Riemann-Liouville fractional kernel, the same idea shows that when $T \leq 1$, the eigenvalues increase as $H \in (0, \frac{1}{2}]$ decreases. In practice, we have observed that this pattern also holds when $\rho \Im(u) \neq 0$. These two properties are illustrated in Figure \ref{fig:eigenvalues_T_H}.
\begin{figure}[h!]
    \centering
    \includegraphics[width=1\textwidth]{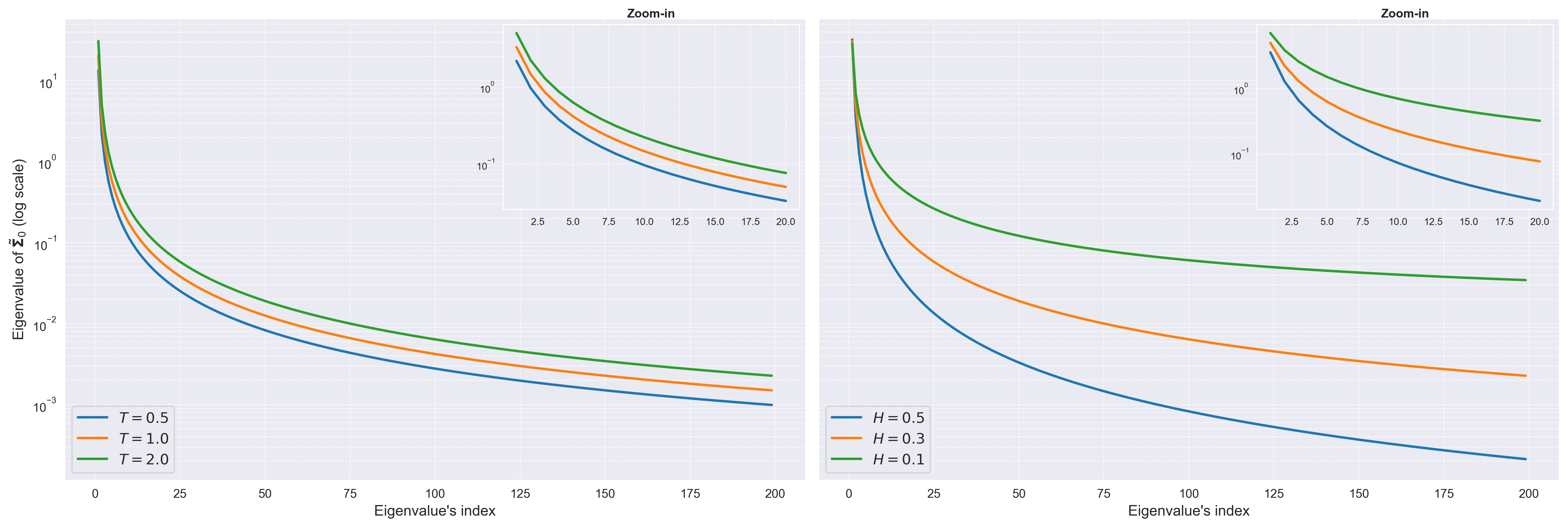} 
    \caption{\small 
        Comparison of eigenvalues of $\tsig_0$ for different maturity and Hurst index, in the fractional kernel setting.
        The parameters are $\kappa = 0$, $\nu = 0.2$, $\theta = 0.1$, $\rho = 0$, $X_0 = 0.1$, $H = 0.3$ (Left), $T = 1$ (Right), $n=1000$.
    }
    \label{fig:eigenvalues_T_H}
\end{figure}
In Figure \ref{fig:arg_of_det_H_T}, we illustrate the impact of $T$ and $H$ on the first crossing moment, in the log-price case. Recalling that a crossing corresponds to the argument of the determinant being equal to $\pm \pi \approx \pm 3.14$, the figure shows that as $T$ increases or $H$ decreases, crossings occur more frequently.
\begin{figure}[h!]
    \centering
    \includegraphics[width=0.8\textwidth]{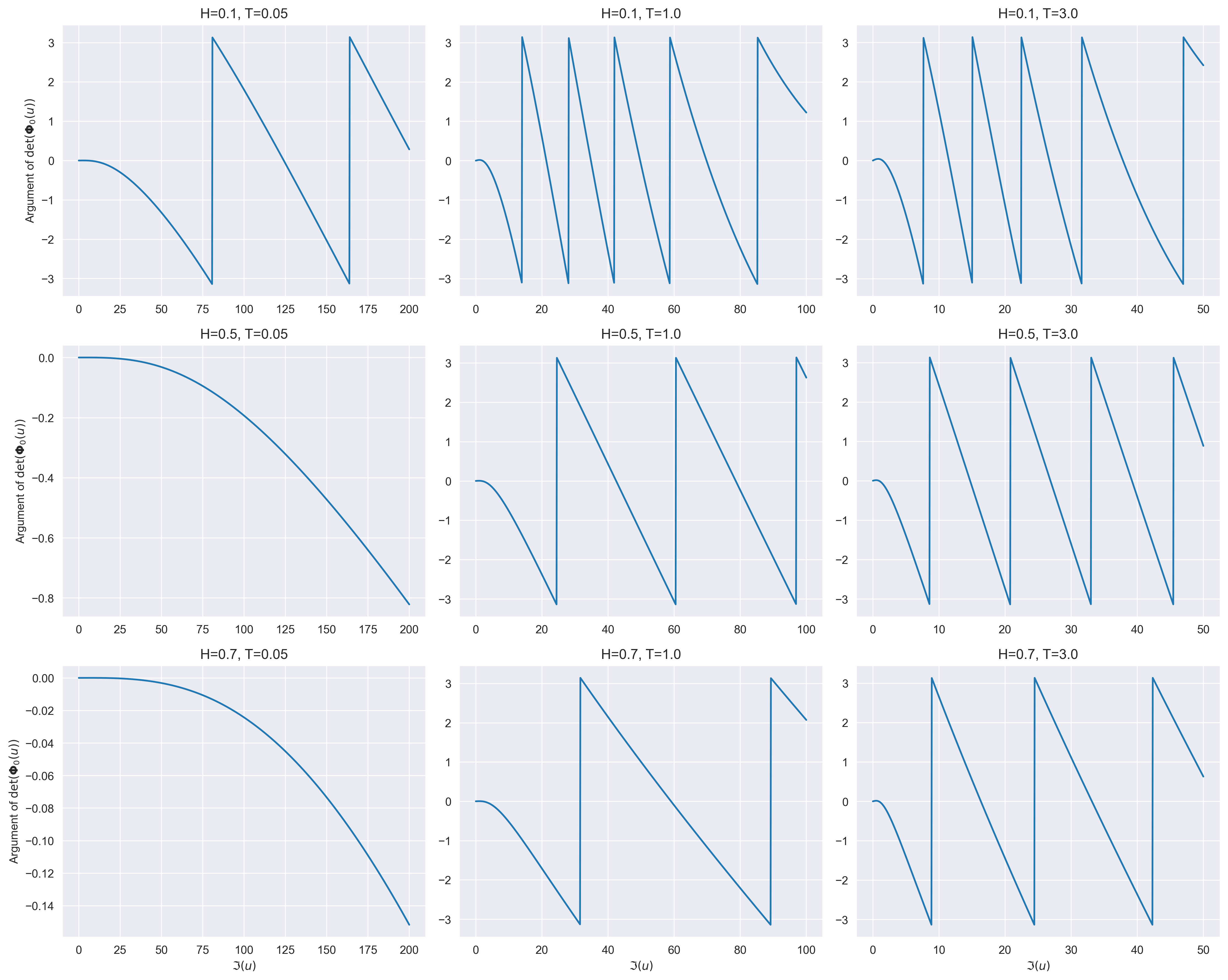} 
    \caption{\small Argument of the Fredholm determinant of $\bo{\Phi}_0$ as a function of $\Im(u)$, for short to long maturities and Hurst indices, in the fractional kernel setting. The model parameters are $\kappa = 0$, $\nu = 0.2$, $\theta = -0.1$, $\rho =-0.9$, $X_0 = -0.05$, $n = 200$, $\Re(u) = 0.33$.}
    \label{fig:arg_of_det_H_T}
\end{figure}

In Figure \ref{fig:first_crossing_eigen_values}, in the integrated variance case, we show that as the eigenvalues increase (or as $H$ decreases), the first crossing occurs earlier. Additionally, regardless of the value of $H$, the first crossing always happens before the Fourier–Laplace transform has sufficiently decayed. These observations reinforce the results of Theorems \ref{thm:crossing_neg_axis_integrated_variance} and \ref{thm:crossing_neg_axis_log_price}.
\begin{figure}[h!]
    \centering    \includegraphics[width=1\textwidth]{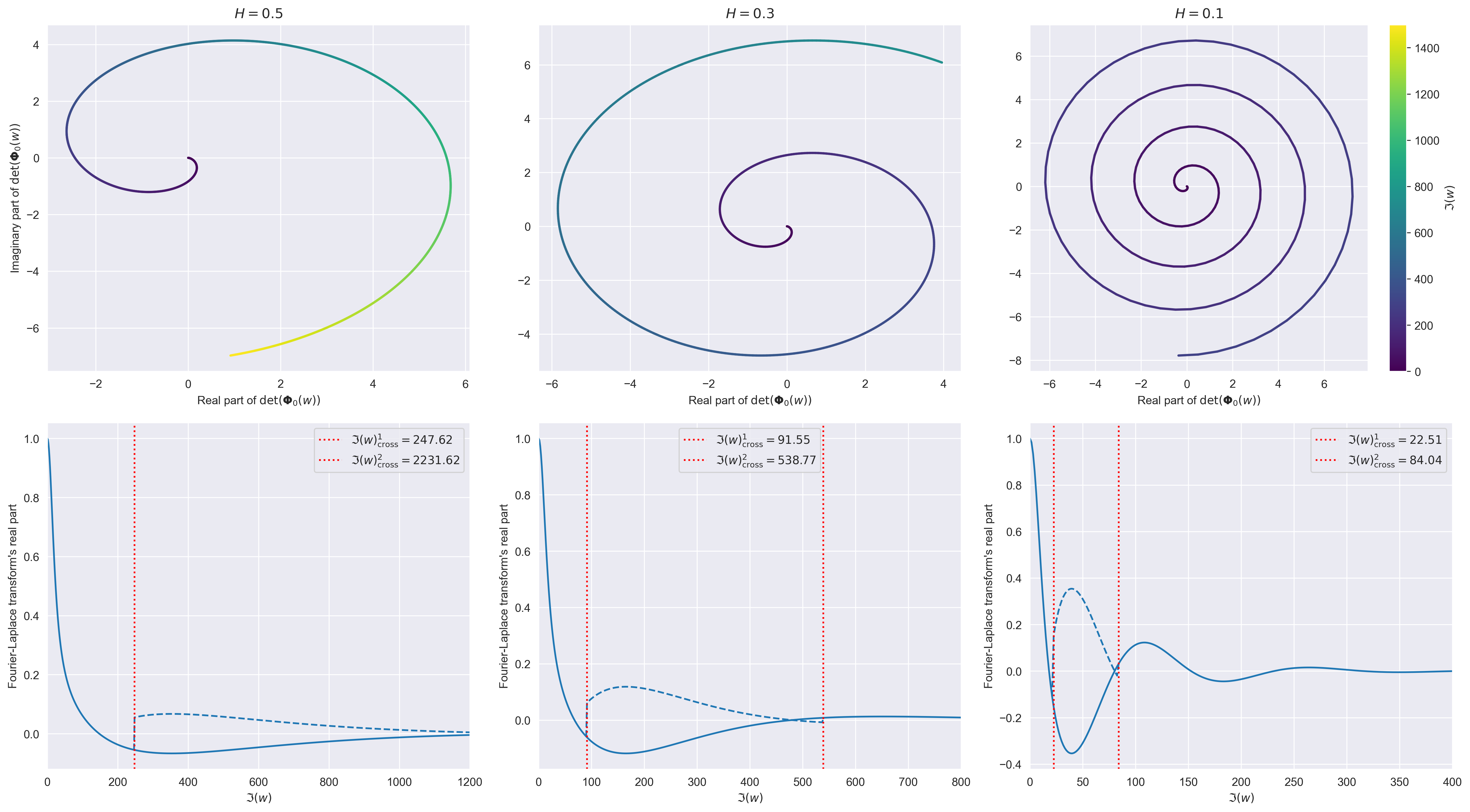} 
    \caption{\small Fredholm determinant of $\bo{\Phi}_0$ as a function of $\Im(w)$ (Top), and comparison of the first and second crossing moment (Bottom), for different Hurst indices, in the fractional kernel setting and integrated variance case. 
    The parameters are $\kappa = 0$, $\theta = 0.1$, $\nu = 0.2$, $\rho = 0$, $X_0 = 0.1$, $H = 0.3$, $T = 1$, $n=200$, $u = 0$, $\Re(w) = 0$.}
    \label{fig:first_crossing_eigen_values}
\end{figure}

Finally, while we were unable to theoretically determine the impact of the correlation $\rho$ on the eigenvalues, we observed in practice that as $|\rho|$ increases, the first crossing occurs earlier, and crossings happen more frequently. This behavior is illustrated in Figure \ref{fig:arg_of_det_rho}. 
\begin{figure}[h!]
    \centering
    \includegraphics[width=0.8\textwidth]{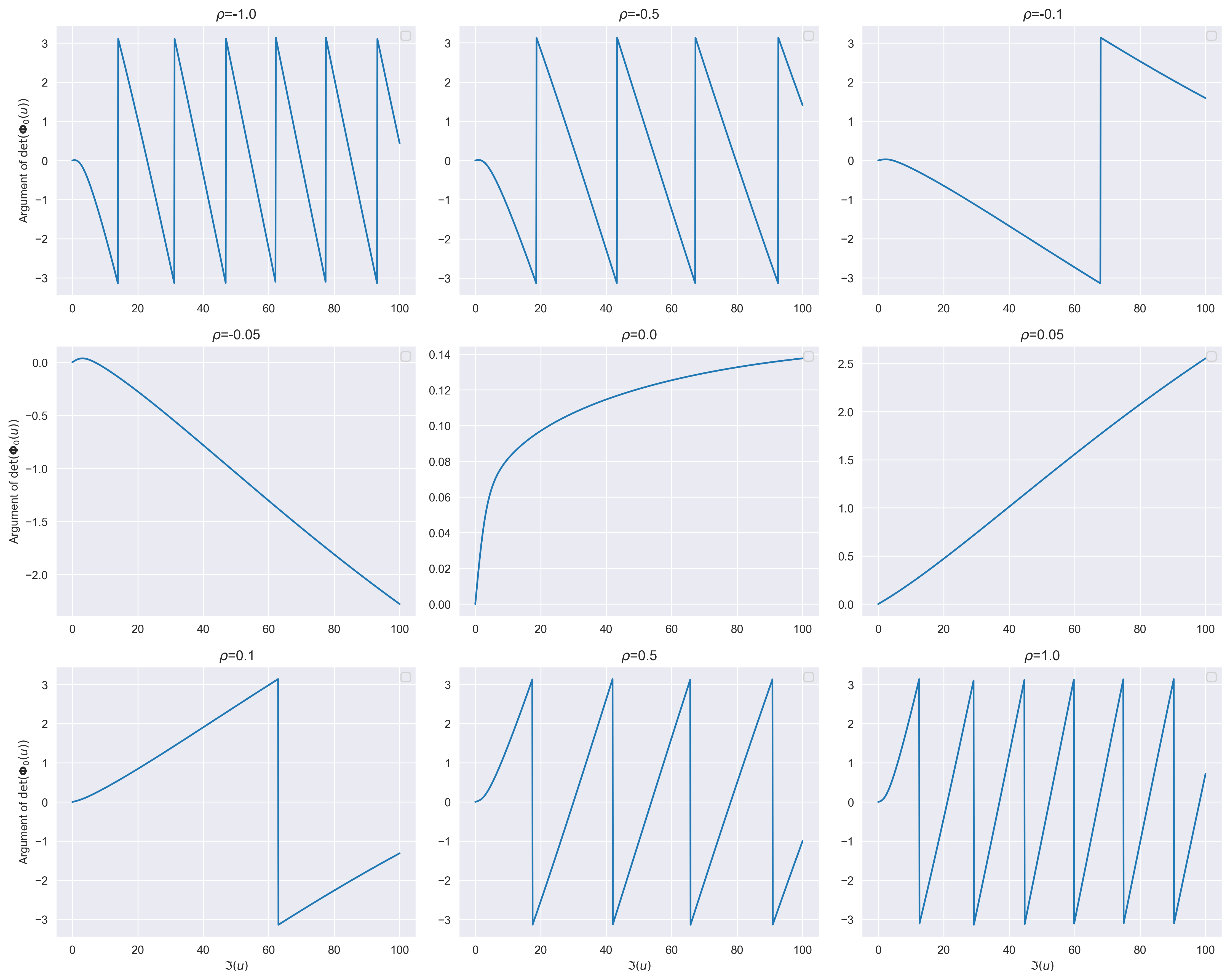} 
    \caption{\small Argument of the Fredholm determinant of $\bo{\Phi}_0$ as a function of $\Im(u)$, for correlations going from -1 to 1, in the fractional kernel setting. The parameters are $\kappa = 0$, $\nu = 0.2$, $\theta = -0.1$, $\rho = 0$, $X_0 = -0.05$, $H = 0.3$, $T = 1.5$, $n=200$, $\Re(u) = 0.33$, $w=0$.}
    \label{fig:arg_of_det_rho}
\end{figure}
To sum up, we collect in  Table \ref{tab:parameter_influence}  the parameters influence on the eigenvalues and first crossing instant. \\
\begin{table}[h!]
\centering
\caption{\small Influence of parameters on eigenvalues and the first crossing instant}
\label{tab:parameter_influence}
\renewcommand{\arraystretch}{1.3} % Adjust row height for better readability
\footnotesize % Slightly reduce font size
\begin{tabular}{|c|c|c|c|c|}
\hline
& \(\nu \nearrow\) & \(T \nearrow\) & \(|\rho| \nearrow\) & \(H \searrow\) \\ \hline
\textbf{Eigenvalues} & 
Increase & 
Increase & 
No theoretical result & 
Increase \\ \hline
\textbf{First crossing instant} & 
Decrease & 
Decrease & 
No theoretical result, observations: decrease & 
Decrease \\ \hline
\end{tabular}
\end{table}

Now that the first questions regarding the crossing behavior have been answered, we must address the last question, namely, whether we can practically and efficiently compute  the value of $k_t$ or $e^{i\pi k_t}$. We provide two algorithms in the next section, as well as a prefactor-free approximation formula for the Fourier–Laplace transform.

\section{Numerical computation of the Fourier–Laplace transform }\label{Sec:hybrid_lip_algos}

We recall the determinant formula \eqref{eq:char_func_det_formula} for the Fourier–Laplace transform:
\begin{equation}
    \xi_t(u,w) := \mathbb{E} \left[ \exp \left( u \log \frac{S_T}{S_t} + w \int_t^T X_s^2 \, ds \right) \bigg| \mc{F}_t \right] 
    = e^{i \pi k_t(u,w)} \frac{\exp \left(\langle g_t, \bo{\Psi}_t(u,w) g_t \rangle_{L^2_{\mathbb R}} \right)}{\sqrt{\det(\bo{\Phi}_t(u,w))}},
\end{equation}
where $e^{i\pi k_t(u,w)} = \pm 1$. 

When performing Fourier pricing, $t$, $\Re(u)$ and $\Re(w)$ are fixed, and one evaluates $\xi_t$ on a grid of $\Im(u)$ or $\Im(w)$ defined by a quadrature, depending on the derivative's nature (log-price or integrated variance case). Therefore, for the sake of readability and without loss of generality, we will only consider the log-price case (i.e. we set $w=0$), and we will identify $u$ to $\Im(u)$.

In order to compute the Fourier–Laplace transform using the determinant formula \eqref{eq:char_func_det_formula}, one first evaluates
\begin{equation} \label{eq:char_func_first_part}
    \xi_{\text{sign}}(u) := e^{-i\pi k_t(u)}\xi_t(u),
\end{equation}
over a grid of values for $u$, using, for instance, the numerical method provided in Section \ref{subsec:discretized_operators}. Finally, it remains to compute the prefactor $e^{i\pi k_t(u)}$ over the grid.

Observe again that once \eqref{eq:char_func_first_part} is computed, the Fourier–Laplace transform over the grid is known up to a factor of $\pm 1$. Moreover, recalling formula \eqref{eq:k_t(u,w)}, it follows that $k_t(u)$ represents the net number of times $\det(\bo{\Phi}_t)$ crosses the negative real axis anticlockwise between $0$ and $u$.

Two approaches arise from these observations: One can either determine the value of $e^{i\pi k_t(u)}$, i.e., find the correct sign of the Fourier–Laplace transform, by identifying where formula \eqref{eq:char_func_first_part} should be multiplied by $-1$, or one can determine the value of $k_t(u)$, by spotting the discontinuities of the argument of $\det(\bo{\Phi}_t)$.

In this section, we propose two methods, based on these two approaches, for computing the Fourier–Laplace transform over a grid of $u$ defined by a quadrature.

\begin{enumerate}
    \item \textbf{The Hybrid Trace/Determinant algorithm}  detailed in Section \ref{subsec:approach_based_on_char_func_proxy} follows the first approach and aims to recover the correct sign for the Fourier–Laplace transform by using a rough approximation of it, and compare its sign to the one obtained with \eqref{eq:char_func_first_part}. This approximation is computed using the trace formula \eqref{eq:char_func_trace_formula}, with a weak approximation of the operators.
    \item \textbf{The Lipschitz-based algorithm}   detailed in Section \ref{subsec:approach_based_on_lip_cst} and based on the second approach, leverages the Lipschitz constant of the determinant's argument to identify where the crossing points of the determinant are. 
\end{enumerate}
 
The first method is very fast if a rough approximation is sufficient to determine the correct sign. If not, a more precise approximation is required, which may adversely affect computation speed. The second method is also very fast when the argument of the determinant varies slowly. It has been used, without considering the Lipschitz constant, in \cite{abi2022characteristic}. 

Furthermore, we propose in Section \ref{subsec:prefactor-free_approx} a prefactor-free approximation formula for $\xi_t(u,w)$.
Indeed, although $\bo{\Phi}_t(u,w)$ is in general not Hermitian, one can nevertheless give a consistent meaning to
\begin{equation}
``\det\left(\sqrt{\bo{\Phi}_t(u,w)}\right)",
\end{equation}
and show that
\begin{equation}
\frac{e^{i \pi k_t(u,w)}}{\sqrt{\det(\bo{\Phi}_t(u,w))}} = \frac{1}{\det\left(\sqrt{\bo{\Phi}_t(u,w)}\right)}.
\end{equation}
However, as detailed in Section~\ref{subsec:rank_cond_illustration}, we were only able to establish this identity in finite dimension, after approximating the operators by matrices as in Section~\ref{subsec:discretized_operators}.

A numerical comparison between the two algorithms and the prefactor-free formula is presented in Section~\ref{sec:numerical_results}. First, in the next section, we recall the method from \citet[Section~4]{abi2022characteristic} to approximate the trace and determinant formulas using closed-form expressions derived from a discretization of the operators.

\subsection{Approximation by closed form expressions} \label{subsec:discretized_operators}

To ease notations, we drop the dependence of operators on $u$ and $w$. We fix a number $n \in \mathbb N^*$ of discretization steps, and $t_i = i\frac{T}{n}, \; i \in \{0, 1,\dots, n\}$ a partition of $[0,T]$. Discretizing the $\star$-product \eqref{eq:star_product} leads to the following discretization for $\bo{\Psi}_{t_i}$:
\begin{equation} \label{eq:Psi_disctretized}
     \Psi_{n,i} = a\left(I_n -b K_n^{\top}\right)^{-1}\left(I_n - 2a\frac{T}{n}\tilde{\Sigma}_{n,i}\right)^{-1} \left(I_n -b K_n\right)^{-1}
\end{equation}
where $I_n$ is the $n \times n$ identity matrix, $K_n$ is the lower triangular with components
\begin{equation}
 (K_n)_{jk} = \mathbbm{1}_{k \leq j-1} \int_{t_{k-1}}^{t_k} K(t_{j-1}, s) \, ds
\end{equation}
and 
\begin{equation}
    \tilde{\Sigma}_{n,i} = \left(I_n -b K_n\right)^{-1} \Sigma_{n,i} \left(I_n -bK_n^{\top}\right)^{-1}
\end{equation}
with $\Sigma_{n,i}$ the $n \times n$ discretized covariance matrix $\bo{\Sigma}_{t_i}$, recall \eqref{eq:Sigma}, given by
\begin{equation}
    (\Sigma_{n,i})_{jk} = \nu^2 \mathbbm{1}_{i < (j \wedge k) - 1} \int_{t_i}^{t_{j-1} \wedge t_{k-1}} K(t_{j-1}, s) K(t_{k-1}, s) \, ds.
\end{equation}
For notational convenience, we write $\Psi_n$, $\tilde{\Sigma}_n$, and $\Sigma_n$ instead of $\Psi_{n,0}$, $\tilde{\Sigma}_{n,0}$, and $\Sigma_{n,0}$ when the index $i=0$ is considered.

Defining the n-dimensional vector $g_n := (g_0(t_0), \dots, g_0(t_{n-1})^{\top})$, the discretization of the inner product $\langle \cdot, \cdot \rangle_{L^2_{\mathbb R}}$ leads to the approximation of the determinant formula \eqref{eq:char_fun_sqrt_det}
\begin{equation} \label{eq:det_formula_discretized}
    \mathbb{E} \left[ \exp \left( u \frac{\log S_T}{\log S_0} + w \int_0^T X_s^2 ds \right)\right] \approx e^{i\pi k_n} \frac{\exp \left(\frac{T}{n} g_n^{\top} \Psi_n g_n\right)}{\sqrt{\det(\Phi_n)}}, 
\end{equation}
with $\Phi_n = I_n -2a \frac{T}{n} \tilde{\Sigma}_n$, and $k_n$ is obtained from Algorithm \ref{algo:compute_phi_t} or \ref{algo:rot_count_k_t}. 

Combining the discretization of $\bo{\Psi}_{t_i}$ and a Riemann sum, we obtain the following approximation for $\phi_0$, recall \eqref{eq:phi_t}:
\begin{equation} \label{eq:phi_disctretized}
    \phi_n = -\frac{T}{n} \sum_{i=0}^{n-1} \Tr\left(\Psi_{n,i} \dot{\Sigma}_{n,i} \right)
\end{equation}
with $\dot{\Sigma}_{n,i}$ being the $n \times n$ matrix with components
\begin{equation}
    (\dot{\Sigma}_{n,i})_{jk} = -\nu^2 \mathbbm{1}_{i \leq (j \wedge k) - 1} K(t_{j-1}, t_i) \int_{t_{k-1}}^{t_k} K(s, t_i) \, ds.
\end{equation}
This leads to the following approximation for the trace formula \eqref{eq:char_func_trace_formula}

\begin{equation} \label{eq:trace_formula_discretized}
    \mathbb{E} \left[ \exp \left( u \frac{\log S_T}{\log S_0} + w \int_0^T X_s^2 ds \right)\right] \approx \exp \left(\phi_n + \frac{T}{n} g_n^{\top} \Psi_n g_n\right).
\end{equation}

\subsection{Hybrid Trace/Determinant algorithm for sign adjustment} \label{subsec:approach_based_on_char_func_proxy}
The idea behind the hybrid Trace/Determinant  method is simple and leverages the trace formula \eqref{eq:char_func_trace_formula} as a proxy to identify the correct sign for \eqref{eq:char_func_first_part}. 

As illustrated in Figure \ref{fig:call_price_computation_time_all_algorithms}, computing the Fourier–Laplace transform with high precision using the trace formula \eqref{eq:char_func_trace_formula} is computationally expensive. However, this figure also shows that for a small number of discretization steps, computation times become very low. While the accuracy of the trace formula with limited discretization steps is poor, we can reasonably expect that even a coarse approximation will share the same sign as the true value. Therefore, determining the correct sign can be achieved by using a low-precision estimation of the Fourier–Laplace transform obtained from the trace formula. 

Specifically, this low-precision approximation is computed using a coarse discretization of the operators as per \eqref{eq:trace_formula_discretized}, with $n = n_{\text{coarse}}$ a small integer. This low-precision approximation acts as a fast proxy to determine the correct sign. Finally, the sign obtained from the determinant formula  \eqref{eq:char_func_first_part} along the grid is compared to the sign of this proxy, and is adjusted if the two signs differ, thereby recovering the correct value of the Fourier–Laplace transform. The additional computation time induced by this proxy will be negligible compared to the cost of computing \eqref{eq:char_func_first_part}. The method is detailed in Algorithm \ref{algo:compute_phi_t}, and a sanity check is provided in Figure \ref{fig:call_price_precision_low_T}, as well as computation times in Figure \ref{fig:call_price_computation_time_all_algorithms}.

\begin{algorithm} \small
\caption{Computation of $\xi_t(u)$ for a node $u$ of a quadrature}
\label{algo:compute_phi_t}
\begin{algorithmic}[1]
\Input Number of discretization steps $n_{\text{coarse}}$ for the coarse approximation of $\xi_t(u)$, quadrature node $u \geq 0$.
\Output Value of $\xi_t(u)$.

\State Compute $\xi_{\text{sign}}(u) = \frac{\exp \left( \langle g_t, \Psi_t(u) g_t \rangle_{L_{\mathbb R}^2} \right)}{\sqrt{\det(\bo{\Phi}_t(u))}}$ using the estimate \eqref{eq:det_formula_discretized}.
\State Compute $\xi_{\text{coarse}}(u)$ using the estimate \eqref{eq:trace_formula_discretized} with $n = n_{\text{coarse}}$.
\If{$\Re(\xi_{\text{sign}}(u)) \Re(\xi_{\text{coarse}}(u)) < 0$ \textbf{and} $\Im(\xi_{\text{sign}}(u)) \Im(\xi_{\text{coarse}}(u)) < 0$}
    \State $\xi_t(u) \gets -\xi_{\text{sign}}(u)$
\Else
    \State $\xi_t(u) \gets \xi_{\text{sign}}(u)$
\EndIf

\Return $\xi_t(u)$
\end{algorithmic}
\end{algorithm}

\subsection{Lipschitz-based sign adjustment} \label{subsec:approach_based_on_lip_cst}

In this section, for the sake of readability, we define $r_t(u) := e^{-2\Re(\phi_t(u))}$ and $\theta_t(u) := -2\Im(\phi_t(u))$ so that, recalling \eqref{eq:det_exp_phi}, we have 
\begin{equation} \label{eq:det_phi_t_polar_phi}
    \det(\bo{\Phi}_t(u)) = r_t(u)e^{i\theta_t(u)},
\end{equation}
where $\theta_t$ is continuous from Lemma \ref{lemma:phi_det_C^0}, and is an argument of $\det(\bo{\Phi}_t)$. 

Given a quadrature whose upper bound is $U > 0$, we aim to compute $k_t(u)$ for each node of the quadrature. It is clear that the value of $k_t(u)$ depends strongly on the Lipschitz constant of $\theta_t$ over $[0, U]$. 

Therefore, to compute its value, we should evaluate $\arg(\det(\bo{\Phi}_t))$ on a sufficiently fine grid, determined by the Lipschitz constant of $\theta_t$, denoted by $0 = u_0 < u_1 < \dots < u_n = U$. From these evaluations, we determine the net number of crossings $k_t(u_i)$ for all $0 \leq i \leq n$. Note that this step is independent of quadrature nodes, meaning that it can be done once for all as soon as the model parameters are fixed. Finally, for any node $u \in [0, U]$ of the quadrature, the value of $k_t(u)$ can be interpolated within this precomputed grid. The details are given in Algorithm \ref{algo:rot_count_k_t}.

\begin{algorithm}[h!] \small
\caption{Computation of $k_t(u)$ for a node $u$ of a quadrature}
\label{algo:rot_count_k_t}
\begin{algorithmic}[1]
\Input A quadrature with upper bound $U > 0$, Lipschitz constant $L_{\theta}$ of $\theta_t$ over $[0, U]$, a node of the quadrature $u \in [0,U]$.
\Output $k_t(u)$.

\State Set $N = \lceil \frac{UL_{\theta}}{\pi}\rceil$ and $u_i = i\frac{\pi}{L_{\theta}}, \; i=0, \dots, N-1$, $u_N=U$.
\State Compute $\arg(\det(\bo{\Phi}_t(u_i))), \; i=0, \dots,N$.
\State Initialize $k_t(u_0) \gets 0$ ($u_0=0$)
\For{$i = 0$ to $N-1$}
    \If{$\arg(\det(\bo{\Phi}_t(u_{i+1}))) - \arg(\det(\bo{\Phi}_t(u_i))) > \pi$}
        \State $k_t(u_{i+1}) \gets k_t(u_i) - 1$
    \ElsIf{$\arg(\det(\bo{\Phi}_t(u_{i+1}))) - \arg(\det(\bo{\Phi}_t(u_i))) < -\pi$}
        \State $k_t(u_{i+1}) \gets k_t(u_i) + 1$
    \Else
        \State $k_t(u_{i+1}) \gets k_t(u_i)$
    \EndIf
    \State Store $k_t(u_{i+1})$
\EndFor

\State For any $u \in [0, U]$, set $i=\lfloor \frac{uL_{\theta}}{\pi}\rfloor$
\If{$\arg(\det(\bo{\Phi}_t(u))) - \arg(\det(\bo{\Phi}_t(u_i))) > \pi$}
    \State $k_t(u) \gets k_t(u_i) - 1$
\ElsIf{$\arg(\det(\bo{\Phi}_t(u))) - \arg(\det(\bo{\Phi}_t(u_i))) < -\pi$}
    \State $k_t(u) \gets k_t(u_i) + 1$
\Else
    \State $k_t(u) \gets k_t(u_i)$
\EndIf

\Return $k_t(u)$
\end{algorithmic}
\end{algorithm}

\begin{proof}[Proof of the validity of Algorithm \ref{algo:rot_count_k_t}]
    We need to prove that the algorithm actually returns $k_t(u)$. First, since $\theta_t$ is $L_{\theta}$-Lipschitz over $[0,U]$ and that each $u_i$ belongs to this interval, it follows that
    \begin{equation}
        |\theta_t(u_{i+1}) - \theta_t(u_i)| \leq L_{\theta}|u_{i+1} - u_i| = \pi, \quad \text{for } i = 0, \dots, N-1.
    \end{equation}
    Recall from \eqref{eq:k_t(u,w)} that
    \begin{equation}
        k_t(u) = \frac{1}{2\pi} \left(\theta_t(u) - \arg(\det(\bo{\Phi}_t(u)))\right).
    \end{equation}
    Therefore, $k_t$ satisfies
    \begin{align}
        |k_t(u_{i+1}) - k_t(u_i)| &\leq \frac{1}{2\pi} \left(|\theta_t(u_{i+1}) - \theta_t(u_i)| + |\arg(\det(\bo{\Phi}_t(u_{i+1}))) - \arg(\det(\bo{\Phi}_t(u_i)))|\right) \\
        & \leq \frac{1}{2\pi} (\pi + 2\pi) = \frac{3}{2},
    \end{align}
    which implies that $|k_t(u_{i+1}) - k_t(u_i)| \in \{-1, 0, 1\}$, since $k_t$ is an integer-valued function.

    Now, suppose that $\arg(\det(\bo{\Phi}_t(u_{i+1}))) - \arg(\det(\bo{\Phi}_t(u_i))) > \pi$. Then, since $|\theta_t(u_{i+1}) - \theta_t(u_i)| \leq \pi$, we have $k_t(u_{i+1}) - k_t(u_i) < 0$, which implies $k_t(u_{i+1}) = k_t(u_i) - 1$. A similar argument applies if $\arg(\det(\bo{\Phi}_t(u_{i+1}))) - \arg(\det(\bo{\Phi}_t(u_i))) < -\pi$, in which case $k_t(u_{i+1}) = k_t(u_i) + 1$, or if $|\arg(\det(\bo{\Phi}_t(u_{i+1}))) - \arg(\det(\bo{\Phi}_t(u_i)))| \leq \pi$, in which case $k_t(u_{i+1}) = k_t(u_i)$. 
    
    Finally, for any $u \in [0, U]$, we have $u \in [u_i, u_{i+1})$, where $i = \lfloor \frac{uL_{\theta}}{\pi} \rfloor$. Replacing $u_{i+1}$ by $u$ in the argument above, which is possible as the only requirement on $u_{i+1}$ above was the fact that $|u_{i+1}-u_i|\leq \frac{\pi}{L_{\theta}}$, we obtain the same relationship between $k_t(u)$ and $k_t(u_i)$ than between $k_t(u_{i+1})$ and $k_t(u_i)$.
\end{proof}

Observe that line 1 to 13 of Algorithm \ref{algo:rot_count_k_t} are independent of $u$ and can therefore be executed once for all, separately from the rest of the algorithm.

Furthermore, in order to apply this algorithm, one needs to find at least an estimate of an upper bound of the Lipschitz constant of $\theta_t$ over $[0, U]$. In the integrated variance case, an analytic formula for the upper bound is available and given in Proposition \ref{prop:upper_bound_lip_cst}. In the log-price case, we were unable to find one. Therefore, algorithms specialized in estimating an upper bound of Lipschitz constants, for instance the \textit{Lipschitz Constant Estimation by Least Squares Regression} (LCLS) algorithm proposed in \cite{huang2023sample}, which is a Lipschitz constant estimation algorithm with optimal complexity, should be used as a first step. Note that the Lipschitz constant of $\theta_t$ depends on the model parameters and the maturity $T$. Thus, during a calibration procedure, an upper bound must be estimated for each set of parameter and maturity. In the integrated variance case, $\theta_t$ depends on the mean reversion rate $\kappa$, the vol-of-vol $\nu$ and the maturity $T$. In the log-price case, it depends additionally on the correlation $\rho$. 

This additional step adds complexity, but it is important to point out that if the sample points $u_1 < \dots < u_n$ from a quadrature already satisfy the condition that their increments are smaller than $\frac{\pi}{L_{\theta}}$, then only lines 1 to 13 of the algorithm are required to compute $k_t(u_i)$ for all $u_i$. Therefore, since computing the Fourier–Laplace transform using the determinant formula \eqref{eq:char_func_det_formula} already requires the values of $\det(\bo{\Phi}_t(u_i))$ for all $u_i$, it follows that in this case, computing all $k_t(u_i)$ using Algorithm \ref{algo:rot_count_k_t} would only involve $n$ argument computations and $n$ comparisons, which are computationally negligible. Additionally, the smaller the Lipschitz constant, the more easily the condition on the increments will be satisfied by the quadrature.

In practice, we found that for the rough Stein-Stein model (see Section \ref{sec:numerical_results} for details on the model), the Lipschitz constant tends to be very low, as illustrated in Figures \ref{fig:arg_of_det_H_T} and \ref{fig:arg_of_det_rho}. This implies that for this model, we can reasonably assume that the condition on the increments will always be satisfied for any quadrature, leading to a very fast computation of $k_t$. Consequently, we were able to compute the Fourier–Laplace transform as fast as with the Hybrid method described in Section \ref{subsec:approach_based_on_char_func_proxy}, as shown in Figure \ref{fig:call_price_computation_time_all_algorithms}.

However, if the Lipschitz constant is not sufficiently small, handling it properly using lines 1 to 13 of Algorithm \ref{algo:rot_count_k_t} is crucial to avoid the situation depicted in Figure \ref{fig:arg_of_det_no_care_lip_cst}, again for the rough Stein-Stein model, where in the right graph, the chosen grid is not fine enough to capture all discontinuities of the determinant's argument. The parameters (maturity, Hurst index, correlation and vol-of-vol) chosen for this example are quite extreme, but make it instructive on why the Lipschitz constant must be taken in account.
\begin{figure}[h!]
    \centering
    \includegraphics[width=0.7\textwidth]{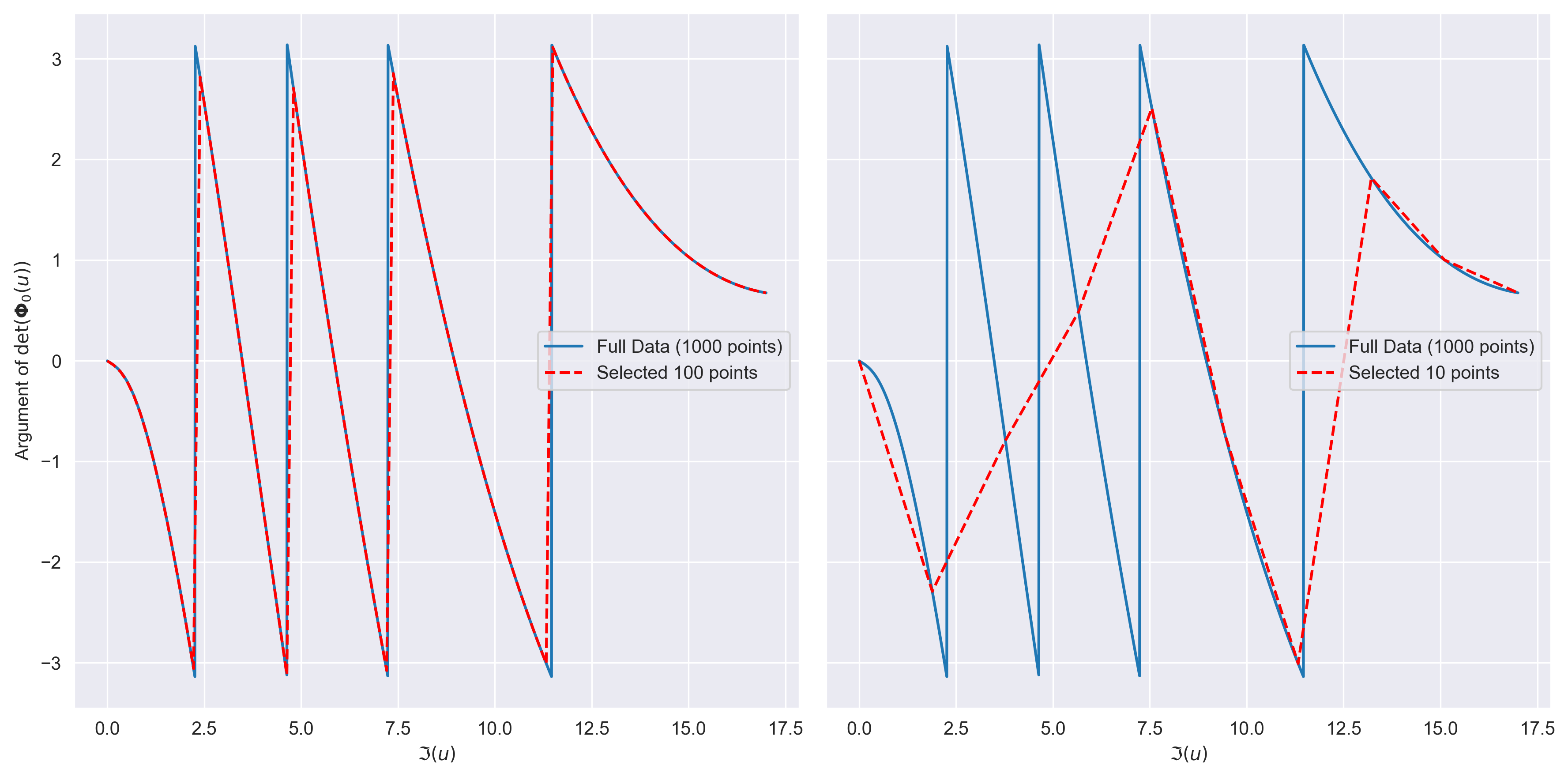} 
    \caption{\small Argument of the Fredholm determinant of $\bo{\Phi}_0$ as a function of $\Im(u)$, in the fractional kernel setting, for a fine grid of $\Im(u)$ (Left, in red), and a large grid (Right, in red). The parameters are $\kappa = 0$, $\nu = 0.7$, $\theta = -0.3$, $\rho = -1$, $X_0 = -0.05$, $H = 0.1$, $T = 3$, $n=200$, $\Re(u) = \frac{1}{2}$, $w=0$.}
    \label{fig:arg_of_det_no_care_lip_cst}
\end{figure}

\begin{proposition} \label{prop:upper_bound_lip_cst}
    Let $0 \leq t \leq T$, and $K$ be a Volterra kernel. Fix $u=0$ and $\Re(w) \leq 0$. Define $L = 2\sum_{n=1}^{+\infty} \arctan\left(\frac{\lambda_n}{1-2\Re(w)\lambda_n}\right)$. Then the map $\Im(w) \in \mathbb R_+ \mapsto \theta_t(w)$ is $L$-Lipschitz. 
\end{proposition}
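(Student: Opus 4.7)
The plan is to reduce the Fredholm determinant to an infinite product via the spectral decomposition of $\tsig_t$, derive a closed-form expression for $\theta_t$ as a sum of arctangents of the eigenvalues, and then bound the increments termwise using identities for $\arctan$.

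First, since $u=0$ gives $b(0)=\kappa\in\bb R$, the operator $\tsig_t=(\id-\kappa\bo K)^{-1}\bo\Sigma_t(\id-\kappa\bo K^*)^{-1}$ is self-adjoint, nonnegative, and trace class by Lemma~\ref{lemma:tsig_psi_well_def}. The spectral theorem for compact self-adjoint positive operators provides an orthonormal eigenbasis and nonnegative eigenvalues $(\lambda_n)_{n\geq 1}$ with $\sum_n\lambda_n<\infty$, and the multiplicative property of the Fredholm determinant on the eigenbasis yields
\begin{equation*}
\det\bigl(\bo\Phi_t(w)\bigr)=\det(\id-2w\tsig_t)=\prod_{n\geq 1}(1-2w\lambda_n).
\end{equation*}

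Second, I extract a closed form for $\theta_t$. Under $\Re(w)\leq 0$, every factor $1-2w\lambda_n$ has real part at least $1$, hence stays uniformly in the open right half-plane, and its principal argument equals $-\arctan\!\bigl(\frac{2\Im(w)\lambda_n}{1-2\Re(w)\lambda_n}\bigr)\in(-\pi/2,\pi/2)$. By Lemma~\ref{lemma:phi_det_C^0}, the map $w\mapsto -2\phi_t(w)$ is a continuous logarithm of $\det(\bo\Phi_t(w))$ on the closed left half-plane; since $\phi_t(w)$ is real for $w$ real $\leq 0$ (the operator $\bo\Phi_t(w)$ being then self-adjoint positive with positive determinant), uniqueness of continuous lifts through that real anchor forces the identity $-2\phi_t(w)=\sum_n\log(1-2w\lambda_n)$ (principal branch) throughout $\{\Re(w)\leq 0\}$. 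Taking imaginary parts gives
\begin{equation*}
\theta_t(w)=-\sum_{n\geq 1}\arctan\!\left(\frac{2\Im(w)\lambda_n}{1-2\Re(w)\lambda_n}\right).
\end{equation*}

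Third, I control increments in $s:=\Im(w)$. Setting $a_n:=\lambda_n/(1-2\Re(w)\lambda_n)\geq 0$ and $0\leq s_1<s_2$, the arctangent subtraction formula and monotonicity give
\begin{equation*}
\bigl|\arctan(2a_n s_2)-\arctan(2a_n s_1)\bigr|=\arctan\!\left(\frac{2a_n(s_2-s_1)}{1+4a_n^2 s_1 s_2}\right)\leq \arctan\bigl(2a_n(s_2-s_1)\bigr),
\end{equation*}
and the elementary inequality $\arctan(2x)\leq 2\arctan(x)$ on $[0,\infty)$ further sharpens the right-hand side to $2\arctan(a_n(s_2-s_1))$. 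Redistributing the factor $(s_2-s_1)$ from inside to outside the arctangent via monotonicity of $\Delta\mapsto\arctan(a_n\Delta)/\Delta$, and summing over $n$, produces the announced bound $|\theta_t(w_1)-\theta_t(w_2)|\leq L\,|s_1-s_2|$.

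The main obstacle will be the final redistribution step. The naive pointwise bound $|\theta_t'(s)|=\sum_n 2a_n/(1+4a_n^2 s^2)$ is maximized at $s=0$ with value $2\sum_n a_n$, which strictly exceeds $L$, so a purely pointwise derivative argument only delivers the crude Lipschitz constant $2\sum_n a_n$. Obtaining the sharp $L=2\sum_n\arctan(a_n)$ requires exploiting that each scalar factor $1-2w\lambda_n$ winds by at most $\pi/2$ in total, so that the excess of the instantaneous slope near $s=0$ cancels against the decay of $1/(1+4a_n^2 s^2)$ for large $s$ in an averaged, ``total-angle'' sense; this is what the chord identity for $\arctan$ captures, and closing it into a clean uniform Lipschitz inequality is the heart of the proof.
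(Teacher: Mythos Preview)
Your first two steps---reducing $\det(\bo\Phi_t(w))$ to the eigenvalue product and extracting $\theta_t(w) = -\sum_n \arctan(2a_n s)$ with $a_n = \lambda_n/(1-2\Re(w)\lambda_n)$ and $s=\Im(w)$---are correct and coincide with what the paper establishes through its Lemmas~\ref{lemma:det_phi_t_polar} and~\ref{lemma:two_C^0_equal_rep}. Your third step, applying the arctangent subtraction formula to reach $|\theta_t(w_1)-\theta_t(w_2)| \leq \sum_n \arctan\bigl(2a_n\Delta\bigr)$ with $\Delta = |s_1-s_2|$, is also the paper's route.

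The gap is exactly where you locate it, and your proposed resolution does not close it. You suggest that the ``total-angle'' winding of each factor (bounded by $\pi/2$) should average out the supercritical slope near $s=0$; but Lipschitz continuity is a pointwise property, equivalent for a $C^1$ function to $\sup_s |\theta_t'(s)| \leq L$. You have yourself computed $|\theta_t'(0)| = 2\sum_n a_n$, and since $\arctan(x) < x$ for every $x>0$, this strictly exceeds $L = 2\sum_n \arctan(a_n)$ whenever any $\lambda_n>0$. No chord-based or averaged argument can yield a Lipschitz constant below the supremum of the derivative. Concretely, with a single eigenvalue $\lambda_1=1$ and $\Re(w)=0$ one gets $\theta_t(w) = -\arctan(2s)$, whose Lipschitz constant on $\bb R_+$ is $2$, whereas the claimed $L = 2\arctan(1) = \pi/2 < 2$.

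The paper's own proof makes the same unjustified leap: after reaching $\sum_n \arctan(2a_n\Delta)$ it passes directly to $\bigl(2\sum_n \arctan(a_n)\bigr)\Delta$ with no argument, and that inequality fails for $\Delta<1$. The proposition as stated is therefore incorrect; the true Lipschitz constant is $2\sum_n a_n = 2\sum_n \lambda_n/(1-2\Re(w)\lambda_n)$, which is exactly what the elementary bound $\arctan(2a_n\Delta) \leq 2a_n\Delta$ delivers. This larger constant is still finite (since $\tsig_t$ is trace class) and entirely adequate for the algorithmic use the proposition is put to in Section~\ref{subsec:approach_based_on_lip_cst}.
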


\begin{proof}
    The proof is given in Appendix \ref{subsec:proof_prop_upper_bound_lip_cst}.
\end{proof} 

\subsection{A prefactor-free approximation formula} \label{subsec:prefactor-free_approx}

In this section, we derive the prefactor-free approximation formula for the Fourier–Laplace transform of the log-price and integrated variance.
To this end, we adopt the notations introduced in Section~\ref{subsec:discretized_operators}, where operators are approximated by matrices. However, to obtain this formula, we will use an other approximation formula for $\phi_0$, which will follow from the following lemma. 
\begin{lemma} \label{lemma:two_integral_formulas_for_phi}
    Let $0 \leq t \leq T$, $(u,w) \in \mc U$ and $K$ be a Volterra kernel of continuous and bounded type in $L^2_{\mathbb R}$. Then
    \begin{equation}
        \phi_t(u,w) = \int_0^1 \mathrm{Tr} \left(\bo{\Psi}_{t,s}(u,w)\bo{\Sigma}_t\right) ds,
    \end{equation}
    where for $0 \le s \le 1$, $\bo{\Psi}_{t,s}(u,w)$ is given by 
    \begin{equation}\label{eq:Psi_t}
        \bo{\Psi}_{t,s}(u,w) := a(u,w) (\text{id} - b(u)\bo{K}^*)^{-1} \left( \text{id} - 2sa(u,w)\tsig_t(u) \right)^{-1} (\text{id} - b(u)\bo{K})^{-1}.
    \end{equation}
\end{lemma}

\begin{proof}
    The proof is given in Appendix \ref{sec:proof_prefactor-free_approx}.
\end{proof}
We now propose a new approximation formula for the determinant formula, which is not the prefactor-free formula, but an intermediate step to obtain it. To ease notations, we drop any dependence on $u$ and $w$. Following the previous lemma, we propose the following approximation for $\phi_0$:
\begin{equation}
    \tilde{\phi}_n := \frac{T}{n} \int_0^1 \Tr\left(\Psi_{n,s}\Sigma_n\right) ds,
\end{equation}
where for $0 \leq s \leq 1$,
\begin{equation} 
 \Psi_{n,s} := a\left(I_n -b K_n^{\top}\right)^{-1}\left(I_n - 2sa\frac{T}{n}\tilde{\Sigma}_n\right)^{-1} \left(I_n -b K_n\right)^{-1}.
\end{equation}
Note that $\Psi_{n,1} = \Psi_n$, recalling the notations of Section~\ref{subsec:discretized_operators}. It induces the following approximation for the determinant formula:
\begin{equation} \label{eq:alternative_FLT_approx_via_trace}
    \mathbb{E} \left[ \exp \left( u \frac{\log S_T}{\log S_0} + w \int_0^T X_s^2 ds \right)\right] \approx e^{i\pi \tilde{k}_n} \frac{\exp \left(\frac{T}{n} g_n^{\top} \Psi_n g_n\right)}{\sqrt{\det(\Phi_n)}}, 
\end{equation}
with 
\begin{equation}
    \tilde{k}_n = \frac{1}{2\pi} \left( \arg(\det(\Phi_n)) + 2\Im(\tilde{\phi}_n) \right).
\end{equation}
As discussed after \eqref{eq:intvariancedetsqrt}, the prefactor-free formula will free us from the $\pm 1$ prefactor by computing $\det(\sqrt{\Phi_n})$ instead of $\sqrt{\det(\Phi_n)}$.
However, since $\Phi_n$ is in general not Hermitian, defining a matrix square root is delicate, and may not even be possible in full generality. 

The following lemma, together with the discussion that follows, aims to give a rigorous meaning to $\det(\sqrt{\Phi_n})$.

\begin{lemma} \label{lemma:tilde_Phi_intro+pos_def}
    Let $(u,w) \in \mc U$ and $K$ be a Volterra kernel of continuous and bounded type in $L^2_{\mathbb R}$. Define 
    \begin{equation} \label{eq:tilde_Phi}
        \tilde{\Phi}_n(u,w) := \left(\id - b(u,w)K_n\right) \Phi_n(u,w) \left(\id - b(u,w)K_n^*\right)
    \end{equation}
    Then,
    $\Re\left(\tilde{\Phi}_n(u,w)\right)$ and $\Im\left(\tilde{\Phi}_n(u,w)\right)$ are hermitian. Moreover, $\Re\left(\tilde{\Phi}_n(u,w)\right)$ is positive definite.
\end{lemma}

\begin{proof}
    The proof is given in Appendix \ref{sec:proof_prefactor-free_approx}.
\end{proof}
Set $(u,w) \in \mc U$. From the previous lemma, and since $K_n$ is strictly triangular, we have
\begin{align}
    \det\left(\Phi_n(u,w)\right) 
    &= \det\left(\tilde{\Phi}_n(u,w)\right) \\
    &= \det\left(\Re\left(\tilde{\Phi}_n(u,w)\right) + i\Im\left(\tilde{\Phi}_n(u,w)\right)\right) \\
    &= \det\Bigg(
        \Re\left(\tilde{\Phi}_n(u,w)\right)^{\frac{1}{2}}
         \\
        &\qquad \quad \cdot \left( I_n + i \Re\left(\tilde{\Phi}_n(u,w)\right)^{-\frac{1}{2}} \Im\left(\tilde{\Phi}_n(u,w)\right) \Re\left(\tilde{\Phi}_n(u,w)\right)^{-\frac{1}{2}} \right) \Re\left(\tilde{\Phi}_n(u,w)\right)^{\frac{1}{2}}\Bigg) \\
    &= \det\left(\Re\left(\tilde{\Phi}_n(u,w)\right)\right) \\
    &\quad \cdot 
       \det\Bigg(
       I_n + i \Re\left(\tilde{\Phi}_n(u,w)\right)^{-\frac{1}{2}} 
       \Im\left(\tilde{\Phi}_n(u,w)\right) 
       \Re\left(\tilde{\Phi}_n(u,w)\right)^{-\frac{1}{2}}
       \Bigg)
       \label{eq:det_Phi_n_decomposed}
\end{align}
Let $(\lambda_k(u,w))_{1 \leq k \leq n}$ denote the real-valued eigenvalues of the symmetric matrix
\begin{equation}
    \Re(\tilde{\Phi}_n(u,w))^{-\frac{1}{2}} \Im(\tilde{\Phi}_n(u,w)) \Re(\tilde{\Phi}_n(u,w))^{-\frac{1}{2}}.
\end{equation} 
We then define
\begin{equation}
    \det\left(\sqrt{\Phi_n(u,w)}\right) := \sqrt{\det\left(\Re\left(\tilde{\Phi}_n(u,w)\right)\right)} \prod_{i=1}^n \sqrt{1 + i \lambda_k(u,w)},
\end{equation}
where from the previous lemma, $\det(\Re(\tilde{\Phi}_n(u,w))) > 0$, and $\sqrt{\cdot}$ inside the product denotes the principal branch of the square root.

We now have all the necessary ingredients to derive the prefactor-free approximation formula.

\begin{proposition} \label{prop:det_sqrt_formula}
    Let $ g_0 \in L^2([0,T], \mathbb{R}) $ and $ K $ be a Volterra kernel of continuous and bounded type in $L^2_{\mathbb R}$. Let $(u, w) \in \mc U$. Then,
    \begin{equation}
        e^{i\pi \tilde{k}_n(u,w)} \frac{\exp \left(\frac{T}{n} g_n^{\top} \Psi_n(u,w) g_n\right)}{\sqrt{\det(\Phi_n(u,w))}} = \frac{\exp \left(\frac{T}{n} g_n^{\top} \Psi_n(u,w) g_n\right)}{\det(\sqrt{\Phi_n(u,w)})},
    \end{equation}
    so that, following \eqref{eq:alternative_FLT_approx_via_trace}, the prefactor-free formula is defined by  
    \begin{equation} \label{eq:prefactor-free_formula}
        \mathbb{E} \left[ \exp \left( u \frac{\log S_T}{\log S_0} + w \int_0^T X_s^2 ds \right)\right] :\approx \frac{\exp \left(\frac{T}{n} g_n^{\top} \Psi_n(u,w) g_n\right)}{\det(\sqrt{\Phi_n(u,w)})}.
    \end{equation}
\end{proposition}

\begin{proof}
    The proof is given in Appendix \ref{sec:proof_prefactor-free_approx}.
\end{proof}
The main advantage of the prefactor-free formula, compared with the previous approximations based on the Hybrid Algorithm~\ref{algo:compute_phi_t} and the Lipschitz Algorithm~\ref{algo:rot_count_k_t}, is that it no longer depends on the behavior of the Fredholm determinant.
The trade-off is that, instead of computing the square root of a complex number (the determinant), one must compute the determinant of a matrix square root, and hence evaluate a matrix square root itself.
When the coarse approximation used in Algorithm~\ref{algo:compute_phi_t} requires a large $n_\text{coarse}$, or when the determinant’s Lipschitz constant is high enough to necessitate a Lipschitz constant estimator algorithm in Algorithm~\ref{algo:rot_count_k_t}, then the additional cost of computing a matrix square root becomes negligible, making the prefactor-free formula computationally more efficient.
In the opposite case, the extra complexity of computing a matrix square root makes the prefactor-free formula less advantageous, as illustrated in Figure~\ref{fig:call_price_computation_time_all_algorithms} for the rough Stein–Stein model.

\section{Numerical illustrations using the rough Stein-Stein model} \label{sec:numerical_results}

In this section, we demonstrate the superior accuracy and computational efficiency of the determinant formula~\eqref{eq:char_fun_sqrt_det}, combined with either the Hybrid Algorithm~\ref{algo:compute_phi_t}, the Lipschitz Algorithm~\ref{algo:rot_count_k_t}, or the prefactor-free formula~\eqref{eq:prefactor-free_formula}, relative to the trace formula~\eqref{eq:char_func_trace_formula}.
We further compare the performance of the Hybrid and Lipschitz algorithms, as well as the prefactor-free formula, in terms of both accuracy and computational time. A Jupyter Notebook containing all the necessary code to reproduce the following numerical illustrations presented is available on Google Colab at \url{https://colab.research.google.com/drive/1O87IJpqGo1E8oGXnZxFqamYnm3Oa-_8d?usp=sharing}.

We illustrate our results on the fractional Stein–Stein model, based on the Riemann–Liouville fractional Brownian motion with the
Volterra convolution kernel $K(t,s) := \mathbbm{1}_{s<t} \frac{(t-s)^{H-\frac{1}{2}}}{\Gamma(H + \frac{1}{2})}$:
\begin{align}
dS_t &= S_t X_t dB_t, \quad S_0 > 0, \\
X_t &= g_0(t) + \frac{\nu}{\Gamma(H + 1/2)} \int_0^t (t - s)^{H - 1/2} \, dW_s,
\end{align}
with $B = \rho W + \sqrt{1 - \rho^2} W_\perp$, for $\rho \in [-1, 1]$, $\nu \in \mathbb{R}$, and a Hurst index $H \in (0, 1)$. For illustration purposes, we will consider that the input curve $g_0$, which can be used in general to fit at-the-money curves observed in the market, has the following parametric form:
\begin{equation}
g_0(t) = X_0 + \frac{1}{\Gamma(H + 1/2)} \int_0^t (t - s)^{H - 1/2} \theta \, ds = X_0 + \theta \frac{t^{H + 1/2}}{\Gamma(H + 1/2)(H + 1/2)}.
\end{equation}

For $0 \leq i \leq n$, and $1 \leq j,k \leq n$, the components of the discretized operators for this model admit the following closed form:
\begin{align}
    (K_n){jk} &= \mathbbm{1}_{k \leq j-1} \frac{1}{\Gamma(1+\alpha)} \left((t_{j-1} - t_{k-1})^{\alpha} - (t_{j-1} - t_k)^{\alpha}\right), \\
    (\Sigma_{n,i})_{jk} &= \nu^2 \mathbbm{1}_{i < (j \wedge k) - 1} \frac{1}{\Gamma(\alpha)\Gamma(1+\alpha)} \frac{(t_{j-1} \wedge t_{k-1} - t_i)^{\alpha}}{(t_{j-1} \vee t_{k-1} - t_i)^{1 - \alpha}} \, {}_2F_1\left(1, 1 - \alpha; 1 + \alpha; \frac{t_{j-1} \wedge t_{k-1} - t_i}{t_{j-1} \vee t_{k-1} - t_i}\right), \\
    (\dot{\Sigma}_{n,i})_{jk} &= -\nu^2 \mathbbm{1}_{i \leq (j \wedge k) - 1} \frac{(t_{j-1} - t_i)^{\alpha - 1}}{\Gamma(\alpha)\Gamma(1 + \alpha)} \left((t_k - t_i)^{\alpha} - (t_{k-1} - t_i)^{\alpha}\right).
\end{align}
where $\alpha = H + \frac{1}{2}$, $\Gamma$ is the Gamma function, and ${}_2F_1$ the Gaussian hypergeometric function.

Finally, the vector $g_n$ is given by
\begin{equation}
    (g_n)_i = g_0(t_i) = X_0 + \theta \frac{t_{i-1}^{\alpha}}{\Gamma(1+\alpha)}.
\end{equation}

In Figures~\ref{fig:call_price_precision_low_T} and \ref{fig:call_price_precision_high_T}, we analyze the convergence of call prices for different maturities: short ($T=0.05$) and long ($T=1$). The analysis covers strike prices ranging from in-the-money to out-of-the-money options. Furthermore, we study the impact of the Hurst index, varying from $H=0.1$, which corresponds to the rough Stein-Stein model, to $H=0.5$, which corresponds to the standard Stein-Stein model.

\begin{figure}[h!]
    \centering
    \includegraphics[width=1\textwidth,height=0.39\textheight]{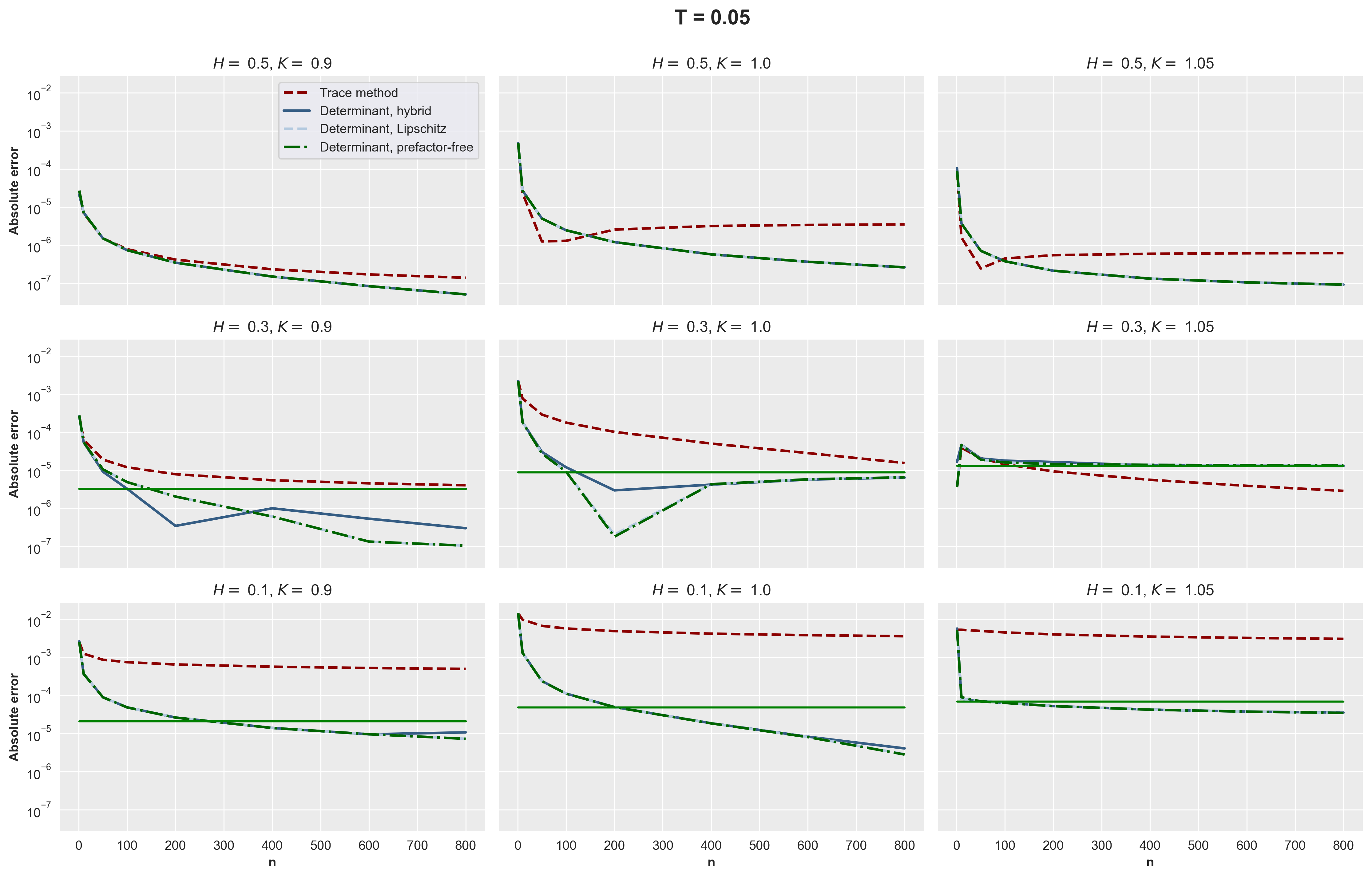} 
    \caption{\scriptsize Convergence of 0.05-year ITM (Left), ATM (Middle), and OTM (Right) call prices in both the conventional Stein-Stein model ($H=0.5$) and the rough Stein-Stein model ($H=0.3$ and $H=0.1$), using the discretization method from Section \ref{subsec:discretized_operators} for (i) the trace formula and (ii) the determinant formula combined with the hybrid and Lipschitz algorithms and the prefactor-free formula. For $H=0.5$, the benchmark is the closed-form solution of the conventional Stein-Stein model, while for $H=0.3$ and $H=0.1$, the benchmark is the 95\% Monte Carlo price. The green line represents the half-length of the confidence interval. The degree of the Gauss-Laguerre quadrature is 30. The parameters are $\kappa=0$, $\nu=0.25$, $\theta=0.1$, $\rho=-0.7$, $X_0 = 0.1$, $S_0 = 1$, and $n_{\text{coarse}}=40$.}
    \label{fig:call_price_precision_low_T}
\end{figure}

\begin{figure}[h!]
    \centering
    \includegraphics[width=1\textwidth,height=0.39\textheight]{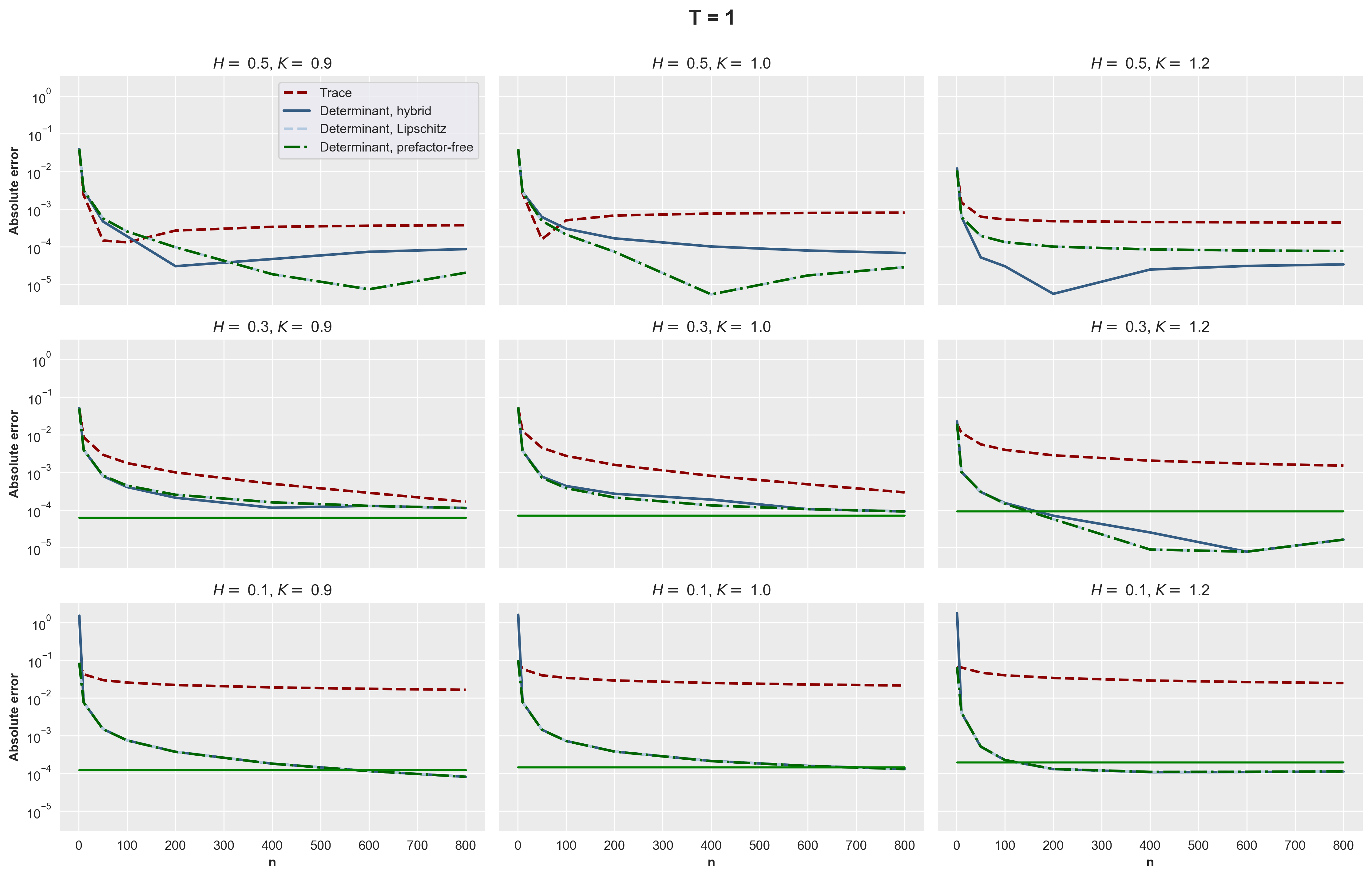} 
    \caption{\scriptsize Convergence of 1-year ITM (Left), ATM (Middle), and OTM (Right) call prices in both the conventional Stein-Stein model ($H=0.5$) and the rough Stein-Stein model ($H=0.3$ and $H=0.1$), using the discretization method from Section \ref{subsec:discretized_operators} for (i) the trace formula and (ii) the determinant formula combined with the hybrid and Lipschitz algorithms and the prefactor-free formula. For $H=0.5$, the benchmark is the closed-form solution of the conventional Stein-Stein model, while for $H=0.3$ and $H=0.1$, the benchmark is the 95\% Monte Carlo price. The green line represents the half-length of the confidence interval. The degree of the Gauss-Laguerre quadrature is 80. The parameters are $\kappa=0$, $\nu=0.25$, $\theta=0.1$, $\rho=-0.7$, $X_0 = 0.1$, $S_0 = 1$, and $n_{\text{coarse}}=40$.}
    \label{fig:call_price_precision_high_T}
\end{figure}

We computed call prices making use of Fourier inversion techniques, in particular the Lewis formula \eqref{eq:Lewis_formula} combined with the Gauss-Laguerre quadrature.

The convergence is studied in term of the number of discretization steps, following the approach described in Section \ref{subsec:discretized_operators}, 
using the trace formula \eqref{eq:char_func_trace_formula} and determinant formula \eqref{eq:char_fun_sqrt_det}, in combination with the hybrid \ref{algo:compute_phi_t} and Lipschitz \ref{algo:rot_count_k_t} algorithms. This allowed us to compare the precision and computation time of the two formulas, as well as evaluate the performance of the two algorithms for determining the correct sign of the Fourier–Laplace transform. For $H=0.5$, the benchmark is the closed-form solution of the conventional Stein-Stein model, see \cite{lord2006rotation}, while for $H=0.3, 0.1$, the benchmark is the 95\% Monte Carlo price, which is computed using $10^6$ trajectories and $10^4$ time steps.

Although the difference in accuracy between the determinant formula combined with the Hybrid or Lipschitz algorithms and the prefactor-free formula is not significant, the difference of precision with the trace formula is irrevocable: the determinant formula largely surpasses the precision of the trace formula for low Hurst indices, as expected in Remark \ref{rmk:why_using_det} and Section \ref{sec:model_&_char_fun}, and illustrated in Figures \ref{fig:call_price_precision_low_T} and \ref{fig:call_price_precision_high_T}. More than that, the trace formula is not even capable of entering the Monte-Carlo confidence interval with less than a thousand time steps, and this is even amplified for a long maturity. 

\begin{figure}[h!]
    \centering    \includegraphics[width=1\textwidth,height=0.3\textheight]{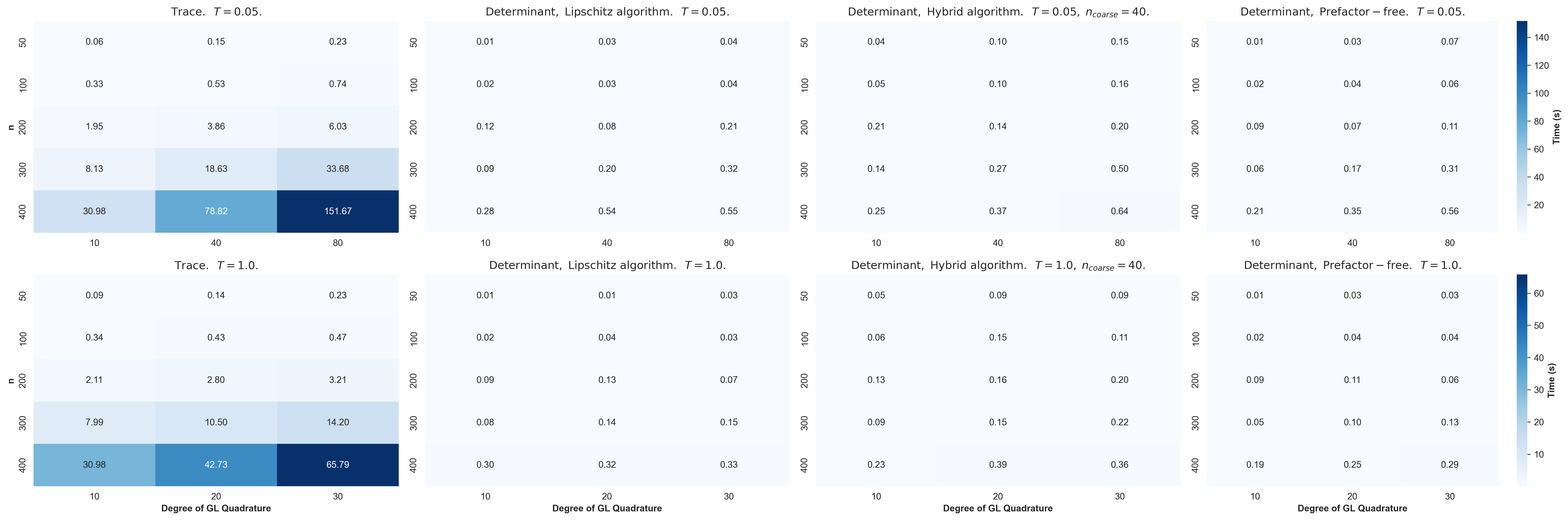} 
    \caption{\small Comparison of computation times for call price evaluation using: the trace formula, the determinant formula with the Lipschitz algorithm, the determinant formula with the Hybrid algorithm, and the determinant prefactor-free formula (from left to right). The top row corresponds to short maturity ($T=0.05$) and the bottom row to long maturity ($T=1.0$). Computation times are displayed as functions of the Gauss–Laguerre quadrature degree and the number of discretization steps. The parameters are $\kappa=0$, $\nu=0.25$, $\theta=0.1$, $\rho=-0.7$, $X_0 = 0.1$, $S_0 = 1$, and $H=0.3$.}
    \label{fig:call_price_computation_time_all_algorithms}
\end{figure}

In Figure \ref{fig:call_price_computation_time_all_algorithms}, we compare the computation times required to evaluate call prices using the trace formula, the determinant formula combined with the  Hybrid, Lipschitz algorithms, or the prefactor-free formula, for different numbers of discretization steps and Gauss–Laguerre quadrature points. The comparison is performed for both short and long maturities with a Hurst index $H = 0.3$. As expected (see Remark \ref{rmk:why_using_det} and Section \ref{sec:model_&_char_fun}), the determinant-based significantly outperforms the trace formula. Furthermore, the computation times with the two algorithms or the prefactor-free formula are of the same order. However, as discussed in Section \ref{Sec:hybrid_lip_algos}, the complexity of these algorithms depends strongly on the kernel $K$ and on the model parameters. Hence, this numerical comparison should be viewed as an illustration rather than a general rule.

\section{Proof of Theorems \ref{thm:crossing_neg_axis_integrated_variance} and \ref{thm:crossing_neg_axis_log_price}} \label{sec:proof_thm_crossing}

We recall that we defined in Section \ref{subsec:approach_based_on_lip_cst}, for a Volterra kernel $K$, the quantities $r_t(u,w) := e^{-2\Re(\phi_t(u,w))}$ and $\theta_t(u,w) := -2\Im(\phi_t(u,w))$ so that, we have 
\begin{equation} 
    \det(\bo{\Phi}_t(u,w)) = r_t(u,w)e^{i\theta_t(u,w)}
\end{equation}

The main idea behind the proofs of Theorems \ref{thm:crossing_neg_axis_integrated_variance} and \ref{thm:crossing_neg_axis_log_price} is to make explicit the polar representation of the determinant of $\bo{\Phi}_t$, recalling that $\bo{\Phi}_t = \text{id} - 2a\tsig_t$ from \eqref{eq:Phi_t}. Specifically, we aim to express the value of $\theta_t$ in terms of the spectrum of $\tsig_t$ and analyze this angle to determine under which conditions we have $e^{i\pi k_t} = -1$.

In section \ref{subsubsubsec:spectrum_study_lambda}, we show that the eigenvalues of $\tsig_t$ are decreasing and continuous in $t$, which helps us, in Section \ref{subsubsec:explicit_polar_form}, to find an explicit expression for $\theta_t$ in terms of the spectrum of $\tsig_t$. In Section \ref{subsubsec:put_every_together}, we combine all these results to prove the theorems. The proof of Corollary \ref{cor:crossing_neg_axis_log_price} will be based on the continuity of $\phi_t$ with respect to $u$ and $\rho$.

\subsection{Spectrum of $\tsig_t$ when $\rho \Im(u)=0$} \label{subsubsubsec:spectrum_study_lambda}

We start by a lemma showing that $\tsig_t(u)$ is a symmetric positive operator under the assumption $\rho \Im(u)=0$.
\begin{lemma} \label{lemma:tilde_sigma_formula}
    Let $0 \leq t \leq T$, $K$ be a Volterra kernel and $u \in \mathbb C$. Suppose that $\rho \Im(u)=0$. Then, there exists a Volterra kernel $A_t(u)$ such that
    \begin{equation}
        \tsig_t(u) = \bo{A}_t(u) \bo{A}_t(u)^*,
    \end{equation}
    making $\bo{\tsig}_t(u)$ a symmetric positive operator in $\mc B(L^2_{\mathbb R})$. Moreover, $A_t(u)$ satisfies the identity
    \begin{equation} \label{eq:A_t(s,z)_A_0(s,z)}
        A_t(u)(s,z) = A_0(u)(s,z) \mathbbm{1}_{t \leq z}, \quad z, s \leq T.
    \end{equation}
\end{lemma}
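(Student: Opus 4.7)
The plan is to exhibit an explicit square-root-type factorization $\tsig_t(u)=\bo{A}_t(u)\bo{A}_t(u)^*$ with $\bo{A}_t(u)$ an integral operator whose kernel I can then inspect directly, reading off the claimed identity \eqref{eq:A_t(s,z)_A_0(s,z)}.

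First, I would use the hypothesis $\rho\Im(u)=0$ to note that $b(u)=\kappa+\rho\nu u\in\bb R$, so that $(\id-b(u)\bo{K})^{*}=\id-b(u)\bo{K}^{*}$, i.e.~the two factors framing $\bo{\Sigma}_t$ in \eqref{eq:tilde_Sigma} are mutual adjoints. Next, I would recognize $\bo{\Sigma}_t$ itself as a Gram operator: defining $K_t(s,r):=\nu K(s,r)\mathbbm{1}_{t\leq r}$ and letting $\bo{K}_t$ be the corresponding integral operator (a Hilbert–Schmidt Volterra operator), a direct application of \eqref{eq:Sigma} and the identity $K_t^{*}(s,r)=K_t(r,s)$ (since $K_t$ is real) gives $\bo{K}_t\bo{K}_t^{*}=\bo{\Sigma}_t$.

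Setting then $\bo{A}_t(u):=(\id-b(u)\bo{K})^{-1}\bo{K}_t$, which is bounded on $L^2_{\bb C}$ by Lemma~\ref{lemma:tsig_psi_well_def}, I obtain
\begin{equation}
\bo{A}_t(u)\bo{A}_t(u)^{*}=(\id-b(u)\bo{K})^{-1}\bo{K}_t\bo{K}_t^{*}(\id-b(u)\bo{K}^{*})^{-1}=\tsig_t(u),
\end{equation}
which is manifestly self-adjoint and positive. The fact that the resulting operator actually lives in $\mc B(L^2_{\bb R})$ follows from $b(u)\in\bb R$ and $K$ real, so that $(\id-b(u)\bo{K})^{-1}$ preserves real-valued functions and the kernel of $\bo{A}_t(u)$ is real.

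Finally, to establish \eqref{eq:A_t(s,z)_A_0(s,z)}, I would invoke the resolvent representation $(\id-b(u)\bo{K})^{-1}=\id+b(u)\bo{R}(u)$, which is standard for Volterra kernels of continuous and bounded type in $L^2_{\bb R}$ (the Neumann series $\sum_{n\geq 0}b(u)^n\bo{K}^n$ converges and each iterate is a Volterra kernel, so $R(u)\in L^2([0,T]^2,\bb R)$ is itself Volterra). The kernel of $\bo{A}_t(u)$ then reads
\begin{equation}
A_t(u)(s,z)=K_t(s,z)+b(u)\int_0^T R(u)(s,r)\,K_t(r,z)\,dr.
\end{equation}
Since $K_t(\cdot,z)=\nu K(\cdot,z)\mathbbm{1}_{t\leq z}$, the indicator $\mathbbm{1}_{t\leq z}$ factors out of both terms, and the remaining bracket coincides with $A_0(u)(s,z)$, producing $A_t(u)(s,z)=\mathbbm{1}_{t\leq z}A_0(u)(s,z)$. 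The main (and only mildly technical) obstacle is to rigorously justify the kernel-level manipulations, namely the existence of the resolvent kernel $R(u)\in L^2([0,T]^2,\bb R)$ and the fact that the product $(\id-b(u)\bo{K})^{-1}\bo{K}_t$ is realised by the integral kernel written above; both are standard consequences of the Volterra structure of $K$ under Definition~\ref{def:cont_bound_kernel}.
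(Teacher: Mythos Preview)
Your proposal is correct and follows essentially the same route as the paper's proof: factor $\bo{\Sigma}_t=\bo{K}_t\bo{K}_t^{*}$, use $b(u)\in\bb R$ to take adjoints, set $\bo{A}_t(u)=(\id-b(u)\bo{K})^{-1}\bo{K}_t$ (the paper keeps $\nu$ outside $K_t$, but this is cosmetic), and then read off \eqref{eq:A_t(s,z)_A_0(s,z)} from the resolvent expansion by pulling the indicator $\mathbbm{1}_{t\leq z}$ through the $\star$-product. The only difference is that the paper cites \citet[Lemma A.2 and Example 3.2.(iv)]{abi2022characteristic} for the resolvent and the Volterra/indicator property of $R(u)\star K_t$, whereas you sketch these directly.
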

\begin{proof} 
    First, defining $\bo{K}_t$ as the operator induced by the kernel $K_t(s,z) = K(s,z) \mathbbm{1}_{t \leq z}$, we have from \eqref{eq:Sigma} that $\bo{\Sigma}_t = \bo{K}_t \bo{K}_t^*$. Moreover, as $\rho \Im(u)=0$, then $b(u) \in \mathbb{R}$, which implies that
    \begin{align}
        \tsig_t(u) &= \nu^2 (\text{id} - b(u) \bo{K})^{-1} \bo{K}_t \bo{K}_t^* (\text{id} - b(u) \bo{K}^*)^{-1} \\
        &= \nu^2 (\text{id} - b(u) \bo{K})^{-1} \bo{K}_t \left((\text{id} - b(u) \bo{K})^{-1} \bo{K}_t \right)^* \\
        &= \bo{A}_t(u) \bo{A}_t(u)^*
    \end{align}
    with $\bo{A}_t(u) = \nu (\text{id} - b(u) \bo{K})^{-1} \bo{K}_t$. Furthermore, according to \citet[Lemma A.2]{abi2022characteristic}, the kernel $b(u)K$ admits a resolvent Volterra kernel $R(u)$, so that $\left(\text{id} - b(u) \bo{K}\right)^{-1} = \text{id} + \bo{R}(u)$. Thus, we have
    \begin{equation}
        \frac{1}{\nu}A_t(u) = K_t + R(u) \star K_t,
    \end{equation}
    where the $\star$-product is defined in \eqref{eq:star_product}. Both $K_t$ and $R(u) \star K_t$ are Volterra kernels satisfying \eqref{eq:A_t(s,z)_A_0(s,z)} respectively by definition and star-product properties, following \citet[Example 3.2., (iv)]{abi2022characteristic}. Therefore, $A_t(u)$, as their sum (up to a factor $\nu$), is also a Volterra kernel satisfying \eqref{eq:A_t(s,z)_A_0(s,z)}.
\end{proof}

We now present two technical lemmas concerning the growth and continuity of the eigenvalues of $\tsig_t$, that we introduced in Section \ref{subsec:condition_on_crossings}.

\begin{lemma} \label{lemma:majo_lambda_n}
    Let $0\leq t\leq T$, $u \in \mathbb{C}$, and $n \in \mathbb{N}^*$. Let $K$ be a Volterra kernel. Suppose $\rho \Im(u)=0$. Then 
    \begin{equation}
        \lambda_{n,t}(u) \leq \lambda_{n,0}(u) \leq \lambda_{1,0}(u).
    \end{equation}
\end{lemma}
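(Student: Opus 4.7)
The plan is to apply the min--max (Courant--Fischer) characterization of eigenvalues of compact symmetric positive operators to the operators $\tsig_t(u)$ and $\tsig_0(u)$. Since the second inequality $\lambda_{n,0}(u) \leq \lambda_{1,0}(u)$ is immediate from the ordering of eigenvalues, the entire content is the first inequality $\lambda_{n,t}(u) \leq \lambda_{n,0}(u)$, which will follow from the operator-ordering $\tsig_t(u) \preceq \tsig_0(u)$ in the sense of quadratic forms.

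First, I would invoke Lemma~\ref{lemma:tilde_sigma_formula} (which applies because $\rho \Im(u) = 0$) to factorize $\tsig_t(u) = \bo{A}_t(u) \bo{A}_t(u)^*$ with $A_t(u)(s,z) = A_0(u)(s,z)\mathbbm{1}_{t \leq z}$. Then, for any $f \in L^2([0,T], \bb{R})$, I would compute
\begin{equation}
    \langle \tsig_t(u) f, f \rangle_{L^2_{\bb R}} = \langle \bo{A}_t(u)^* f, \bo{A}_t(u)^* f \rangle_{L^2_{\bb R}} = \|\bo{A}_t(u)^* f\|^2_{L^2_{\bb R}}.
\end{equation}
Using that $\bo{A}_t(u)^*$ is the integral operator with kernel $A_t(u)(z,s)$ (since $A_t(u)$ is real), the factorization $A_t(u)(z,s) = A_0(u)(z,s) \mathbbm{1}_{t \leq s}$ gives
\begin{equation}
    (\bo{A}_t(u)^* f)(s) = \mathbbm{1}_{t \leq s} \int_0^T A_0(u)(z,s) f(z)\, dz = \mathbbm{1}_{t \leq s} (\bo{A}_0(u)^* f)(s).
\end{equation}
Taking squared $L^2$-norms, the indicator restricts the integration to $[t,T]$, yielding
\begin{equation}
    \|\bo{A}_t(u)^* f\|^2_{L^2_{\bb R}} = \int_t^T |(\bo{A}_0(u)^* f)(s)|^2 ds \leq \int_0^T |(\bo{A}_0(u)^* f)(s)|^2 ds = \langle \tsig_0(u) f, f \rangle_{L^2_{\bb R}}.
\end{equation}

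With the quadratic-form inequality $\langle \tsig_t(u) f, f \rangle \leq \langle \tsig_0(u) f, f \rangle$ established for all $f \in L^2$, I would then conclude by the min--max theorem for compact self-adjoint positive operators: for every $n \in \bb{N}^*$,
\begin{equation}
    \lambda_{n,t}(u) = \max_{\substack{V \subset L^2_{\bb R} \\ \dim V = n}} \min_{\substack{f \in V \\ \|f\|=1}} \langle \tsig_t(u) f, f \rangle \leq \max_{\substack{V \subset L^2_{\bb R} \\ \dim V = n}} \min_{\substack{f \in V \\ \|f\|=1}} \langle \tsig_0(u) f, f \rangle = \lambda_{n,0}(u),
\end{equation}
which gives the desired bound. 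The main step is really the kernel calculation in the adjoint, which is essentially bookkeeping; no significant obstacle is expected. The convention $\lambda_{n,t}(u)=0$ for $n > N(\tsig_t(u))$ is compatible with the min--max characterization above, so no separate case analysis on the rank is needed.
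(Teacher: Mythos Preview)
Your proof is correct and follows essentially the same approach as the paper: both establish the quadratic-form inequality $0 \leq \langle \tsig_t(u) f, f\rangle \leq \langle \tsig_0(u) f, f\rangle$ via the factorization $\tsig_t(u)=\bo{A}_t(u)\bo{A}_t(u)^*$ and the kernel identity $A_t(u)(s,z)=A_0(u)(s,z)\mathbbm{1}_{t\leq z}$, then deduce the eigenvalue ordering (the paper cites \citet[Lemma II.1.1]{gohberg69}, you invoke Courant--Fischer directly, which is the same content). Your presentation is in fact slightly cleaner, since you isolate the relation $(\bo{A}_t(u)^*f)(s)=\mathbbm{1}_{t\leq s}(\bo{A}_0(u)^*f)(s)$ first and then the $L^2$-norm inequality is immediate, whereas the paper expands the difference $\langle(\tsig_0-\tsig_t)f,f\rangle$ through a longer chain of integral identities to reach the equivalent conclusion $\int_0^t |(\bo{A}_0(u)^*f)(x)|^2\,dx\geq 0$.
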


\begin{proof}
    By definition, we have $\lambda_{n,0}(u) \leq \lambda_{1,0}(u)$. Thus, it only stays to prove that $\lambda_{n,t}(u) \leq \lambda_{n,0}(u)$.
    According to \citet[Lemma II.1.1]{gohberg69}, it suffices to prove that 
    \begin{equation}
        0 \leq \tsig_t(u) \leq \tsig_0(u)
    \end{equation}
    in the sense 
    \begin{equation}
        0 \leq \langle \tsig_t(u) f, f \rangle_{L^2_{\mathbb R}} \leq \langle \tsig_0(u) f, f \rangle_{L^2_{\mathbb R}}, \quad f \in L^2([0,T], \mathbb{R}),
    \end{equation}
    since $\tsig_t(u)$ and $\tsig_0(u)$ are elements of $\mc B(L^2_{\mathbb R})$.
    From now and for the sake of readability, we set $\langle \cdot, \cdot \rangle :=  \langle \cdot, \cdot \rangle_{L^2_{\mathbb R}}$. Since $\rho \Im(u)=0$, we have from Lemma \ref{lemma:tilde_sigma_formula} that 
    \begin{equation}
        \tsig_t(u) = \bo{A}_t(u) \bo{A}_t(u)^*
    \end{equation}
    with $A_t(u)(s,z) = A_0(u)(s,z) \mathbbm{1}_{t \leq z}, \; z, s \leq T$. Therefore, for $f \in L^2([0,T], \mathbb{R})$, we have
    \begin{equation}
        \langle \tsig_t(u) f, f \rangle = \langle \bo{A}_t(u)^* f, \bo{A}_t(u)^* f \rangle \geq 0.
    \end{equation}
    Moreover,
    {\small
    \begin{align} 
        \langle \tsig_0(u) f, f \rangle - \langle \tsig_t(u) f, f \rangle &= \langle \left( \tsig_0(u) - \tsig_t(u) \right) f, f \rangle \\
        &= \int_0^T \left( (\tsig_0(u) f)(s) - (\tsig_t(u) f)(s) \right) f(s) ds \\
        &= \int_0^T \left(\int_0^T \left(\tilde{\Sigma}_0(u)(s,z) - \tilde{\Sigma}_t(u)(s,z)\right)f(z) dz\right) f(s) ds \\
        &= \int_0^T \left(\int_0^T \left(\int_0^T  A_0(u)(s,x)A_0(u)(z,x) dx \right. \right.\\
        &\quad\left. \left. - \int_0^T A_t(u)(s,x)A_t(u)(z,x) dx\right)f(z) dz\right) f(s) ds \\
        &= \int_0^T \left(\int_0^T \left(\int_0^T  A_0(u)(s,x)A_0(u)(z,x) dx \right. \right.\\
        &\quad\left. \left. - \int_t^T A_0(u)(s,x)A_0(u)(z,x) dx\right)f(z) dz\right) f(s) ds \\
        &= \int_0^T \left(\int_0^T \left(\int_0^t A_0(u)(s,x)A_0(u)(z,x) dx \right)f(z) dz\right) f(s) ds \\
        &= \int_0^t \left( \int_0^T A_0(u)(s,x)f(s) ds\right)\left(\int_0^T A_0(u)(z,x)f(z)dz\right) dx \\
        &= \int_0^t (\bo{A}_0(u) f(x))^2 dx \geq 0   
    \end{align}}
    where in the fifth equality, we used \eqref{eq:A_t(s,z)_A_0(s,z)}, and in the penultimate equality, we used Fubini's theorem.
\end{proof}

The following lemma establishes the continuity in time of the eigenvalues.  
\begin{lemma}\label{lemma:lambda_n_C^0}
    Let $u \in \mathbb{C}$, $n \in \mathbb{N}^*$ and $0\leq t\leq T$. Suppose $\rho \Im(u) = 0$. Then 
    \begin{equation}
        \lambda_{n,t+h}(u) \xrightarrow{h \to 0} \lambda_{n,t}(u),
    \end{equation}
    meaning that the eigenvalues of $\tsig_t(u)$ are continuous in $t$.
\end{lemma}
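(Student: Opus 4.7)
The plan is to combine two ingredients: (i) the continuity of the map $t\mapsto \tsig_t(u)$ in the uniform operator topology, and (ii) the Courant--Fischer min--max variational characterization of the eigenvalues of a self-adjoint compact operator, which implies a Lipschitz-type stability estimate with respect to the operator norm.

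First, I would establish that $t\mapsto \tsig_t(u)\in \mc{B}(L^2_{\bb C})$ is continuous in the uniform norm. This is essentially what is already done in the proof of Lemma~\ref{lemma:phi_det_C^0}: under the assumption $\rho\Im(u)=0$ we have $b(u)\in \bb R$, so $(\id - b(u)\bo{K})^{-1}$ does not depend on $t$, while $t\mapsto \bo{\Sigma}_t$ is continuous in the uniform operator norm thanks to the kernel assumptions of Definition~\ref{def:cont_bound_kernel}. Composing with the fixed bounded operator $(\id - b(u)\bo{K})^{-1}$ preserves this continuity, hence $\|\tsig_{t+\varepsilon}(u)-\tsig_t(u)\|\to 0$ as $\varepsilon\to 0$.

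Second, I would invoke the following standard consequence of the Courant--Fischer min--max principle for compact self-adjoint operators on a (possibly complex) Hilbert space: if $\bo{T}_1,\bo{T}_2$ are compact self-adjoint operators with eigenvalues arranged in non-increasing order $\mu_{1}\geq\mu_{2}\geq\cdots$ (counting multiplicity, completed by zeros), then the Weyl-type inequality
\begin{equation}
|\mu_n(\bo{T}_1)-\mu_n(\bo{T}_2)|\leq \|\bo{T}_1-\bo{T}_2\|
\end{equation}
holds for every $n\in\bb N^*$. This follows from the variational formula
\begin{equation}
\mu_n(\bo{T})=\max_{\substack{V\subset L^2\\ \dim V=n}}\min_{\substack{f\in V,\,\|f\|=1}}\langle \bo{T} f,f\rangle,
\end{equation}
by adding and subtracting $\bo{T}_1-\bo{T}_2$ inside the inner product and using Cauchy--Schwarz. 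By Lemma~\ref{lemma:tilde_sigma_formula}, since $\rho\Im(u)=0$, $\tsig_t(u)$ is symmetric positive on $L^2_{\bb R}$, and by Lemma~\ref{lemma:tsig_psi_well_def} it is trace class, hence compact; the same holds for $\tsig_{t+\varepsilon}(u)$. Complexifying if needed, both operators are self-adjoint compact on $L^2_{\bb C}$ with the same eigenvalues $\lambda_{n,t}(u)$, so the Weyl inequality applies and yields
\begin{equation}
|\lambda_{n,t+\varepsilon}(u)-\lambda_{n,t}(u)|\leq \|\tsig_{t+\varepsilon}(u)-\tsig_t(u)\|\xrightarrow[\varepsilon\to 0]{}0,
\end{equation}
which is the desired conclusion.

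I do not expect any serious obstacle: both ingredients are standard and the assumption $\rho\Im(u)=0$ is exactly what guarantees self-adjointness via Lemma~\ref{lemma:tilde_sigma_formula}, so the min--max characterization applies. The only minor care point is to be explicit that the uniform continuity $t\mapsto \tsig_t(u)$ proven inside Lemma~\ref{lemma:phi_det_C^0} does not rely on $\rho\Im(u)=0$, so it can be invoked directly here; the hypothesis $\rho\Im(u)=0$ is used only to reduce the spectral analysis to the self-adjoint setting.
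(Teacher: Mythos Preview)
Your proof is correct and reaches the same conclusion as the paper, but the two arguments differ in both steps. For the eigenvalue stability, the paper simply cites \citet[Theorem 4.2]{gohberg69}, whereas you invoke the Courant--Fischer min--max principle to obtain the explicit Weyl bound $|\lambda_{n,t+\varepsilon}(u)-\lambda_{n,t}(u)|\leq\|\tsig_{t+\varepsilon}(u)-\tsig_t(u)\|$; this is more elementary and in fact stronger, since it gives a Lipschitz estimate rather than mere continuity. For the operator continuity in $t$, you appeal to the argument sketched in Lemma~\ref{lemma:phi_det_C^0} (continuity of $t\mapsto\bo{\Sigma}_t$ composed with the fixed bounded operator $(\id-b(u)\bo{K})^{-1}$), while the paper instead re-derives it from scratch using the factorization $\tsig_t(u)=\bo{A}_t(u)\bo{A}_t(u)^*$ of Lemma~\ref{lemma:tilde_sigma_formula} together with the identity $A_t(u)(s,z)=A_0(u)(s,z)\mathbbm{1}_{t\leq z}$, Cauchy--Schwarz, and monotone convergence. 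Your route is shorter and reuses existing material; the paper's route is more self-contained and makes the role of the Volterra structure explicit. One small caveat: the continuity of $t\mapsto\bo{\Sigma}_t$ is asserted rather than proved in Lemma~\ref{lemma:phi_det_C^0}, so if you want a fully self-contained argument you should spell out why $\|\bo{\Sigma}_{t+\varepsilon}-\bo{\Sigma}_t\|\to 0$ follows from Definition~\ref{def:cont_bound_kernel}.
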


\begin{proof}
    The result follows from \citet[Chapter 1, Theorem 4.2]{gohberg69}, once we have proven that $\tsig_{t}(u)$ is continuous in $t$ with respect to the uniform norm on $\mc B(L^2_{\mathbb R})$. From Lemma \ref{lemma:tilde_sigma_formula}, it suffices to show that $\bo{A}_t(u)$ is continuous in $t$. Let $f \in L^2([0,T], \mathbb{R})$. Then, 
    \begin{align}
        \left\lVert \left(\bo{A}_{t+h}(u) - \bo{A}_t(u)\right) f \right\rVert_{L^2_{\mathbb R}}^2 
        &= \int_0^T \left( (\bo{A}_{t+h}(u)) f(s) - (\bo{A}_t(u)) f(s) \right)^2 ds \notag \\
        &= \int_0^T \left(\int_0^T A_{t+h}(u)(s,x)f(x) dx - \int_0^T A_t(u)(s,x)f(x) dx\right)^2 ds \notag \\
        &= \int_0^T \left(\int_t^{t+h} A_0(u)(s,x)f(x) dx \right)^2 ds \label{eq:int_t_t+eps} \\
        &\leq \lVert f \rVert_{L^2_{\mathbb R}}^2 \int_{[0,T]\times [t,t+h]} (A_0(u)(s,x))^2 dx ds \xrightarrow{h \to 0} 0\label{ineq:C.S}
    \end{align}
    where \eqref{eq:int_t_t+eps} follows from \eqref{eq:A_t(s,z)_A_0(s,z)}, and \eqref{ineq:C.S} from the Cauchy-Schwarz inequality and the monotone convergence theorem.
\end{proof}

\subsection{The polar representation of $\det(\bo{\Phi}_t(u,w))$ when $\rho \Im(u)=0$} \label{subsubsec:explicit_polar_form}

The next two lemmas aim to explicitly express the polar representation of $\det(\bo{\Phi}_t(u,w))$ given by $-2\phi_t(u,w)$, recall \eqref{eq:det_exp_phi}, and express it in terms of the spectrum $(\lambda_{n,t}(u))_{n \in \mathbb{N}^*}$ of $\tsig_t(u)$.\\

The first lemma provides an alternative polar representation of $\det(\bo{\Phi}_t(u,w))$ in terms of $(\lambda_{n,t}(u))_{n \in \mathbb{N}^*}$. The second lemma demonstrates that this polar representation is actually the same as the one given by $-2\phi_t(u,w)$.

\begin{lemma} \label{lemma:det_phi_t_polar} 
    Let $0 \leq t \leq T$, $u, w \in \mc U$ and $K$ be a Volterra kernel. Suppose $\rho \Im(u) = 0$. For $n \in \mathbb{N}^*$, set  
    \begin{align}
        x_{n,t}(u,w) &:= 1 - 2\Re(a(u,w))\lambda_{n,t}(u)\label{eq:a_k}, \\
        y_{n,t}(u,w) &:= - 2\Im(a(u,w))\lambda_{n,t}(u) \label{eq:b_k}.
    \end{align}
    so that, as $\tsig_t(u)$ is of trace class, 
    \begin{equation} \label{eq:det_Phi_t}
        \det(\bo{\Phi}_t(u,w)) = \prod_{n=1}^{\infty} \left(x_{n,t}(u,w)+iy_{n,t}(u,w)\right).
    \end{equation}
    Then, 
    \begin{align}
    \tilde{\theta}_t(u,w)&:=\sum_{n=1}^{\infty}  \arctan\left(\frac{y_{n,t}(u,w)}{x_{n,t}(u,w)}\right), \label{eq:tilde_theta} \\
    \tilde{r}_t(u,w)&:=\prod_{n=1}^{\infty} \sqrt{(x_{n,t}(u,w))^2 + (y_{n,t}(u,w))^2},
    \end{align}
    are well defined and 
    \begin{equation} \label{eq:det_phi_t_polar_arctan}
        \det(\bo{\Phi}_t(u,w)) = \tilde{r}_t(u,w)e^{i\tilde{\theta}_t(u,w)} 
    \end{equation}
\end{lemma}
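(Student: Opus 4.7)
The plan is to exploit the fact that under $\rho\Im(u)=0$, Lemmas \ref{lemma:tsig_psi_well_def} and \ref{lemma:tilde_sigma_formula} ensure $\tsig_t(u)$ is trace class, symmetric, and positive, so by the spectral theorem its eigenvalues $(\lambda_{n,t}(u))_{n\geq 1}$ are nonnegative and summable: $\sum_n \lambda_{n,t}(u)<\infty$. The classical Fredholm determinant identity for trace-class operators (see Appendix \ref{subsec:recall_trace_det}) then gives
\begin{equation}
\det(\bo{\Phi}_t(u,w))=\det\bigl(\id-2a(u,w)\tsig_t(u)\bigr)=\prod_{n=1}^{\infty}\bigl(1-2a(u,w)\lambda_{n,t}(u)\bigr),
\end{equation}
which is exactly \eqref{eq:det_Phi_t} once the $n$-th factor is split into its real and imaginary parts as $1-2a\lambda_{n,t}=x_{n,t}+iy_{n,t}$.

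Next I would show that $\tilde{r}_t(u,w)$ and $\tilde{\theta}_t(u,w)$ are finite. Setting $z_n:=x_{n,t}+iy_{n,t}$, direct expansion gives $|z_n|^2=1-4\Re(a)\lambda_{n,t}+4|a|^2\lambda_{n,t}^2$, so $\log|z_n|=O(\lambda_{n,t})$ as $n\to\infty$. The summability $\sum_n\lambda_{n,t}<\infty$ then yields absolute convergence of $\sum_n\log|z_n|$, making the infinite product defining $\tilde{r}_t$ convergent; positivity comes from $z_n\neq 0$, which follows from the invertibility of $\id-2a\tsig_t$ granted by Lemma \ref{lemma:tsig_psi_well_def}. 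Similarly, since $\lambda_{n,t}\to 0$, for $n$ large enough $x_{n,t}>0$ and $\arctan(y_{n,t}/x_{n,t})=O(\lambda_{n,t})$, making the series defining $\tilde{\theta}_t$ absolutely convergent.

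Finally, to obtain the polar identity \eqref{eq:det_phi_t_polar_arctan}, I would verify that $x_{n,t}(u,w)>0$ for \emph{every} $n\geq 1$. Only then does each factor admit the polar form $z_n=\sqrt{x_{n,t}^2+y_{n,t}^2}\,e^{i\arctan(y_{n,t}/x_{n,t})}$ with $\arctan$ returning a value in $(-\pi/2,\pi/2)$ equal to the principal argument of $z_n$; multiplying these representations then yields $\det(\bo{\Phi}_t(u,w))=\tilde{r}_t(u,w)\,e^{i\tilde{\theta}_t(u,w)}$. This positivity is the main obstacle: starting from $\Re(a(u,w))=\Re(w)+\tfrac12\bigl(\Re(u)^2-\Im(u)^2-\Re(u)\bigr)$, I would check it separately in the two regimes used by Theorems \ref{thm:crossing_neg_axis_integrated_variance} and \ref{thm:crossing_neg_axis_log_price}: either $u=0$ with $\Re(w)\leq 0$, or $w=0$ with $0\leq\Re(u)\leq 1$, both of which force $\Re(a)\leq 0$, hence $x_{n,t}\geq 1>0$ for all $n$. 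Absent such a sign hypothesis on $\Re(a)$, the arctan formula would only provide an argument of $z_n$ modulo $\pi$, and the identity \eqref{eq:det_phi_t_polar_arctan} would fail by an integer shift in the exponent; this is the only place in the argument where the regime of applicability is nontrivially constrained.
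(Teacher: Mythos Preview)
Your proposal is correct and follows essentially the same route as the paper: write $\det(\bo{\Phi}_t)=\prod_n(1-2a\lambda_{n,t})$ via the trace-class Fredholm identity, observe that $\Re(a)\le 0$ forces $x_{n,t}\ge 1$ so that $\arg(z_n)=\arctan(y_{n,t}/x_{n,t})\in(-\tfrac\pi2,\tfrac\pi2)$, and then use summability of $(\lambda_{n,t})_n$ to make the arctan series absolutely convergent. The only cosmetic differences are that the paper obtains convergence of $\tilde r_t$ directly from $\tilde r_m=|\det(\bo{\Phi})_m|\to|\det(\bo{\Phi})|$ rather than via your $\log|z_n|=O(\lambda_{n,t})$ estimate, and it asserts $\Re(a)\le 0$ uniformly from $(u,w)\in\mc U$ rather than splitting into the two regimes; your more explicit case check is arguably cleaner given the sign conventions in play.
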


\begin{proof}
    To simplify notations and enhance readability, we omit any explicit dependence on $t$, $u$, and $w$, as these parameters do not play a role in the proof. For $m \in \mathbb{N}^*$, set $\det(\bo{\Phi})_m := \prod_{n=1}^m \left(x_n + iy_n\right)$. First, from Remark \ref{rmk:det_Phi_well_def}, \eqref{eq:det_Phi_t} is well defined and $\lim_{m \to \infty} \det(\bo{\Phi})_m = \det(\bo{\Phi})$. We also have, since $0 \leq \Re(u) \leq 1$ and $\Re(w) \leq 0$, that
    \begin{equation}
        \Re(a) \leq 0,
    \end{equation}
    which implies, since $\lambda_n \geq 0$ for all $n \geq 1$, that 
    \begin{equation} \label{ineq:a_n_t>1}
        x_n \geq 1, \hspace{0,3cm} n \geq 1. 
    \end{equation}
    Thus, $\arg(x_n + iy_n) \in (-\frac{\pi}{2}, \frac{\pi}{2})$, so that its argument can be written as
    \begin{equation}
        \arg(x_n + iy_n) = \arctan\left(\frac{y_n}{x_n}\right).
    \end{equation}
    Therefore, $x_n + iy_n = \sqrt{x_n^2 + y_n^2}\exp\left(i\arctan\left(\frac{y_n}{x_n}\right)\right)$, and defining
    \begin{align}
        \tilde{\theta}_m &:= \sum_{n=1}^m \arctan\left(\frac{y_n}{x_n}\right), \\
        \tilde{r}_m &:= \prod_{n=1}^m \sqrt{x_n^2 + y_n^2},
    \end{align}
    we have
    \begin{equation}
        \det(\bo{\Phi})_m = \tilde{r}_m e^{i\tilde{\theta}_m}.
    \end{equation}
    Therefore, to prove \eqref{eq:det_phi_t_polar_arctan}, it suffices to prove that $\tilde{\theta}$ and $\tilde{r}$ are well defined . 
    
    First, $\tilde{r}_m = \left|\det(\bo{\Phi})_m\right| \underset{m \to \infty}{\longrightarrow} \left|\det(\bo{\Phi})\right| = \tilde{r}$, which concludes for $\tilde{r}$.
    The same type of argument doesn't work for $\tilde{\theta}_m$. In fact, for example, the sequence $u_n := e^{2i\pi n}$ satisfies $u_n = 1$, so is convergent, but $2 \pi n$ is not.\\
    However, as $|\arctan(x)| \leq |x|$ for $x \in \mathbb{R}$ and $\left|\frac{y_n}{x_n}\right| \leq |y_n|$ (recall that $x_n \geq 1$ from \eqref{ineq:a_n_t>1}), we have that
    \begin{align} \label{ineq:arctan_leq_ima(a)_lambda_n}
        \left| \arctan\left(\frac{y_n}{x_n}\right) \right| &\leq |y_n| =2|\Im(a)|\lambda_n. 
    \end{align} 
    As $\tsig$ is of trace class, we have that
    \begin{equation}
        \sum_{n=1}^{\infty} \lambda_n < +\infty
    \end{equation}
    Thus, $\tilde{\theta}_m$ is convergent and $\tilde{\theta}$ is well defined. 
\end{proof}

\noindent The next lemma shows that the two polar representations for $\det(\bo{\Phi}_t(u,w))$, given in \eqref{eq:det_phi_t_polar_phi} and \eqref{eq:det_phi_t_polar_arctan}, are actually the same.

\begin{lemma} \label{lemma:two_C^0_equal_rep}
    Let $0 \leq t \leq T$ and $u, w \in \mc U$. Suppose that $\rho \Im(u) = 0$. Then
    \begin{equation}\label{eq:polar_rep_are_equal}
        \tilde{\theta}_t(u,w)=\theta_t(u,w).
    \end{equation}
\end{lemma}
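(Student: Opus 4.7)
The plan is a continuity argument in the time variable $t$, anchored at the terminal time $T$ where both representations trivially vanish. Fix $(u,w) \in \mc U$ with $\rho \Im(u) = 0$. From Lemma \ref{lemma:det_phi_t_polar} we have $\det(\bo{\Phi}_t(u,w)) = \tilde{r}_t(u,w) e^{i\tilde{\theta}_t(u,w)}$, and from \eqref{eq:det_exp_phi} combined with $\theta_t = -2\Im(\phi_t)$ we also have $\det(\bo{\Phi}_t(u,w)) = r_t(u,w) e^{i\theta_t(u,w)}$. Comparing moduli shows $\tilde{r}_t = r_t$, and comparing arguments yields
\begin{equation}
    \theta_t(u,w) - \tilde{\theta}_t(u,w) = 2\pi k_t, \qquad k_t \in \bb{Z},
\end{equation}
for every $t \in [0,T]$.

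The next step is to show that $k_t$ is continuous on $[0,T]$. Continuity of $\theta_t$ in $t$ is immediate from Lemma \ref{lemma:phi_det_C^0}. For $\tilde{\theta}_t$, I would appeal to the normal convergence of the defining series: by Lemma \ref{lemma:lambda_n_C^0}, each $\lambda_{n,t}(u)$ is continuous in $t$, hence each summand $\arctan(y_{n,t}/x_{n,t})$ is continuous in $t$; by Lemma \ref{lemma:majo_lambda_n}, $\lambda_{n,t}(u) \leq \lambda_{n,0}(u)$ uniformly in $t$; and the bound \eqref{ineq:arctan_leq_ima(a)_lambda_n} already established in the proof of Lemma \ref{lemma:det_phi_t_polar} gives
\begin{equation}
    \left| \arctan\!\left(\frac{y_{n,t}(u,w)}{x_{n,t}(u,w)}\right) \right| \leq 2|\Im(a(u,w))|\,\lambda_{n,t}(u) \leq 2|\Im(a(u,w))|\,\lambda_{n,0}(u).
\end{equation}
Since $\tsig_0(u)$ is of trace class, the right-hand side is summable in $n$, so the Weierstrass M-test yields uniform convergence on $[0,T]$ of the partial sums and hence continuity of $\tilde{\theta}_t$ in $t$. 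Thus $t \mapsto k_t$ is continuous, integer-valued on the connected set $[0,T]$, and therefore constant.

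It remains to evaluate this constant at $t=T$. From \eqref{eq:Sigma}, $\bo{\Sigma}_T \equiv 0$, so $\tsig_T(u) = 0$, every eigenvalue $\lambda_{n,T}(u)$ vanishes, $y_{n,T}=0$ and $x_{n,T}=1$, giving $\tilde{\theta}_T(u,w)=0$. On the other hand, the terminal condition in \eqref{eq:diff_log_det} yields $\phi_T(u,w)=0$, hence $\theta_T(u,w)=0$. Therefore $k_T = 0$, so $k_t \equiv 0$ and $\theta_t(u,w) = \tilde{\theta}_t(u,w)$ on $[0,T]$.

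The main technical obstacle is the continuity of $\tilde{\theta}_t$ in $t$; everything else in this plan is routine. The delicate point is that one cannot conclude $\tilde{\theta}_t$ is continuous from the continuity of each term plus pointwise convergence alone — uniform domination is required, and this is where Lemma \ref{lemma:majo_lambda_n} does the decisive work by reducing the $t$-dependent summability to the fixed trace-class property of $\tsig_0(u)$.
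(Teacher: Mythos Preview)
Your proof is correct and follows essentially the same approach as the paper: a continuity argument in $t$ showing that $\theta_t - \tilde{\theta}_t \in 2\pi\bb{Z}$ is constant and vanishes at $t=T$, with the continuity of $\tilde{\theta}_t$ established via uniform convergence using Lemma \ref{lemma:majo_lambda_n} to dominate by the trace-class eigenvalues $\lambda_{n,0}(u)$. The paper's own proof is structured identically, and you have correctly identified the key technical point and the role of Lemma \ref{lemma:majo_lambda_n}.
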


\begin{proof}
    For the sake of readability, we do not specify any explicit dependence of the variables on $u$ and $w$. Suppose that $t \mapsto \tilde{\theta}_t$ is continuous. Then, as $t \mapsto \theta_t$ is continuous from Lemma \ref{lemma:phi_det_C^0} and the fact that $\theta_t = -2\Im(\phi_t)$, and given that both $\theta_t$ and $\tilde{\theta}_t$ are an argument of $\det\left(\bo{\Phi}_t\right)$, we have that $\theta_t$ and $\tilde{\theta}_t$ differ by a constant factor $2il_t\pi$ with $l_t \in \mathbb Z$. The final condition $\theta_T=\tilde{\theta}_T=0$ concludes that $l \equiv 0$, so that $\theta \equiv \tilde{\theta}$. 
    
    It stays to prove the continuity in time of $\tilde{\theta}$. First, as each eigenvalue is continuous in time, recalling Lemma \ref{lemma:lambda_n_C^0}, the general term of the series defining $\tilde{\theta}$ is made of continuous functions. We now show that this series converges uniformly in time, which will conclude the proof. We do it by refining inequality \eqref{ineq:arctan_leq_ima(a)_lambda_n}. Utilizing that $\Im(a(u,w)) = \Im(w) + \frac{1}{2}\Im(u^2-u)$ and the inequality $\lambda_{n,t} \leq \lambda_{n,0}$ (as per Lemma \ref{lemma:majo_lambda_n}), we obtain
    \begin{equation} \label{ineq:refined_ineq_arctan}
         \left| \arctan\left(\frac{y_{n,t}}{x_{n,t}}\right) \right| \leq \left(|\Im(w)| + \frac{1}{2}|\Im(u^2-u)|\right)\lambda_{n,0}, \quad t \leq T,
    \end{equation}
    where the right-hand term is independent of $t$, and the general term of a convergent series, since $\tsig_0$ is of trace class.
\end{proof}

The following and last lemma makes the link between the equation $e^{i\pi k_t}=-1$ and the value of $\theta$.

\begin{lemma} \label{lemma:number_crossings}
    Let $0 \leq t \leq T$, $(u,w) \in \mc U$. Suppose the conditions of Theorem \ref{thm:char_fun_sqrt_det_rot_count} are satisfied. Then,
    \begin{equation} \label{eq:number_crossings}
        e^{i\pi k_t(u,w)} = -1 \Longleftrightarrow \theta_t(u,w) \in \bigsqcup_{n \in \mathbb{Z}} \Bigl[\pi + 2n \cdot 2\pi, \hspace{0.1cm} \pi + (2n+1) \cdot 2\pi\Bigr)
    \end{equation}
    Moreover, fixing $\Re(u)$ and $\Re(w)$, we have:
    \begin{equation} \label{eq:no_crossing}
        \begin{aligned} 
        e^{i\pi k_t(u,0)} = 1, \; \forall \Im(u) \in \mathbb R &\Longleftrightarrow -\pi \leq \theta_t(u,0) < \pi, \; \forall \Im(u) \in \mathbb R, \\
        e^{i\pi k_t(0,w)} = 1, \; \forall \Im(w) \in \mathbb R &\Longleftrightarrow -\pi \leq \theta_t(0,w) < \pi, \; \forall \Im(w) \in \mathbb R.
    \end{aligned}
    \end{equation}
\end{lemma}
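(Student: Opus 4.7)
The argument hinges on the defining formula \eqref{eq:n_t(u,w)} for $n_t$. Using $\theta_t(u,w) = -2\Im(\phi_t(u,w))$, I would first rewrite \eqref{eq:n_t(u,w)} in the equivalent form
\begin{equation}
    \arg(\det(\bo{\Phi}_t(u,w))) = \theta_t(u,w) + 2\pi n_t(u,w).
\end{equation}
Since the principal argument takes values in $(-\pi, \pi]$, this identity characterizes $n_t(u,w)$ as the unique integer for which $\theta_t(u,w) + 2\pi n_t(u,w)$ lies in the fundamental window $(-\pi, \pi]$. This is the single structural identity driving both equivalences.

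For the first equivalence \eqref{eq:number_crossings}, note that $e^{i\pi n_t} = -1$ if and only if $n_t$ is odd. Writing $n_t = 2k+1$ for $k \in \bb{Z}$, the constraint $\theta_t + 2\pi(2k+1) \in (-\pi, \pi]$ rearranges as $\theta_t \in (-3\pi - 4\pi k, -\pi - 4\pi k]$; reindexing $k \mapsto -n-1$ produces exactly the disjoint union of intervals of length $2\pi$ spaced by gaps of length $2\pi$ displayed in the statement (up to the usual endpoint convention, which is immaterial when $\det(\bo{\Phi}_t)$ does not lie precisely on the boundary).

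For the second equivalence \eqref{eq:no_crossing}, I plan to combine the above characterization with a continuity/connectedness argument. First, at $u = 0$ (resp.\ $w = 0$), the factor $a(0,0) = 0$ forces $\bo{\Psi}_s(0,0) = 0$, hence $\phi_t(0,0) = 0$, giving $\theta_t(0,0) = 0$ and $n_t(0,0) = 0$. By Lemma~\ref{lemma:phi_det_C^0}, the map $\Im(u) \mapsto \theta_t(u,0)$ is continuous. The characterization shows that the integer-valued function $n_t$ is locally constant except at points where $\theta_t + 2\pi n_t$ reaches $\pm \pi$, and at such a crossing $n_t$ jumps by exactly $\pm 1$ (because $\arg$ jumps by exactly $\mp 2\pi$ when the determinant crosses the negative real axis). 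Consequently, if $e^{i\pi n_t(u,0)} = 1$ for every $\Im(u)$, then $n_t$ takes only even values, starts at $0$, and can move only by unit increments; continuity thus forces $n_t \equiv 0$, i.e.\ $\theta_t(u,0) \in (-\pi, \pi]$ throughout. The converse direction is immediate from the uniqueness property. The integrated variance case is handled identically.

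The main technical point I expect to require care is the claim that $n_t$ can only jump by $\pm 1$: this relies on showing that as $\theta_t$ varies continuously across a value where $\det(\bo{\Phi}_t)$ meets the negative real axis, the principal argument shifts by exactly $\mp 2\pi$ (never by larger multiples). This is a direct consequence of the definition of the principal branch combined with the continuity of $\theta_t$, so I anticipate no real obstacle, but it is the place where the argument has to be made precise. A minor cosmetic discrepancy between the half-open intervals $(-\pi, \pi]$ (from the definition of $\arg$) and $[-\pi, \pi)$ (as written in \eqref{eq:no_crossing}) concerns only the single point $\theta_t = \pm \pi$ and does not affect the substance of the equivalence.
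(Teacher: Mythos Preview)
Your proposal is correct. For the first equivalence \eqref{eq:number_crossings} you proceed exactly as the paper does (direct unwinding of the definition \eqref{eq:n_t(u,w)} together with ``$n_t$ odd $\Leftrightarrow e^{i\pi n_t}=-1$''), and you are right to flag the endpoint convention: with $\arg\in(-\pi,\pi]$ one obtains half-open intervals of the form $(\pi+4\pi n,\,3\pi+4\pi n]$ rather than $[\pi+4\pi n,\,3\pi+4\pi n)$, which is a cosmetic discrepancy in the statement.

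For the second equivalence \eqref{eq:no_crossing} your route differs from the paper's. The paper argues by a direct intermediate-value contradiction: assuming $e^{i\pi n_t}\equiv 1$, if $\theta_t$ ever left $[-\pi,\pi)$ then, since $\theta_t=0$ at the real base point and $\theta_t$ is continuous (Lemma~\ref{lemma:phi_det_C^0}), it would have to take a value in the union $\bigsqcup_n\Pi_n$ from part one, forcing $e^{i\pi n_t}=-1$ at that point. Your argument instead tracks $n_t$ itself: it is integer-valued, equals $0$ at the real base point, can only move by $\pm 1$ (because $n_t$ is the unique integer placing the continuous function $\theta_t$ in a fixed window of width $2\pi$), and is assumed even throughout, hence must be identically $0$. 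Both arguments are valid; the paper's is shorter and reuses part one as a black box, while yours gives a more explicit description of how $n_t$ evolves, at the price of justifying the ``unit jump'' property you correctly identify as the only point requiring care.
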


\begin{proof}
    The proof of \eqref{eq:number_crossings} readily follows from Formulas \eqref{eq:k_t(u,w)} and \eqref{eq:det_exp_phi}, together with the fact that $e^{i\pi k_t} = -1$ if and only if $k_t$ is odd. 
    
    Concerning \eqref{eq:no_crossing}, the implication $\Leftarrow$ is immediate from \eqref{eq:number_crossings}. Concerning the implication $\Rightarrow$, we only treat the case of $\Im(u)$, without loss of generality. Set $s=\Re(u)$, and suppose that $e^{i\pi k_t(u,0)} = 1$,  for all $\Im(u) \in \mathbb R$. Assume for contradiction that there exists $\Im(u_1) \in \mathbb R$ such that $u_1=s+i\Im(u_1)$ verifies $\theta_t(u_1,0) \notin [-\pi, \pi)$. Then, by continuity of $\Im(u) \mapsto \theta_t(s+i\Im(u),0)$ (again from Lemma \ref{lemma:phi_det_C^0} and the fact that $\theta_t(u) = -2\Im(\phi_t(u))$), and since $\theta(s, 0)=0$, there exists $\Im(u_0) \in (0, \Im(u_1)]$, such that $u_0=s+i\Im(u_0)$ verifies $\theta_t(u_0,0) \in \bigsqcup_{n \in \mathbb{Z}} \Bigl[\pi + 2n \cdot 2\pi, \hspace{0.1cm} \pi + (2n+1) \cdot 2\pi\Bigr)$, and thus $e^{i\pi k_t(u_0,0)} = -1$ from \eqref{eq:number_crossings}, which is a contradiction. 
\end{proof}

 \subsection{Putting everything together} \label{subsubsec:put_every_together}
 
We recall that the tangent function is increasing and odd, and it is a bijection from $(-\frac{\pi}{2}, \frac{\pi}{2})$ to $\mathbb{R}$, with its reciprocal, the arctangent function, also being increasing and odd, and a bijection from $\mathbb{R}$ to $(-\frac{\pi}{2}, \frac{\pi}{2})$.

\begin{proof}[Proof of Theorem \ref{thm:crossing_neg_axis_integrated_variance}]
    From this point forward, we assume $u = 0$ and fix $\Re(w) \leq 0$. Additionally, we fix $0 \leq t \leq T$. For simplicity and to enhance readability, we omit explicit references to $t$ and $u$ in the notation, and we will state the dependence on $w$ only if the variable depends on $\Im(w)$.
    We separate the proof into two cases.\\
    
    \noindent {\textbullet} \textbf{Case 1}. $N(\tsig) \in \{1,2\}$. 
    
    If $N(\tsig) \in \{1,2\}$, it follows from \eqref{eq:tilde_theta} and the fact that $\lambda_n = 0$ for $n > N(\tsig)$, that $|\theta(w)| < \frac{\pi}{2}N(\tsig) \leq \pi, \; \Im(w) \in \mathbb R$, which concludes the proof for this case by Lemma \ref{lemma:number_crossings}. \\
    
    \noindent {\textbullet} \textbf{Case 2}. $N(\tsig) \geq 3$. 
    
    For $n \in \mathbb{Z}$, set
    \begin{equation} \label{eq:Pi_n}
        \Pi_n := \left[\pi + 2n \cdot 2\pi, \hspace{0.1cm} \pi + (2n+1) \cdot 2\pi\right).
    \end{equation}
    From Lemma \ref{lemma:phi_det_C^0}, the map $\Im(w) \in \mathbb{R} \mapsto \theta(w) = -2\Im(\phi(w))$ is continuous. Suppose that it is also strictly decreasing with the following asymptotic behavior:
    \begin{equation} \label{eq:asymptotic_theta}
        \lim_{\Im(w) \to -\infty} \theta(w) = \frac{\pi}{2}N(\tsig), \quad \lim_{\Im(w) \to +\infty} \theta(w) = -\frac{\pi}{2}N(\tsig).
    \end{equation} 
    Therefore, $\theta$ is a bijection from $\mathbb R$ to $(-\frac{\pi}{2}N(\tsig),\frac{\pi}{2}N(\tsig))$. Since $N(\tsig) > 3$, we have $\frac{\pi}{2} N(\tsig) > \pi$, which implies, from \eqref{eq:asymptotic_theta} and \eqref{eq:Pi_n}, that there exists an integer $M_0 \in \mathbb{N}^* \cup \{+\infty\}$ such that for all $n$ satisfying $-M_0 \leq n \leq M_0 - 1$, the set $\theta^{-1}(\Pi_n)$ is an interval with strictly positive length, while for all other values of $n$, $\theta^{-1}(\Pi_n)$ is at most a singleton. Therefore, it follows from
    \begin{equation}
        \theta^{-1} \left(\bigsqcup_{n \in \mathbb{Z}} \Pi_n\right) = \bigsqcup_{n \in \mathbb{Z}} \theta^{-1} (\Pi_n)
    \end{equation}
     that $\theta^{-1} \left(\bigsqcup_{n \in \mathbb{Z}} \Pi_n\right)$ is the union of disjoint intervals with strictly positive length.
     Therefore, by Lemma \ref{lemma:number_crossings}, the equality $e^{i\pi n(w)} = -1$ holds on a union of disjoint intervals with strictly positive length. Furthermore, if $N(\tsig) = +\infty$, then, from the asymptotic behavior \eqref{eq:asymptotic_theta}, $M_0=+\infty$, meaning that this union of intervals is infinite and unbounded. 

    It remains to prove that $\Im(w) \mapsto \theta(w)$ is strictly decreasing on $\mathbb{R}$, and verifies $\lim_{\Im(w) \to -\infty} \theta(w) = \frac{\pi}{2}N(\tsig)$, $\lim_{\Im(w) \to +\infty} \theta(w) = -\frac{\pi}{2}N(\tsig)$. We do it by using that $\theta(w)=\tilde{\theta}(w)$ from Lemma \ref{lemma:two_C^0_equal_rep} and the explicit expression \eqref{eq:tilde_theta} for $\tilde{\theta}$:
    \begin{equation} \label{eq:theta(w)}
        \theta(w) = \sum_{n=1}^{+\infty} \arctan\left(\frac{y_n(w)}{x_n}\right) = \sum_{n=1}^{+\infty} \arctan\left(\frac{-2\Im(w)\lambda_n}{1-2\Re(w)\lambda_n}\right),
    \end{equation}
    where only $y_n(w)$ depends on $\Im(w)$, so that we wrote $x_n(w)=x_n$.

    Concerning the monotonicity of $\theta$, note that since $\arctan$ is strictly increasing on $\mathbb{R}$, and that for any $n$, $y_n$ is decreasing in $\Im(w)$, then so is $\theta$. 
    
    We now prove that $\lim_{\Im(w) \to -\infty} \theta(w) = \frac{\pi}{2}N(\tsig)$ and $\lim_{\Im(w) \to +\infty} \theta(w) = -\frac{\pi}{2}N(\tsig)$. First, it follows from the fact that $\Re(w) \leq 0$, that for any nonnegative integer $n$, we have $\lambda_n \geq 0$ and $x_n \geq 1$, so that $\arctan\left(\frac{y_n(w)}{x_n}\right)$ has the same sign than $-\Im(w)$, which does not depend on $n$. Furthermore, taking the limit in $\pm \infty$ in \eqref{eq:theta(w)} shows that:
    \begin{equation} \label{eq:limit_arctan}
        \lim_{\Im(w) \to \pm \infty} \arctan\left(\frac{y_n(w)}{x_n}\right) = \mp \frac{\pi}{2}.
    \end{equation}
    Combining the constant sign and monotonicity of $\Im(w) \mapsto \arctan\left(\frac{y_n(w)}{x_n}\right)$, as well as the limit in \eqref{eq:limit_arctan}, we can conclude using the monotone convergence theorem, that the desired limits hold.

    Finally, we need to prove inequality \eqref{ineq:first_crossing_int_variance}. Let $r>0$ and $N_r \geq 3$ be the number of eigenvalues of $\tsig$ greater than $r$. Then, using that
    \begin{align}
        |\theta(w)| 
        &= \sum_{n=1}^{+\infty} \arctan\left(\frac{2|\Im(w)|\lambda_n}{1-2\Re(w)\lambda_n}\right) \\
        & \geq \sum_{n=1}^{N_r} \arctan\left(\frac{2|\Im(w)|\lambda_n}{1-2\Re(w)\lambda_n}\right) \\
        & \geq N_r \arctan\left(\frac{2|\Im(w)|}{\frac{1}{r}-2\Re(w)}\right),
    \end{align}
    We conclude that if $N_r \arctan\left(\frac{2|\Im(w)|}{\frac{1}{r}-2\Re(w)}\right) > \pi$, then $|\theta(w)| > \pi$. Equivalently, since $N_r \geq 3$, if $\Im(w) > \tan\left(\frac{\pi}{N_r}\right) \left(\frac{1}{2r}-\Re(w)\right)$, then $|\theta(w)| > \pi$. This combined with Lemma \ref{lemma:number_crossings} concludes the proof. 
\end{proof}

We now give the proof of Theorem \ref{thm:crossing_neg_axis_log_price}.

\begin{proof}[Proof of Theorem \ref{thm:crossing_neg_axis_log_price}]
    From this point forward, we assume $w = \rho = 0$ and fix $\Re(u) \leq 0$. Consequently, any dependence on $u$ is reduced to a dependence on $\Im(u)$. Additionally, we fix $0 \leq t \leq T$. For simplicity and to enhance readability, we omit explicit references to $t$ and $w$ in the notation, and will, when there is no ambiguity, identify $u$ to $\Im(u)$. 
    
    We separate the proof into two cases. Note that as we suppose $\rho=0$, then $\tsig$ isn't dependent on $u$, as well as its eigenvalues. Moreover, since $\frac{1}{\nu^2}\tsig$ is independent of $\nu$, we will consider, only in the proof of the first case, the eigenvalues $\lambda_n$ of $\frac{1}{\nu^2}\tsig$ instead of the eigenvalues of $\tsig$. This change of notation will help determining the vol-of-vol threshold. \\

    In this case, we have $x_n(u) = \lambda_n \nu^2 \Im(u)(1-2\Re(u))$ and $y_n(u) = 1 +  \lambda_n \nu^2\left(\Im(u)^2 + \Re(u)(1 - \Re(u))\right)$, so that
    \begin{equation} \label{eq:theta_t_log_price}
        |\theta(u)| = \sum_{n=1}^{\infty} \arctan\left(\frac{\lambda_n \nu^2|\Im(u)||1-2\Re(u)|}{1 +  \lambda_n \nu^2\left( \Im(u)^2 + \Re(u)(1 - \Re(u)) \right)}\right), \quad \Im(u) \in \mathbb{R}.
    \end{equation}

    \noindent \textbf{Case 1}. $N(\tsig) \in \{1,2\}$ or $\Re(u)=\frac{1}{2}$ or $\nu < \nu^*$. 

    The proof for the case $N(\tsig) \in \{1,2\}$ is the exact same than in the proof of Theorem \ref{thm:crossing_neg_axis_integrated_variance}. If $\Re(u)=\frac{1}{2}$, the result comes from the fact that $1-2\Re(u) = 0$ in \eqref{eq:theta_t_log_price} so that $\theta \equiv 0 < \pi$, and we conclude again with Lemma \ref{lemma:number_crossings}.

    Let's now prove the existence of the threshold $\nu^*$. The idea here is to see $\theta$ as a function of $u$ and $\nu$, and to prove that for $\nu$ small enough, we have $|\theta(u,\nu)| < \pi$ for any $\Im(u) \in \mathbb R$. 
    
    Suppose that we have proven that $\theta(u,\nu) \underset{\nu \to 0^+}{\longrightarrow} 0$ uniformly in $\Im(u)$. Then, there exists $\nu_{\pi} > 0$ such that
    \begin{equation}
        |\theta(u,\nu)| < \pi, \quad \Im(u) \in \mathbb{R}, \; \nu \leq \nu_{\pi}.
    \end{equation}
    Set $\nu^* := \sup\{\nu_{\pi} > 0 \;|\; |\theta(u,\nu)| < \pi, \; \forall \Im(u) \in \mathbb{R}, \; \nu \leq \nu_{\pi}\}$. We have that $\nu^* \in (0,+\infty]$, and noticing that for any value of $\Im(u)$, $\nu \mapsto |\theta(u,\nu)|$ is strictly nondecreasing, we conclude that
    \begin{equation}
        \nu < \nu^* \Longrightarrow |\theta(u,\nu)| < \pi, \; \forall \Im(u) \in \mathbb{R},
    \end{equation}
    or equivalently from Lemma \ref{lemma:number_crossings}
    \begin{equation}
        \nu < \nu^* \Longrightarrow e^{i \pi k_t(u)}=1, \; \forall \Im(u) \in \mathbb{R}.
    \end{equation}
    It stays to prove that $\theta(u,\nu) \underset{\nu \to 0^+}{\longrightarrow} 0$ uniformly in $\Im(u)$. This will be done in three steps. First, we will prove that $(\Im(u),\nu) \mapsto \theta(u,\nu)$ is continuous over $\mathbb{R} \times \mathbb{R}_+$. Then, that $|\theta(u,\nu)| \underset{|\Im(u)| \longrightarrow +\infty}{\longrightarrow} 0$ uniformly in $\nu$ over any compact of $\mathbb{R}_+$. Finally, we will combine these two steps to complete the proof. \\
    \emph{First step}. $(\Im(u),\nu) \mapsto \theta(u,\nu)$ is continuous over $\mathbb{R} \times \mathbb{R}_+$
    
    Since $\theta=-2\Im(\phi)$, the continuity of $\phi$ suffices. The continuity in $\Im(u)$ comes from Lemma \ref{lemma:phi_det_C^0}, while the on in $\nu$ is a direct adaptation of the proof of the same lemma.\\
    \emph{Second step}. $|\theta(u,\nu)| \underset{|\Im(u)| \longrightarrow +\infty}{\longrightarrow} 0$ uniformly in $\nu$ over any compact of $\mathbb{R}_+$
    
    Let $V > 0$. As already mentioned, for any $\Im(u) \in \mathbb R$, the map $\nu \mapsto \theta(u,\nu)$ is strictly nondecreasing. Moreover, we have
    \begin{equation}
        \frac{ \lambda_n V^2|\Im(u)||1-2\Re(u)|}{1 +  \lambda_n V^2\left( \Im(u)^2 + \Re(u)(1-\Re(u))\right)} \longrightarrow 0, \quad |\Im(u)| \longrightarrow +\infty.
    \end{equation}
    Since for $a,b,c \in \mathbb R_+^*$, the map $x \in \mathbb R_+ \mapsto \frac{ax}{b+cx^2}$ is strictly decreasing over $\left[\sqrt{\frac{b}{2c}}, +\infty\right)$, then the map $|\Im(u)| \in \mathbb R_+ \mapsto \frac{ \lambda_n V^2|\Im(u)||1-2\Re(u)|}{1 +  \lambda_n V^2\left( \Im(u)^2 + \Re(u)(1-\Re(u))\right)}$ is eventually strictly decreasing. Therefore, by the monotone convergence theorem, we obtain that for any $\nu \leq V$:
    \begin{equation}
        |\theta(u,\nu)| \leq |\theta(u, V)| \underset{|\Im(u)| \longrightarrow +\infty}{\longrightarrow} 0.
    \end{equation} 
    \emph{Third step}. We complete the proof 
    
    Let $\epsilon > 0$. By the second point, there exists $U > 0$ such that for $|\Im(u)| > U$ and $\nu \leq V$,
    \begin{equation}
        |\theta(u,\nu)| \leq \epsilon.
    \end{equation}
    Since $\theta_t$ is uniformly continuous over $[-U, U] \times [0, V]$ and $\theta_t(u, 0) = 0$, we can choose $0 < v < V$ such that
    \begin{equation}
        |\theta(u,\nu)| \leq \epsilon, \quad |\Im(u)| < U, \; \nu \leq v.
    \end{equation}
    Combining the last two inequalities, we conclude that
    \begin{equation}
        |\theta(u,\nu)| \leq \epsilon, \quad |\Im(u)| \in \mathbb{R}, \; \nu \leq v.
    \end{equation}

    \noindent \textbf{Case 2}. $N(\tsig) \geq 3$, $\Re(u) \neq \frac{1}{2}$ and $\nu > \nu^*$. 

    We now prove that if $\nu$ exceeds the threshold $\nu^*$, then $e^{i\pi n(u)} = -1$ for $\Im(u)$ on a finite union of disjoint, non-empty intervals, and that inequality \eqref{ineq:first_crossing_log_price} is satisfied.
    As in the proof of Theorem \ref{thm:crossing_neg_axis_integrated_variance}, we introduce, for $n \in \mathbb{Z}$, the set
    \begin{equation}
        \Pi_n := \left[\pi + 2n \cdot 2\pi, \hspace{0.1cm} \pi + (2n+1) \cdot 2\pi\right).
    \end{equation}
    
    Since $\nu \mapsto \theta(u,\nu)$ is strictly nondecreasing, and by the definition of $\nu^*$, there exists some $|\Im(u)| > 0$ such that $|\theta(u)| > \pi$. Therefore, since $\Im(u) \mapsto \theta(u)$ is continuous and odd, the intermediate value theorem ensures the existence of an integer $M_0 \in \mathbb{N}^* \cup \{+\infty\}$ such that for all $n$ satisfying $-M_0 \leq n \leq M_0 - 1$, the set $\theta^{-1}(\Pi_n)$ contains an interval with strictly positive length, while for all other values of $n$, $\theta^{-1}(\Pi_n)$ is at most a singleton. Therefore, it follows from
    \begin{equation}
        \theta^{-1} \left(\bigsqcup_{n \in \mathbb{Z}} \Pi_n\right) = \bigsqcup_{n \in \mathbb{Z}} \theta^{-1} (\Pi_n)
    \end{equation}
     that $\theta^{-1} \left(\bigsqcup_{n \in \mathbb{Z}} \Pi_n\right)$ contains a union of disjoint intervals with strictly positive length.
     Therefore, by Lemma \ref{lemma:number_crossings}, the equality $e^{i\pi n(u)} = -1$ holds on a union of disjoint intervals with strictly positive length. Moreover, since, as shown previously, $|\theta(u)| \underset{|\Im(u)| \longrightarrow +\infty}{\longrightarrow} 0$, $\Im(u) \mapsto \theta(u)$ is bounded, meaning that $M_0$ must be finite, or equivalently, that the union must be finite.
    
    Next, we prove inequality \eqref{ineq:first_crossing_log_price}. Returning to the standard notation, let $(\lambda_n)_n$ denote the eigenvalues of $\tsig$, rather than those of $\frac{1}{\nu^2} \tsig$. Let $r > 0$ be such that $N_r \geq 3$. We have
    \begin{align}
        |\theta(u)| 
        &= \sum_{n=1}^{\infty} \arctan\left(\frac{\lambda_n |\Im(u)||1-2\Re(u)|}{1 +  \lambda_n \left( \Im(u)^2 + \Re(u)(1 - \Re(u)) \right)}\right) \\
        & \geq \sum_{n=1}^{N_r} \arctan\left(\frac{\lambda_n |\Im(u)||1-2\Re(u)|}{1 +  \lambda_n \left( \Im(u)^2 + \Re(u)(1 - \Re(u)) \right)}\right) \\
        & \geq N_r \arctan\left(\frac{r|\Im(u)||1-2\Re(u)|}{1 +  r\left( \Im(u)^2 + \Re(u)(1 - \Re(u)) \right)}\right),
    \end{align}
    we conclude that $|\theta(u)| > \pi$ if $N_r \arctan\left(\frac{r|\Im(u)||1-2\Re(u)|}{1 +  r\left( \Im(u)^2 + \Re(u)(1 - \Re(u)) \right)}\right) > \pi$, or equivalently, since $N_r \geq 3$, if the second order polynomial inequality $~|\Im(u)|^2 -|\Im(u)|\frac{|1-2\Re(u)|}{\tan(\frac{\pi}{N_r})} +\frac{1}{r} + \Re(u)(1-\Re(u)) < 0$ is satisfied. As $0 \leq \Re(u) \leq 1$, the lowest root of the polynomial is positive, meaning that the inequality is satisfied for $\Im(u)$ between the roots of the polynomial, and in particular for some $\Im(u)$ such that:
    \begin{equation}
        |\Im(u)| > \frac{|1-2\Re(u)|}{2\tan\left(\frac{\pi}{N_r}\right)}-\sqrt{\left(\frac{1-2\Re(u)}{2\tan\left(\frac{\pi}{N_r}\right)}\right)^2 -4\left(\frac{1}{r} + \Re(u)(1-\Re(u))\right)}.
    \end{equation}
\end{proof}

We finish this section by the proof of Corollary \ref{cor:crossing_neg_axis_log_price}.

\begin{proof}[Proof of Corollary \ref{cor:crossing_neg_axis_log_price}]
    We will, when necessary, state the dependence of $\tsig_t$ and $\theta_t$ to $\rho$.
    
    In the same spirit as in the proof of Lemma \ref{lemma:phi_det_C^0}, we can show that $\phi_t$ is continuous as a function of $(u, \rho)$, which implies that $\theta_t = -2\Im(\phi_t)$ is also continuous with respect to $u$ and $\rho$. Similarly, we can demonstrate that $(u,\rho) \mapsto \tsig_t$ is continuous. Consequently, by applying \citet[Theorem 4.2]{gohberg69}, each eigenvalue $\lambda_{n,t}$ of $\tsig_t$ is continuous as a function of $(u, \rho)$. 

    \noindent \textbf{Case 1}. $N(\tsig_t(\rho=0)) \in \{1,2\}$ or $\Re(u)=\frac{1}{2}$ or $\nu \leq \nu^*$.
    
    Then, for $M>0$, it follows from Heine Theorem that $(\Im(u),\rho) \mapsto \theta_t(u,\rho)$ is uniformly continuous over the compact $[-M,M] \times [-1,1]$. Setting $|\theta_t(u_{\text{max}}, 0)| := \max_{|\Im(u)|\leq M}|\theta_t(u,0)|$, we have from Theorem \ref{thm:crossing_neg_axis_log_price} that $|\theta_t(u_{\text{max}}, 0)| <\pi$. Taking $\varepsilon>0$ such that for any $|\rho| \leq \varepsilon$ and $|\Im(w)| \leq M$, $|\theta_t(u, \rho)-\theta_t(u, 0)|<\pi - \theta_t(w_{\text{max}}, 0)$, it follows from the triangle inequality that for such $\rho$ and $\Im(u)$, $|\theta_t(u,\rho)<|\pi|$, which combined with Lemma \ref{lemma:number_crossings}, completes the proof in this case. 

    \noindent \textbf{Case 2}. $N(\tsig_t(\rho=0)) \geq 3$, $\Re(u)\neq \frac{1}{2}$ and $\nu>\nu^*$.
    
    As in Case 1, we can choose $\rho$ sufficiently small such that $\theta(u, \rho)$ is as close as desired to $\theta(u, 0)$ uniformly for $\Im(u)$ within a compact set $[-M, M]$. Therefore, from Lemma \ref{lemma:number_crossings} and Theorem \ref{thm:crossing_neg_axis_log_price}, for sufficiently small $\rho$, $e^{i\pi k_t(u)}$ will also take the value $-1$ on a finite union of disjoint intervals with strictly positive length, each bounded by $M$. 
    
    We now demonstrate the inequality satisfied by $\Im(u^*)$. Let $r>0$ and $N_r$ be the number of eigenvalues of $\tsig_t(\rho=0)$ greater than $r$. From the proof of Theorem \ref{thm:crossing_neg_axis_log_price}, $\theta(u,0)$ is greater than $\pi$ when $\Im(u)$ lies between the two roots of a second order polynomial whose lowest root is positive and equals: 
    \begin{equation}
        \Im(u_r) := \frac{|1-2\Re(u)|}{2\tan\left(\frac{\pi}{N_r}\right)}-\sqrt{\left(\frac{1-2\Re(u)}{2\tan\left(\frac{\pi}{N_r}\right)}\right)^2 -4\left(\frac{1}{r} + \Re(u)(1-\Re(u))\right)}.
    \end{equation}
    For $\delta>0$ such that $\Im(u_r)+\delta$ still lies between the two roots, there exists $\varepsilon>0$ such that for any $|\rho| \leq \varepsilon$, $|\theta(u_r+i\delta, \rho) - \theta(u_r+i\delta,0)| < \theta(u_r+i\delta,0) - \pi$. Thus, $|\theta(u_r+i\delta, \rho)| > \theta(u_r+i\delta, 0) - |\theta(u_r+i\delta, 0)-\theta(u_r+i\delta, \rho)|>\pi$. Thus $\Im(u^*) \leq \Im(u_r) + \delta$, which is the desired inequality.
\end{proof}

\appendix

\section{Spectrum, trace and determinant of compact linear operators} \label{sec:recall_trace_det}

In this section, we recall classical results on operator theory in Hilbert spaces regarding their trace and their determinant. For further details, we refer to \cite{gohberg69}, \cite{simon1977notes, simon2005trace}, \cite{smithies1958integral}, and \citet[Section 2 and 3]{bornemann2010numerical}.

We provide the definition of the rank of a bounded linear operator, along with an important result concerning the spectrum and eigenvalues of compact operators. 

\begin{definition}
    Let $\bo{A}$ be a bounded operator on an Hilbert space $(\mc{H}, \langle \cdot, \cdot \rangle)$. Let $Im(\bo{A})$ be the set of its values. The rank of $\bo{A}$ is the dimension of the subspace $\overline{Im(\bo{A})}$, where the closure is taken with respect to the uniform topology. It is denoted by $N(\bo{A})$.
\end{definition}

\begin{proposition} \label{prop:compact_op_ppt}
    Suppose $\bo{A}$ is a compact operator on an Hilbert space $\mc{H}$, then:
\begin{itemize}
    \item[1.] The spectrum of $\bo{A}$ is at most countable. Each point of the spectrum is isolated, with the possible exception of $0$.
    \item[2.] If the space $\mc{H}$ is of infinite dimension, then $0$ belongs to the spectrum of $\bo{A}$.
    \item[3.] Any nonzero complex number in the spectrum of $\bo{A}$ is an eigenvalue and the corresponding eigenspace is of finite dimension.
    \item[4.] If it is countably infinite, any enumeration $\{\lambda_1, \lambda_2, … \}$ of the spectrum is a sequence with limit 0.
\end{itemize}
\end{proposition}
The trace and the determinant are two important functionals on the space of compact operators. Such quantities are defined for operators of trace class. A compact operator $\bo{A}$ is said to be of trace class if the quantity
\begin{equation} 
    \Tr |\bo{A}| := \sum_{n \geq 1} \langle |\bo{A}| v_n, v_n \rangle
\end{equation}
is finite for a given orthonormal basis $(v_n)_{n \geq 1}$, where $|\bo{A}|:=\sqrt{\bo{A}\bo{A}^*}$. In case the quantity above is finite, then
\begin{equation} \label{eq:Trace_of_trace_class_op}
    \Tr \bo{A} := \sum_{n \geq 1} \langle \bo{A} v_n, v_n \rangle
\end{equation}
is well defined, and it can be shown that the quantity on the right-hand side of \eqref{eq:Trace_of_trace_class_op} is independent of the choice of the orthonormal basis and will be called the trace of the operator $\bo{A}$. Furthermore, Lidskii’s theorem \citet[Theorem 3.7]{simon2005trace} ensures that
\begin{equation}
    \Tr \bo{A} = \sum_{n=1}^{N(\bo{A})} \lambda_n(\bo{A}).
\end{equation}
The trace norm of a trace class operator $\bo{A}$ id defined by 
\begin{equation}
    \doublenorm{\bo{A}}_1 := \Tr(|\bo{A}|).
\end{equation}
The trace norm is a norm on the space of trace class operators. Moreover, if $\bo{A}$ is a trace class operator and $\bo{B}$ is a bounded operator, then $\bo{AB}$ is a trace class operator and we have 
\begin{equation}
    \doublenorm{\bo{AB}}_1 \leq \doublenorm{\bo{A}}_1 \doublenorm{\bo{B}}
\end{equation}

Furthermore, the equivalence
\begin{equation}
    \prod_{n \geq 1} (1 + |\lambda_n|) < \infty \iff \sum_{n \geq 1} |\lambda_n| < \infty,
\end{equation}
allows one to define a determinant functional for a trace class operator $\bo{A}$ by
\begin{equation}
    \det(\id + z \bo{A}) := \prod_{n=1}^{N(\bo{A})} \left( 1 + z \lambda_n(\bo{A}) \right),
\end{equation}
for all $z \in \mathbb{C}$. If in addition $\bo{A}$ is an integral operator induced by a continuous kernel $A \in L^2([0, T]^2, \mathbb{C})$, then one can show that
\begin{equation} \label{eq:fredholm_det_integral_operator}
    \det(\id + z \bo{A}) = \sum_{n \geq 0} \frac{z^n}{n!} \int_0^T \dots \int_0^T 
    \det \big[ \big( A(s_i, s_j) \big)_{1 \leq i, j \leq n} \big] ds_1 \dots ds_n.
\end{equation}

The determinant \eqref{eq:fredholm_det_integral_operator} is named after \cite{Fredholm1903}, who defined it for the first time for integral operators with continuous kernels. The product of a bounded operator and a Trace class operator is a trace class operator.

A compact operator $\bo{A}$ is said to be an Hilbert-Schmidt operator if the quantity $\Tr(\bo{A}\bo{A^*})$ is finite. A trace class operator is an Hilbert-Schmidt operator, and whenever $\bo{K}$ is a linear operator induced by a kernel $K \in L^2([0, T]^2, \mathbb{C})$, $\bo{K}$ is an Hilbert-Schmidt. Furthermore, the product of two Hilbert-Schmidt operators $\bo{K}$ and $\bo{L}$ is of trace class.

\section{Proof of the invertibility of $\id - 2a\tsig_t$} \label{subsec:invertiblity}

\begin{lemma} \label{lemma:invertibility}
    Let $0 \leq t \leq T$, $(u,w) \in \mc U$, and $K$ be a Volterra kernel of continuous and bounded type in $L^2_{\mathbb R}$. Then, $\id-2a(u,w)\tsig_t(u)$ is invertible and $\bo{\Psi}_t(u,w)$ is well-defined.
\end{lemma}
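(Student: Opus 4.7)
The plan is to reduce invertibility to injectivity via the Fredholm alternative, use the Volterra structure of $K$ to restrict the eigenvalue equation to $L^2([t,T])$, and close the injectivity argument with an energy estimate on that interval.

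From Lemma~\ref{lemma:tsig_psi_well_def}, $\tsig_t(u)$ is trace class on $L^2_{\bb C}$, hence compact, so $2a(u,w)\tsig_t(u)$ is compact and by the Fredholm alternative it suffices to show $\ker(\id-2a(u,w)\tsig_t(u))=\{0\}$. Let $f\in L^2_{\bb C}$ satisfy $f=2a\tsig_t(u)f$ with $a\neq 0$ (the case $a=0$ is trivial). Writing $\bo\Sigma_t=\nu^2\bo K_t\bo K_t^*$ with $K_t(s,z)=K(s,z)\mathbbm{1}_{z\geq t}$ as in the proof of Lemma~\ref{lemma:tilde_sigma_formula}, the identity becomes
\begin{equation}
(\id-b(u)\bo K)f=2a\nu^2\bo K_t\bo K_t^*(\id-b(u)\bo K^*)^{-1}f.
\end{equation}
The right-hand side vanishes on $[0,t]$: for $s\leq t$, the integrand in $(\bo K_t h)(s)=\int_t^T K(s,z)h(z)\,dz$ vanishes because the Volterra property gives $K(s,z)=0$ whenever $z\geq t\geq s$. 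Hence $f$ solves the homogeneous Volterra equation $f(s)=b(u)\int_0^s K(s,z)f(z)\,dz$ on $[0,t]$, whose only $L^2$ solution is $f\equiv 0$, by quasi-nilpotence of $b(u)\bo K$ on $L^2([0,t])$.

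Since $f\equiv 0$ on $[0,t]$ and both $\bo\Sigma_t$ and $(\id-b(u)\bo K)^{-1}$ preserve functions supported in $[t,T]$ (the latter since the resolvent of the Volterra $b\bo K$ on $[0,T]$ coincides on $[t,T]^2$ with the resolvent of its restriction), the equation descends to $L^2([t,T])$ as $\tilde f=2a\widetilde{\tsig_t}\tilde f$ with $\widetilde{\tsig_t}=(\id-b\widetilde{\bo K})^{-1}\nu^2\widetilde{\bo K}\widetilde{\bo K^*}(\id-b\widetilde{\bo K^*})^{-1}$, the tildes denoting restrictions to $L^2([t,T])$ and using that $K_t\equiv K$ on $[t,T]^2$. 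Setting $\tilde g:=(\id-b\widetilde{\bo K^*})^{-1}\tilde f$ and $\tilde v:=\widetilde{\bo K^*}\tilde g$, taking the $L^2([t,T])$-inner product of $(\id-b\widetilde{\bo K})(\id-b\widetilde{\bo K^*})\tilde g=2a\nu^2\widetilde{\bo K}\widetilde{\bo K^*}\tilde g$ with $\tilde g$ and extracting real parts yields
\begin{equation}
\lVert\tilde f\rVert^2+2\Im(b(u))\Im\langle\tilde f,\tilde v\rangle=2\Re(a(u,w))\nu^2\lVert\tilde v\rVert^2.
\end{equation}
Combining the bound $2\Re(a(u,w))\leq -\Im(u)^2$ on $\mc U$ (which follows from $\Re(w)\leq 0$ and $\Re(u)\in[0,1]$), the identity $|\Im(b(u))|=|\rho|\nu|\Im(u)|\leq\nu|\Im(u)|$, and $|\Im\langle\tilde f,\tilde v\rangle|\leq\lVert\tilde f\rVert\lVert\tilde v\rVert$ recasts the relation as
\begin{equation}
\bigl(\lVert\tilde f\rVert-|\rho|\nu|\Im(u)|\lVert\tilde v\rVert\bigr)^{\!2}+(1-\rho^2)\nu^2\Im(u)^2\lVert\tilde v\rVert^2\leq 0.
\end{equation}

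Since $|\rho|\leq 1$, both terms on the left are non-negative and must vanish. For $|\rho|<1$ this immediately gives $\tilde v=\tilde f=0$, hence $\tilde g=0$. In the borderline case $|\rho|=1$, saturation of Cauchy--Schwarz forces $\tilde f=\mu\tilde v$ for some $\mu\in\bb C$; substituting back into the real identity yields $\mu=-i\Im(b(u))$, which rewrites as $(\id-\Re(b(u))\widetilde{\bo K^*})\tilde g=0$, whence $\tilde g=0$ by the quasi-nilpotence of $\widetilde{\bo K^*}$ on $L^2([t,T])$. Thus $f\equiv 0$, completing the proof of injectivity. Well-definedness of $\bo\Psi_t(u,w)$ then follows directly from~\eqref{eq:Psi_t} as a finite composition of bounded operators, the other factors being supplied by Lemma~\ref{lemma:tsig_psi_well_def}. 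The principal obstacle is the borderline $|\rho|=1$: the Cauchy--Schwarz inequality saturates and the energy identity no longer yields vanishing on its own; the argument closes only because the restricted Volterra-type operator $\widetilde{\bo K^*}$ remains quasi-nilpotent, forcing $\id-c\widetilde{\bo K^*}$ to be invertible for every $c\in\bb C$.
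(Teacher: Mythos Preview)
Your proof is correct and takes a genuinely different route from the paper's.

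The paper first reduces to the case $\Re(b)=0$ by absorbing $\Re(b)\bo K_t$ into a new Volterra kernel $\widetilde{\bo K}_t:=(\id-\Re(b)\bo K_t)^{-1}\bo K_t$, then conjugates $\id-2a\tsig_t$ by $(\id-i\Im(b)\bo K_t)$ to obtain an operator $\bo M_t$ with $\bo M_t-\id$ of trace class, and finally shows $\det(\bo M_t)\neq 0$ by proving that every eigenvalue $\lambda$ of $\bo M_t-\id$ satisfies $\Re(\lambda)\leq 0$, via the inner-product identity $\Re(\lambda)=(\Im(b)^2+2\Re(a)\nu^2)\lVert\bo K_t^* f\rVert^2\leq 0$.

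You instead invoke the Fredholm alternative directly, localize the eigenvalue equation to $L^2([t,T])$ using the Volterra structure, and run an energy estimate on a putative kernel element. The underlying sign condition is the same (your bound $2\Re(a)\leq -\Im(u)^2$ combined with $|\Im(b)|\leq \nu|\Im(u)|$ is exactly $\Im(b)^2+2\Re(a)\nu^2\leq 0$), but you organize it as a completed-square inequality rather than an eigenvalue real-part bound. What the paper gains from its preliminary reduction to $\Re(b)=0$ is uniformity: the eigenvalue argument handles all $\rho\in[-1,1]$ at once, with no case split. Your approach avoids determinants and the auxiliary kernel $\widetilde{\bo K}_t$ altogether, staying closer to the original operator, but in return you must treat $|\rho|=1$ separately---and there you effectively rediscover the paper's reduction, since the equality case collapses to $(\id-\Re(b)\widetilde{\bo K^*})\tilde g=0$, which is precisely the homogeneous equation for the absorbed real part. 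Both arguments are clean; the paper's is slightly more uniform, yours slightly more elementary.
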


\begin{proof}
    For the sake of readability, we drop the dependence in $(u,w)$ in the notations. Setting $K_t(s,z) := K(s,z) \mathbbm{1}_{z\geq t}$, for $0 \leq s,z \leq T$, we have from \citet[Lemma A.5.]{abi2022characteristic} that 
    \begin{equation}
        \tsig_t = \nu^2\left(\id -b\bo{K}_t\right)^{-1} \bo{K}_t \bo{K}_t^* \left(\id -b\bo{K}_t^*\right)^{-1}.
    \end{equation}
    As $K_t$ is a Volterra kernel, then for any $z \in \mathbb C$, $\id-z\bo{K}_t$ and $\id-z\bo{K}_t^*$ are invertible, so that we have
    \begin{align}
        \tsig_t 
        &= \nu^2\left(\id - b\bo{K}_t\right)^{-1} \bo{K}_t \bo{K}_t^* \left(\id - b\bo{K}_t^*\right)^{-1} \\
        &= \nu^2\left(\id - \Re(b)\bo{K}_t - i\Im(b)\bo{K}_t\right)^{-1} \bo{K}_t \bo{K}_t^*
           \left(\id - \Re(b)\bo{K}_t^* - i\Im(b)\bo{K}_t^*\right)^{-1} \\
        &= \nu^2\left(\id - i\Im(b)\left(\id - \Re(b)\bo{K}_t\right)^{-1}\bo{K}_t\right)^{-1} 
           \left(\id - \Re(b)\bo{K}_t\right)^{-1} \bo{K}_t \\
        &\quad \cdot \bo{K}_t^* \left(\id - \Re(b)\bo{K}_t^*\right)^{-1} 
           \left(\id - i\Im(b)\bo{K}_t^*\left(\id - \Re(b)\bo{K}_t^*\right)^{-1}\right)^{-1} \\
        &= \nu^2\left(\id -i\Im(b)\bo{\tilde{K}}_t\right)^{-1} \bo{\tilde{K}}_t \bo{\tilde{K}}_t^* \left(\id -i\Im(b)\bo{\tilde{K}}_t^*\right)^{-1}.
    \end{align}
    where we set $\bo{\tilde{K}}_t := \left(\id-\Re(b)\bo{K}_t\right)^{-1}\bo{K}_t$. Moreover, again from \citet[Lemma A.2.]{abi2022characteristic}, the resolvent kernel $R_t^b$ of $\Re(b)\bo{K}_t$ is also a Volterra kernel of continuous and bounded type in $L^2_{\mathbb R}$, so that from \citet[Example 3.2., (iv)]{abi2022characteristic}, $\tilde{K}_t = K_t + R_t^b \star K_t$ is also a Volterra kernel of continuous and bounded type in $L^2_{\mathbb R}$. Therefore, we can, without loss of generality, reduce the proof of the lemma to the case $\Re(b)=0$. 

    In this case, we write $b=i\Im(b)$, and it follows from the invertibility of $\id -i\Im(b)\bo{K}_t$ and $\id -i\Im(b)\bo{K}_t^*$ that the invertibility of $\id-2a\tsig_t$ is equivalent to the one of 
    \begin{align} 
        \bo{M}_t 
        &:= \left(\id -i\Im(b)\bo{K}_t\right) (\id-2a\tsig_t)\left(\id -i\Im(b)\bo{K}_t^*\right) \\
        &= \id -i\Im(b)(\bo{K}_t+\bo{K}_t^*) -(\Im(b)^2+2a\nu^2)\bo{K}_t\bo{K}_t^*. \label{eq:M_t}
    \end{align}
    Therefore, we need to prove that $0$ does not belong to the spectrum of $\bo{M}_t$, or equivalently that $1$ does not belong to the spectrum of $\id - \bo{M}_t$. Since $\id - M_t$ is compact, it follows from Point 3 of Proposition \ref{prop:compact_op_ppt} that it suffices to show that $1$ is not an eigenvalue of $\id - \bo{M}_t$. Denote the eigenvalues of $\id - \bo{M}_t$ by $(\lambda_{n,t})_{n \geq 1}$, counting multiplicity, and ordering them as
    \begin{equation}
        |\lambda_{1,t}| \geq |\lambda_{2,t}| \geq \dots \geq 0.
    \end{equation} 
    Let $n \geq 1$ and $f_n$ be an eigenvector associated to $\lambda_{n,t}$ with norm 1. We will prove that $\Re(\lambda_{n,t}) \leq 0$ which will conclude. On the one hand, we have
    \begin{equation} \label{eq:(id-M)f,f_1}
        \langle \left(\id - \bo{M}_t\right) f_n, f_n\rangle_{L^2_{\mathbb C}} = \lambda_{n,t} \lVert f_n\rVert_{L^2_{\mathbb C}}^2 = \lambda_{n,t},
    \end{equation}
    and on the other hand, we have from \eqref{eq:M_t} that
    \begin{equation} \label{eq:(id-M)f,f_2}
        \langle \left(\id - \bo{M}_t\right) f_n, f_n\rangle_{L^2_{\mathbb C}} 
        = i\Im(b)\langle(\bo{K}_t+\bo{K}_t^*)f_n,f_n\rangle_{L^2_{\mathbb C}} +(\Im(b)^2+2a\nu^2)\langle\bo{K}_t\bo{K}_t^* f_n, f_n\rangle_{L^2_{\mathbb C}} 
    \end{equation}
    Equalizing \eqref{eq:(id-M)f,f_1} and \eqref{eq:(id-M)f,f_2} and taking real parts yields
    \begin{equation}
        \Re(\lambda_{n,t}) = (\Im(b)^2 + 2\Re(a)\nu^2)\lVert\bo{K}_t^* f_n\rVert_{L^2_{\mathbb C}}.
    \end{equation}
    Finally, since $0 \leq \Re(u) \leq 1$, $0 \leq \Re(w)$, and $|\rho|\leq 1$, then 
    \begin{equation}
        \Im(b)^2 + 2\Re(a)\nu^2 =\nu^2((\rho^2-1)\Im(u)^2 + 2\Re(w) +  \Re(u)(\Re(u)-1)) \leq 0,
    \end{equation}
    which concludes.
\end{proof}

\section{Proof of Proposition \ref{prop:upper_bound_lip_cst}} \label{subsec:proof_prop_upper_bound_lip_cst}

In order to prove the proposition, which is given in the integrated variance case, we use the notations and results of Section \ref{sec:proof_thm_crossing}.

\begin{proof}[Proof of Proposition \ref{prop:upper_bound_lip_cst}]
    We have, recalling \eqref{eq:theta(w)},
    \begin{equation}
        \theta_t(w) = \sum_{n=1}^{+\infty} \arctan\left(\frac{-2\Im(w)\lambda_n}{1-2\Re(w)\lambda_n}\right)
    \end{equation}
    where $\Re(w) \leq 0$ and $\lambda_n \geq 0$ for all $n \geq 1$. Let $w_1$ and $w_2$ two complex numbers with a common real part $s \leq 0$, and a positive imaginary part. Using the identity $\arctan(x)-\arctan(y) = \arctan\left(\frac{x-y}{1+xy}\right)$ for $x,y$ such that $xy>-1$, applied with $x = \frac{-2\Im(w_1)\lambda_n}{1-2s\lambda_n}$ and $y = \frac{-2\Im(w_2)\lambda_n}{1-2s\lambda_n}$ (they verify $xy\geq 0>-1$ since $\Im(w_1)$ and $\Im(w_2)$ are positive), yields
    \begin{equation}
        |\theta_t(w_1) - \theta_t(w_2)| \leq \sum_{n=1}^{+\infty} \arctan\left(\frac{2\lambda_n (1-2s\lambda_n)|\Im(w_1)-\Im(w_2)|}{(1-2s\lambda_n)^2 + 4\lambda_n\Im(w_1)\Im(w_2)}\right).
    \end{equation}
    Since $4\lambda_n \Im(w_1)\Im(w_2) \geq 0$, we obtain
    \begin{equation}
        |\theta_t(w_1) - \theta_t(w_2)| \leq \left(2\sum_{n=1}^{+\infty} \arctan\left(\frac{\lambda_n}{1-2s\lambda_n}\right)\right) |\Im(w_1)-\Im(w_2)|.
    \end{equation}
\end{proof}

\section{Proofs for Section~\ref{subsec:prefactor-free_approx}} \label{sec:proof_prefactor-free_approx}

\begin{proof}[Proof of Lemma \ref{lemma:two_integral_formulas_for_phi}]
    For the sake of readability, we omit any reference to $t$. Since from Lemma \ref{lemma:tsig_psi_well_def}, $\tsig$ is of trace class and $\id - 2a\tsig$ is invertible, it follows from \citet[IV.1.2.]{gohberg69} that
    \begin{equation}
        \det\left(\bo{\Phi}(u,w)\right) = \exp\left(- 2a(u,w)\int_0^1 \Tr\left(\tsig(u) \left(\id - 2sa(u,w)\tsig(u)\right)^{-1}\right) ds\right).
    \end{equation}
     From \citet[Chapter 9, Section 2, Exercise 20]{conway2019course} and the definition of $\tsig$,
     \begin{align}
         \Tr\left(\tsig(u) \left(\id - 2sa(u,w)\tsig(u)\right)^{-1}\right)
         &= \Tr\left(\left(\id - 2sa(u,w)\tsig(u)\right)^{-1} \tsig(u)\right) \\
         &= \Tr\left(\left(\id - b(u)\bo{K}^*\right)^{-1}\left(\id - 2sa(u,w)\tsig(u)\right)^{-1} \left(\id - b(u)\bo{K}\right)^{-1} \bo{\Sigma}\right) \\
         &= \frac{1}{a(u,w)} \Tr(\bo{\Psi}_s(u,w)\bo{\Sigma}).
     \end{align}
     Therefore, from Lemma \ref{lemma:det_exp_phi}, there exists $l(u,w) \in \mathbb Z$ such that 
     \begin{equation}
         \phi(u,w) - \int_0^1 \Tr\left(\bo{\Psi}_s(u,w)\bo{\Sigma}\right) ds = i\pi l(u,w).
     \end{equation}
     It follows from Lemma \ref{lemma:phi_det_C^0} that 
     \begin{equation}
         (u,w) \in \mc U \mapsto \phi(u,w)
     \end{equation}
     is continuous. The same reasoning than in the proof of the said lemma also proves that 
     \begin{equation}
         (u,w) \in \mc U \mapsto \int_0^1 \Tr\left(\bo{\Psi}_s(u,w)\bo{\Sigma}\right) ds
     \end{equation}
     is continuous. Therefore, $l$ is continuous over $\mc U$, and since it is integer valued, it is constant. Since
     \begin{equation}
         \phi(0,0) = \int_0^1 \Tr\left(\bo{\Psi}_s(0,0)\bo{\Sigma}\right) ds = 0,
     \end{equation}
     we get $l \equiv 0$.
\end{proof}
Before proving Lemma \ref{lemma:tilde_Phi_intro+pos_def}, we need the following lemma.
\begin{lemma} \label{lemma:Sigma_n-K_nK_n^top_positive}
    Let $ g_0 \in L^2([0,T], \mathbb{R}) $ and $ K $ be a Volterra kernel of continuous and bounded type in $L^2_{\mathbb R}$.
    Then 
    \begin{equation}
         \frac{T}{n}\Sigma_n - \nu^2 K_n K_n^\top \succeq 0,
    \end{equation}
    where the notation $A \succeq 0$ means that $A$ is positive semidefinite.
\end{lemma}

\begin{proof}
    For $0 \leq i \leq n$, set
    \begin{equation}
        X_i = X_{t_i}, \quad Y_i = W_{t_{i+1}} - W_{t_i}.
    \end{equation}
    We have, from Remark \ref{rmk:kappa=0}, that
    \begin{equation} \label{eq:X_i-g_0(t_i)}
        X_i - g_0(t_i) = \int_0^{t_i} K(t_i,s)\nu dW_s.
    \end{equation}
    Moreover, we have
    \begin{equation} \label{eq:Y_i}
        Y_i = \int_{t_i}^{t_{i+1}} dW_s.
    \end{equation}
    Set 
    \begin{equation} \label{eq:Z_n}
        Z_n := (X_0-g_0(t_0),\dots, X_{n-1}-g_0(t_{n-1}), Y_0, \dots, Y_{n-1})^{\top}.
    \end{equation}
    It directly follows from \eqref{eq:X_i-g_0(t_i)} and \eqref{eq:Y_i} that the covariance matrix of $Z_n$ is 
    \begin{align}
    \begin{pmatrix}
        \Sigma_n & \nu K_n \\
        \nu (K_n)^\top & \frac{T}{n} I_n
    \end{pmatrix}.
    \end{align}
    It follows from \citet[Theorem 1.12]{zhang2006schur} that
    \begin{equation}
         \frac{T}{n}\Sigma_n - \nu^2 K_n K_n^\top \succeq 0.
    \end{equation}
\end{proof}

\begin{proof}[Proof of Lemma \ref{lemma:tilde_Phi_intro+pos_def}]
    For the sake of readability, we omit any reference to $u$ and $w$.
    The fact that $\Re\left(\tilde{\Phi}_n\right)$ and $\Im\left(\tilde{\Phi}_n\right)$ are hermitian comes from the fact that for $A \in M_n(\mathbb C)$,
    \begin{equation}
        \Re(A) = \frac{1}{2}(A + A^*), \quad \Im(A) = \frac{1}{2i}(A - A^*).
    \end{equation}
    Concerning the positiveness of $\Re\left(\tilde{\Phi}_n\right)$, using that 
    \begin{align}
        \Phi_n 
        &= I_n - 2a\frac{T}{n}\tilde{\Sigma}_n \\
        &= I_n - 2a\frac{T}{n}\left(I_n - b K_n\right)^{-1}\Sigma_n \left(I_n - b K_n^\top\right)^{-1},
    \end{align}
    then 
    \begin{equation}
        \tilde{\Phi}_n = I_n - b \left(K_n + K_n^\top\right) + b^2 K_n K_n^\top -2a\frac{T}{n}\Sigma_n,
    \end{equation}
    and taking the real part yields
    \begin{equation}
        \Re\left(\tilde{\Phi}_n\right) = \left(I_n - \Re(b)K_n\right)\left(I_n - \Re(b)K_n^\top\right) - \Im(b)^2 K_n K_n^\top - 2\Re(a)\frac{T}{n}\Sigma_n.
    \end{equation}
    $\Re\left(\tilde{\Phi}_n\right)$ is decomposed in three parts, the first one being positive definite since $K_n$ is strictly triangular. We prove that the sum of the two last parts is positive. To do so, recall that (we supposed $\kappa=0$, recalling Remark \ref{rmk:kappa=0})
    \begin{equation}
        a = w + \frac{1}{2}(u^2 - u), \quad b = \rho \nu u,
    \end{equation}
    so that
    \begin{align}
        - \Im(b)^2 K_n K_n^\top - 2\Re(a)\frac{T}{n}\Sigma_n 
        &= \Im(u)^2 \left(\frac{T}{n}\Sigma_n - \rho^2 \nu^2 K_n K_n^\top\right) - 2\left(\Re(w) - \frac{1}{2}\Re(u)(1-\Re(u))\right) \Sigma_n \\
        & \geq \Im(u)^2 \left(\frac{T}{n}\Sigma_n - \nu^2 K_n K_n^\top\right) - 2\left(\Re(w) - \frac{1}{2}\Re(u)(1-\Re(u))\right) \Sigma_n,
    \end{align}
    where the inequality comes from the fact that $|\rho| \leq 1$. From Lemma \ref{lemma:Sigma_n-K_nK_n^top_positive} and the fact that $0 \leq \Re(u) \leq 1$ and $\Re(w) \leq 0$, we obtain that above is positive.
\end{proof}
Before proving Proposition \ref{prop:det_sqrt_formula}, we need the following lemma.
\begin{lemma} \label{lemma:sum_arg=im_phi}
    Let $ g_0 \in L^2([0,T], \mathbb{R}) $ and $ K $ be a Volterra kernel of continuous and bounded type in $L^2_{\mathbb R}$. Let $(u, w) \in \mc U$.
    Then 
    \begin{equation}
        \sum_{k=1}^n \arg(1 + i\lambda_k(u,w)) = -2\Im(\tilde{\phi}_n(u,w)).
    \end{equation}
\end{lemma}

\begin{proof}
    We have, as in the proof of Lemma \ref{lemma:two_integral_formulas_for_phi} (in finite dimension here), that 
    \begin{equation}
        \det\left(\Phi_n(u,w)\right) = e^{-2\tilde{\phi}_n(u,w)},
    \end{equation}
    so that $-2\Im(\tilde{\phi}_n(u,w))$ is an argument of $\det(\Phi_n(u,w))$. Also, from \eqref{eq:det_Phi_n_decomposed}, $\sum_{k=1}^n \arg(1 + i\lambda_k(u,w))$ is also an argument of $\det(\Phi_n(u,w))$. Therefore, there exists $l(u,w) \in \mathbb Z$ such that
    \begin{equation} \label{eq:sum_arg_vs_im_phi}
        \sum_{k=1}^n \arg(1 + i\lambda_k(u,w)) + 2\Im(\tilde{\phi}_n(u,w)) = 2i\pi l(u,w).
    \end{equation}
    The same reasoning than in the proof of Lemma \ref{lemma:phi_det_C^0} (in finite dimension again) shows that 
    \begin{equation}
        (u,w) \in \mc U \mapsto \sum_{k=1}^n \arg(1 + i\lambda_k(u,w)) + 2\Im(\tilde{\phi}_n(u,w))
    \end{equation}
    is continuous. Therefore, $l$ is continuous over $\mc U$, and since it is integer valued, it is constant. Since for $0 \leq s \leq 1$,
    \begin{equation}
        \Phi_n(0,0) = I_n, \quad \Psi_{n,s}(0,0) = 0_{M_n(\mathbb C)},
    \end{equation}
    then 
    \begin{equation}
        l(0,0) = 0,
    \end{equation}
    which concludes.
\end{proof}

\begin{proof}[Proof of Proposition \ref{prop:det_sqrt_formula}]
    For the sake of readability, we omit any reference to $u$ and $w$. We show that 
    \begin{equation} \label{eq:link_det_sqrt_&_sqrt_det}
        \sqrt{\det(\Phi_n)} = e^{i\pi \tilde{k}_n} \det\left(\sqrt{\Phi_n}\right).
    \end{equation}
    By \eqref{eq:det_Phi_n_decomposed}, and the fact that for $(a,z) \in \mathbb R^+ \times \mathbb C$, $\sqrt{az} = \sqrt{a} \sqrt{z}$, we have
    \begin{equation}
        \sqrt{\det\left(\Phi_n\right)} 
        = \sqrt{\det\left(\Re\left(\tilde{\Phi}_n\right)\right)} \sqrt{\prod_{k=1}^n \left(1 + i \lambda_k\right)}
    \end{equation}
    Since 
    \begin{equation}
        \sqrt{\prod_{k=1}^n \left(1 + i \lambda_k\right)} = e^{i \pi m_n} \prod_{k=1}^n \sqrt{1 + i \lambda_k},
    \end{equation}
    with $m_n := \frac{1}{2\pi}\left(\arg(\prod_{k=1}^n \left(1 + i \lambda_k\right)) - \sum_{k=1}^n \arg(1 + i\lambda_k)\right)$, we then have 
    \begin{equation}
        \sqrt{\det(\Phi_n)} = e^{i\pi m_n} \det\left(\sqrt{\Phi_n}\right).
    \end{equation}
    Combined with \eqref{eq:link_det_sqrt_&_sqrt_det}, we have that proving that 
    \begin{equation}
        \tilde{k}_n = m_n
    \end{equation}
    will conclude. By definition of $\tilde{k}_n$ and $m_n$, showing that
    \begin{equation}
        \arg\left(\det\left(\Phi_n\right)\right) = \arg(\prod_{k=1}^n \left(1 + i \lambda_k\right)), \quad -2\Im(\tilde{\phi}_n) = \sum_{k=1}^n \arg(1 + i\lambda_k),
    \end{equation}
    is sufficient. 
    The first equality directly follows from \eqref{eq:det_Phi_n_decomposed}, while the second one is ensured by Lemma \ref{lemma:sum_arg=im_phi}.
\end{proof}

\bibliographystyle{plainnat} % or any other style you prefer
\bibliography{Reference} % this assumes your .bib file is named 'Reference.bib'

@article{Kahl05,
  author = {Kahl, Christian and Jaeckel, Peter},
  year = {2005},
  month = jan,
  pages = {94-103},
  title = {Not-so-complex logarithms in the {H}eston model},
  journal = {Wilmott Magazine}
}

@article{schobel1999stochastic,
  title={Stochastic volatility with an {O}rnstein--{U}hlenbeck process: an extension},
  author={Sch{\"o}bel, Rainer and Zhu, Jianwei},
  journal={Review of Finance},
  volume={3},
  number={1},
  pages={23--46},
  year={1999},
  publisher={European Finance Association}
}

@article{albrecher2007little,
  title={The little {H}eston trap},
  author={Albrecher, Hansj{\"o}rg and Mayer, Philipp and Schoutens, Wim and Tistaert, Jurgen},
  journal={Wilmott},
  number={1},
  pages={83--92},
  year={2007}
}

@article{lord2006rotation,
  title={Why the rotation count algorithm works},
  author={Lord, Roger and Kahl, Christian},
  year={2006},
  publisher={Tinbergen Institute Discussion Paper}
}

@article{stein1991stock,
  title={Stock price distributions with stochastic volatility: an analytic approach},
  author={Stein, Elias M and Stein, Jeremy C},
  journal={The review of financial studies},
  volume={4},
  number={4},
  pages={727--752},
  year={1991},
  publisher={Oxford University Press}
}

@article{Kahl08,
author = {Lord, Roger and Kahl, Christian},
year = {2008},
month = {01},
pages = {},
title = {Complex Logarithms in {H}eston-Like Models},
journal = {SSRN Electronic Journal},
doi = {10.2139/ssrn.1105998}
}

@article{mayerhofer2019reforming,
  title={Reforming the {W}ishart characteristic function},
  author={Mayerhofer, Eberhard},
  journal={arXiv preprint arXiv:1901.09347},
  year={2019}
}

@article{brislawn1988kernels,
  title={Kernels of trace class operators},
  author={Brislawn, Chris},
  journal={Proceedings of the American Mathematical Society},
  volume={104},
  number={4},
  pages={1181--1190},
  year={1988}
}

@article{abi2022characteristic,
  title={The characteristic function of {G}aussian stochastic volatility models: an analytic expression},
  author={Abi Jaber, Eduardo},
  journal={Finance and Stochastics},
  volume={26},
  number={4},
  pages={733--769},
  year={2022},
  publisher={Springer}
}

@article{Fredholm1903,
author = {Ivar Fredholm},
title = {{Sur une classe d’équations fonctionnelles}},
volume = {27},
journal = {Acta Mathematica},
number = {none},
publisher = {Institut Mittag-Leffler},
pages = {365 -- 390},
year = {1903},
doi = {10.1007/BF02421317},
URL = {https://doi.org/10.1007/BF02421317}
}

@article{abi2022laplace,
  title={The {L}aplace transform of the integrated {V}olterra {W}ishart process},
  author={Abi Jaber, Eduardo},
  journal={Mathematical Finance},
  volume={32},
  number={1},
  pages={309--348},
  year={2022},
  publisher={Wiley Online Library}
}

@book{gohberg69,
   author = {I. C. Gohberg and M. G. Krein},
   title = {Introduction to the Theory of Linear Nonselfadjoint Operators},
   year = {1969},
}

@article{heston1993closed,
  title={A closed-form solution for options with stochastic volatility with applications to bond and currency options},
  author={Heston, Steven L},
  journal={The review of financial studies},
  volume={6},
  number={2},
  pages={327--343},
  year={1993},
  publisher={Oxford University Press}
}

@incollection{corcuera2013short,
  title={A short rate model using ambit processes},
  author={Corcuera, Jos{\'e} Manuel and Farkas, Gergely and Schoutens, Wim and Valkeila, Esko},
  booktitle={Malliavin calculus and stochastic analysis: A festschrift in honor of David Nualart},
  pages={525--553},
  year={2013},
  publisher={Springer}
}

@misc{jaber2025volterrasteinsteinmodelstochastic,
      title={The {V}olterra {S}tein-{S}tein model with stochastic interest rates}, 
      author={Abi Jaber, Eduardo and  Hainaut, Donatien and Motte, Edouard},
      year={2025},
      eprint={2503.01716},
      archivePrefix={arXiv},
      primaryClass={q-fin.MF},
      url={https://arxiv.org/abs/2503.01716}, 
}

@article{Lewis2002,
author = {Lewis, Alan},
year = {2002},
month = {05},
pages = {},
title = {A Simple Option Formula for General Jump-Diffusion and Other Exponential {L}evy Processes},
journal = {SSRN Electronic Journal},
doi = {10.2139/ssrn.282110}
}

@article{huang2023sample,
  title={On the sample complexity of {L}ipschitz constant estimation},
  author={Huang, Julien Walden and Roberts, Stephen J and Calliess, Jan-Peter},
  journal={Transactions on Machine Learning Research},
  year={2023}
}

@article{karhunen1947under,
  title={Under lineare methoden in der wahr scheinlichkeitsrechnung},
  author={Karhunen, Kari},
  journal={Annales Academiae Scientiarun Fennicae Series A1: Mathematia Physica},
  volume={47},
  year={1947}
}

@book{loeve1978probability,
  title={Probability Theory II},
  author={Loève, M.},
  isbn={9780387902623},
  series={F.W.Gehring P.r.Halmos and C.c.Moore},
  url={https://books.google.fr/books?id=1y229yBbULIC},
  year={1978},
  publisher={Springer}
}

@book{gohberg2012traces,
  title={Traces and determinants of linear operators},
  author={Gohberg, Israel and Goldberg, Seymour and Krupnik, Nahum},
  volume={116},
  year={2012},
  publisher={Birkh{\"a}user}
}

@article{simon1977notes,
  title={Notes on infinite determinants of {H}ilbert space operators},
  author={Simon, Barry},
  journal={Advances in Mathematics},
  volume={24},
  number={3},
  pages={244--273},
  year={1977},
  publisher={Elsevier}
}

@book{simon2005trace,
  title={Trace ideals and their applications},
  author={Simon, Barry},
  number={120},
  year={2005},
  publisher={American Mathematical Soc.}
}

@book{smithies1958integral,
  title={Integral equations},
  author={Smithies, Frank},
  number={49},
  year={1958},
  publisher={CUP Archive}
}

@article{bornemann2010numerical,
  title={On the numerical evaluation of {F}redholm determinants},
  author={Bornemann, Folkmar},
  journal={Mathematics of Computation},
  volume={79},
  number={270},
  pages={871--915},
  year={2010}
}

@book{gatheral2011volatility,
  title={The volatility surface: A Practitioner's Guide},
  author={Gatheral, J},
  year={2011},
  publisher={John Wiley and Sons, Inc}
}

@book{conway2019course,
  title={A course in functional analysis},
  author={Conway, John B},
  volume={96},
  year={2019},
  publisher={Springer}
}

@book{zhang2006schur,
  title={The Schur complement and its applications},
  author={Zhang, Fuzhen},
  volume={4},
  year={2006},
  publisher={Springer Science \& Business Media}
}

@article{andersen2000jump,
  title={Jump-diffusion processes: Volatility smile fitting and numerical methods for option pricing},
  author={Andersen, Leif and Andreasen, Jesper},
  journal={Review of derivatives research},
  volume={4},
  number={3},
  pages={231--262},
  year={2000},
  publisher={Springer}
}

@article{carr1999option,
  title={Option valuation using the fast {F}ourier transform},
  author={Carr, Peter and Madan, Dilip},
  journal={Journal of computational finance},
  volume={2},
  number={4},
  pages={61--73},
  year={1999}
}

@article{eberlein2010analysis,
  title={Analysis of {F}ourier transform valuation formulas and applications},
  author={Eberlein, Ernst and Glau, Kathrin and Papapantoleon, Antonis},
  journal={Applied Mathematical Finance},
  volume={17},
  number={3},
  pages={211--240},
  year={2010},
  publisher={Taylor \& Francis}
}

@article{fang2009novel,
  title={A novel pricing method for European options based on {F}ourier-cosine series expansions},
  author={Fang, Fang and Oosterlee, Cornelis W},
  journal={SIAM Journal on Scientific Computing},
  volume={31},
  number={2},
  pages={826--848},
  year={2009},
  publisher={SIAM}
}

@book{lipton2001mathematical,
  title={Mathematical methods for foreign exchange: A financial engineer's approach},
  author={Lipton, Alexander},
  year={2001},
  publisher={World Scientific Publishing Company}
}

@article{grasselli2005wiener,
  title={Wiener chaos and the Cox--Ingersoll--Ross model},
  author={Grasselli, MR and Hurd, TR},
  journal={Proceedings of the Royal Society A: Mathematical, Physical and Engineering Sciences},
  volume={461},
  number={2054},
  pages={459--479},
  year={2005},
  publisher={The Royal Society}
}

@article{privault2015fredholm,
 ISSN = {10275487, 22246851},
 URL = {http://www.jstor.org/stable/taiwjmath.19.5.1541},
 abstract = {Abstract We derive closed form expressions for the Laplace transform of certain quadratic Brownian functionals based on the Ornstein-Uhlenbeck process, using both Fredholm determinants and PDE arguments. Classical and new bond pricing formulas in quadratic Brownian models are obtained as particular cases. 2010 Mathematics Subject Classification: 45B05, 60J60, 60J70, 65F40, 91B25. Key words and phrases: Ornstein-Uhlenbeck process, Quadratic Brownian functionals, Fredholm expansions and equations, Volterra operators, Cox-Ingersoll-Ross model, Bond pricing.},
 author = {Nicolas Privault and Hailing Wu},
 journal = {Taiwanese Journal of Mathematics},
 number = {5},
 pages = {1541--1559},
 publisher = {Mathematical Society of the Republic of China},
 title = {COMPUTATION OF FREDHOLM DETERMINANTS FOR QUADRATIC ORNSTEIN-UHLENBECK FUNCTIONALS},
 urldate = {2025-10-30},
 volume = {19},
 year = {2015}
}

\end{document}